\def\p{\partial}
\newlength\rringshift \setlength{\rringshift}{3pt}
\newcommand*\doublemathring[1]{%
\setbox0=\hbox{$#1$}%
\dimen0=\wd0
\advance\dimen0 - \rringshift
\wd0=\dimen0
\mathring{\copy0}%
\kern-\wd0
\kern + \rringshift
\mathring{\phantom{\copy0}}%
\kern-0.0pt
}
\newcommand{\lessthansimilarto}{\lower3pt\hbox{$\buildrel{<}\over{\sim}$}}
\newcommand{\greaterthansimilarto}{\lower3pt\hbox{$\buildrel{>}\over{\sim}$}}
\newcommand{\RR}{\hbox{$I$\kern-3.8pt $R$}}
\newcommand{\norm}[1]{\left|\left|{#1}\right|\right|}
\newcommand{\pb}[2]{\left[{#1},{#2}\right]}
\newcommand{\varderiv}[2]{\frac{\delta {#1}}{\delta {#2}}}
\newcommand{\define}[1]{\emph{#1}}
\newtheoremstyle{jThm}% jThm 
  {9pt}%      Space above, empty = `usual value'
  {9pt}%      Space below
  {\itshape}% Body font
  {}%         Indent amount (empty = no indent, \parindent = para indent)
  {\bfseries}% Thm head font
  {:}%        Punctuation after thm head
  {.5em}% Space after thm head: \newline = linebreak
  {}%         Thm head spec
\theoremstyle{jThm}
\newtheorem{thm}{Theorem}[section]
\theoremstyle{jThm}
\theoremstyle{jThm}
\newtheorem{cor}[thm]{Corollary}
\begin{document}

%%%%%%%%%%%%%%%%%%%%%%%%%%%%%%
%% Title and Abstract                                                               %%
%%%%%%%%%%%%%%%%%%%%%%%%%%%%%%
\title{Gauge Fixing and Constrained Dynamics}
\author{Jon Allen}
%\email{jon@slurpee.org}
\author{Richard A. Matzner}
%\email{matzner2@physics.utexas.edu}
\affiliation{Theory Group, University of Texas at Austin, Austin, Texas, 78712}
\date{\today}
%

% includes abstract definition
%%%
%%% Abstract
%%%
\begin{abstract}
We review the Dirac formalism for dealing with constraints in a canonical Hamiltonian formulation and discuss gauge freedom and display constraints for gauge theories in a general context. We introduce the Dirac bracket and show that it provides a consistent method to remove any gauge freedom present. We discuss stability in evolution of gauge theories and show that fixing all gauge freedom is sufficient to ensure well-posedness for a large class of gauge theories. 
Electrodynamics provides examples of the methods outlined for general gauge theories. Future work will apply the formalism, and results derived here, to General Relativity.
\end{abstract}

%%%%%%%%%%%%%%%%%%%%%%%%%%%%%%
%% Header                                                                                %%
%%%%%%%%%%%%%%%%%%%%%%%%%%%%%%

\maketitle
\tableofcontents

%%%%%%%%%%%%%%%%%%%%%%%%%%%%%%
%% Introduction                                                                          %%
%%%%%%%%%%%%%%%%%%%%%%%%%%%%%%
% includes section definition
%%%%%%%%%%%%%%%%%%%%%%%%
%%%   INTRODUCTION   %%%
%%%%%%%%%%%%%%%%%%%%%%%%
\section{Introduction}
\label{sec:introduction}
In the mid twentieth century,  Dirac \cite{dirac1}, Arnowitt, Deser, and Misner \cite{adm0}, and many others, formulated General Relativity (``G.R.") as a dynamical system. Computational evolution of the dynamical equations was almost immediately attempted \cite{hahn_lindquist}. However, though a dynamical formulation of G.R. had been constructed, once even modest computational resources were available, it became clear that multi-dimensional computation (e.g. $(2+1)d$) in strong gravitational fields was consistently hampered by instabilities -- exponential growth in errors. Eventually a number of formulations were constructed within the last two decades that exhibit enough long term stability to evolve binary black hole systems. The goals of this work are to understand how instabilities arise, how the stable computational formulations are related, and how new stable formulations can be constructed in general, using the framework of constrained Hamiltonian dynamics and gauge theories. 

With recent observations of gravitational waves \cite{grligo}, and with particular examples of stable evolution equations \cite{bssn2},\cite{nor}, a generalized form for assessing, and generating, long-term computational solutions becomes important as more complex systems are considered which incorporate additional fields, initial configurations, physical extremes, and different background geometries.

Stability, as considered here, is defined with respect to the growth of deviations between solutions as the system evolves. Although this work focuses exclusively on physical theories in the classical regime, many of the techniques developed for quantization readily lend themselves to similar study of stability.

This work begins with a review of the Hamiltonian formulation for constrained systems, as originally introduced by Dirac.  Well-posedness, initial data,  stability, and error in  evolution are reviewed within the Hamiltonian framework. Within this framework, we introduce a novel general procedure for constructing formulations of a broad class of physical theories, including G.R. We give a new proof for sufficient criteria to ensure stability and well-posedness, involving not just the evolution but also the initial data. Using this framework and geometrical motivations, a novel method is then introduced for the removal of numerical error from computations for a broad class of physical theories including gauge theories. Finally, electrodynamics is examined within the framework developed to provide a concrete example in a familiar, simple, setting. 

In concluding remarks we give retails relating to G.R. in particular, and the relationship between quantization and numerical stability, though a more thorough treatment left for subsequent work.

This work attempts a logically consistent development, and though a certain amount of formalism is inevitable, our approach is essentially that of a heuristic physicist, with no attempt at rigorous proof.

Section~(\ref{sec:GaugeHamiltonian}) below presents a self-contained review of gauge theories discussed in the framework of constraint Hamiltonian systems. Section~(\ref{sec:stability}) covers topics dealing with stability and numerical error for Hamiltonian formulations of gauge theories. Section~(\ref{subsec:examples}) applies the concepts previously discussed for general gauge theories to Electrodynamics. As a cautionary tale, we find that even this extremely well-studied linear theory has some subtleties in its Dirac formulation.

\subsection{Notation}
\label{sec:introduction:notation}
The following notation will be used throughout, unless otherwise specifically noted. 

Einstein's summation convention, $V_\alpha W^\alpha \equiv \sum_{\alpha} V_{\alpha}W^{\alpha}$, will apply to all indexed terms. 

When working with a spatial manifold, the components of a vector in a three dimensions will be labeled by lowercase Roman indexes from the middle of the alphabet, $i,j,k,l,m = \left\{1,\dots,3\right\}$, with each label running over the three spatial coordinates. The components of vectors in a full four dimensional spacetime will be labeled by the Greek indexes from the beginning of the alphabet, $\alpha,\beta,\gamma,\delta,\epsilon = \left\{0,\dots,3\right\}$, with each label running over all four space-time coordinates. We often use a comma to indicate a partial derivative: $q,_\alpha \equiv \partial q / \partial x^\alpha $ for any labeled variable $x^\alpha $ on which $q$ depends. 

When working with bundles, lowercase Roman indexes from the beginning of the alphabet will run over each bundle coordinate pair, $a,b,c = \left\{1,\dots,N\right\}$, where $N$ is the dimension of the base space, thereby labeling the bundle coordinates. These labels will be used for various bundles throughout, with the bundle space being clear from context.

Although  the $2N$ dimensional phase space is itself a cotangent bundle, the computational coordinates on phase space will be distinguished from general bundle coordinates by labeling all bundle coordinates individually using uppercase Roman indexes from the middle of the alphabet, $I,J,K,L,M = \left\{1,\dots,2N\right\}$,  running over the $N$ dimensions of configuration space and $N$ dimensions of momentum space. In abstract notation, contravariant vectors will be denoted in bold, $\mathbf{X} = X^a \otimes \mathbf{e}_{a}$, and covariant vectors will be denoted in bold with a tilde, $\tilde{\mathbf{X}} = X_a \otimes \sigma^{a}$, for basis vectors $\mathbf{e}_{a}$ and dual basis $1-$forms $\sigma^{a}$. To distinguish phase space coordinates labeling configuration space components from momenta, configuration space vectors will be designated by the Roman character $q$ while momenta will be designated by the Roman character $p$.

In the space of constraints, introduced in subsection~(\ref{subsec:constraints}), vector components will be labeled by uppercase Roman indexes from the beginning of the alphabet, $A,B,C,D = \left\{1,\dots,M\right\}$, running over all $M$ constraints. When discussing multiple constraint spaces, no confusion should arise as each constraint space being labeled should always be clear from context.

%%%%%%%%%%%%%%%%%%%%%%%%%%%%%%
%% Constraint Theory and Hamiltonian Systems               %%
%%%%%%%%%%%%%%%%%%%%%%%%%%%%%%
% 
%%%
%%% Introduction
%%%
\section{Review of Gauge Theories and Hamiltonian Systems}
\label{sec:GaugeHamiltonian}
We begin by providing a relatively self-contained introduction to the Hamiltonian formulation of gauge theories in general. The reader is assumed to have a basic familiarity with the Lagrangian and Hamiltonian formulations along with the methods of variational calculus. Familiarity with exterior differential calculus will be occasionally assumed but is not necessary in order for the casual reader to follow along. For a more thorough treatment of the topics reviewed in this section, see \cite{frankel1},\cite{goldstein},\cite{nakahara},\cite{morrison1},\cite{rovelli1}, and \cite{jorge1}. For a much more thorough treatment of general Hamiltonian formulations with gauge freedom, see \cite{teitel1}. To simplify the presentation throughout this section, unless otherwise noted,  all theories will be expressed as finite dimensional systems and all boundary contributions will be neglected. This section is meant to serve only as a basic introduction to the material, leaving until later sections a more thorough review of advanced topics, such as those necessary for the treatment of G.R.

Beginning with a review of Lagrangian dynamics for systems yielding a bijective Legendre transform, we derive the relation between the Lagrangian and Hamiltonian formulations, and  examine the canonical Hamiltonian formulation. Then we address the more difficult case of Lagrangian formulations which generate a singular Legendre transform, leading to a discussion of constrained Hamiltonian formulations. We define and review first class and second class constraints in such systems, as well as  gauge freedom and physical observables in Hamiltonian formulations. 

We show that second class constraints can be imposed which uniquely fix a choice of gauge, thereby removing all gauge freedom present in the system, without affecting the physical observables of the system. Finally, we review the Dirac bracket and discuss gauge fixed Hamiltonian dynamics.
%
%%%
%%% Review
%%%
\subsection{Lagrangian and Hamiltonian Dynamics}
\label{subsec:lagHamDyn}
For a given physical system modeled by a set of variables and their derivatives, $\left\{~\mathbf{q},\mathbf{q}_{,\alpha},\mathbf{q}_{,\alpha\beta},\dots~\right\}$, the \define{action}, $S\left[\mathbf{q}\right]$, is defined to be the \define{functional}, 
\begin{align}
\label{subsec:prelim:action}
S\left[\mathbf{q}\right] \equiv \int dt ~L\left[\mathbf{q},\dot{\mathbf{q}}\right]
\end{align}
where the overdot means total derivarive w.r.t. the parameter, here $t$. When extremized, equation~(\ref{subsec:prelim:action}) yields the equations of motion for each of the variables $\mathbf{q}$.  The space of variables is called the \define{configuration space}, $M$, and the velocities, $\dot{\mathbf{q}}$, at the location $\mathbf{q} \in M$ reside in the \define{tangent space} of $M$, $T_q M$. The differentiable space of all velocities at all points over $M$ is known as the \define{tangent bundle}, $TM = \left\{T_q M | q \in M\right\}$, and has coordinates $\left(\mathbf{q},\dot{\mathbf{q}}\right) \in TM$. The functional in the integrand of equation~(\ref{subsec:prelim:action}) , $L\left[\mathbf{q},\dot{\mathbf{q}}\right]$, is called the \define{Lagrangian} and is a real valued functional of the tangent bundle coordinates, $L: TM\rightarrow \mathbb{R}$. The value returned by the Lagrangian is a scalar and therefore will be independent of the chosen coordinate system on the tangent bundle. Extremization of the action yields 
\begin{align}
\label{subsec:prelim:actionExtremum}
\delta S\left[\mathbf{q},\delta\mathbf{q}\right] = \int dt~ \left\{\left[\frac{\p L}{\p q^a} - \frac{d}{dt} \frac{\p L}{\p \dot{q}^a}\right]\delta q^a +\frac{d}{dt}\left[\frac{\p L}{\p \dot{q}^a}\delta q^a\right]\right\}= 0
\end{align}
Unless otherwise noted, assume that the variation, $\delta\mathbf{q}$, at the boundary takes the form
\begin{align}
\label{subsec:prelim:actionBoundary}
\frac{\p L}{\p \dot{q}^a}\delta q^a = C
\end{align}
for some constant $C \in \mathbb{R}$ so that the last term of equation~(\ref{subsec:prelim:actionExtremum}), being a total derivative, vanishes. The resulting extremized path yields the \define{Euler-Lagrange equations} 
\begin{align}
\label{subsec:prelim:euler}
\frac{\p L}{\p q^a} - \frac{d}{dt} \frac{\p L}{\p \dot{q}^a} = 0
\end{align}
The Euler-Lagrange equations form a second order differential system which govern the evolution of the physical system modeled by the configuration space variables, $\mathbf{q}$. For field theories, the Lagrangian, $L$, is integrated over all of spacetime to yield the action, $S$, so the total derivative in equation~(\ref{subsec:prelim:actionExtremum}) for the finite dimensional system becomes an integral over the spacetime boundary in the continuum.

The configuration space manifold, $M$, and tangent bundle, $TM$, are manifolds with definitions which are independent of the Lagrangian, $L$, or coordinate system, $\left(\mathbf{q},\dot{\mathbf{q}}\right)$. Because these spaces are defined without respect to the dynamics, two physical systems, modeled by two different Lagrangians, $L$ and $L^\prime$, can be defined on the same configuration space, $M$, and tangent bundle, $TM$. Since these manifolds, $M$ and $TM$, are independent of the dynamics, there is no natural way to define an intrinsic meaning for the velocities at a given location, $\dot{\mathbf{q}} \in T_qM$, and therefore no natural way to compare these values for two distinct locations, $\mathbf{q},\mathbf{q}^\prime \in M$ with $\mathbf{q}\neq \mathbf{q}^\prime$. Formally then, velocities have no intrinsic meaning because there is no canonical inner product structure on the tangent bundle, a necessary requirement in order to be able compare elements of two distinct tangent spaces, $T_qM$ and $T_{q^\prime}M$, in a coordinate independent manner. Although there is no natural way to compare elements of $TM$ in general, when a particular Lagrangian, $L$, is considered, that Lagrangian itself can be used to define a map from the tangent bundle, $TM$, to the dual space, $T^\star M$, of all $1-$forms over $M$, known as the \define{cotangent bundle}. The map from the tangent bundle, $TM$, to the cotangent bundle, $T^\star M$, is the \define{Legendre transform}, defined as
\begin{align}
\label{subsec:prelim:momenta}
\tilde{\mathbf{p}}\left(\mathbf{q},\dot{\mathbf{q}}\right) \equiv \varderiv{L\left[\mathbf{q},\dot{\mathbf{q}}\right]}{\dot{\mathbf{q}}}
\end{align}
taking the tangent bundle coordinates, $\left(\mathbf{q},\dot{\mathbf{q}}\right)$, into coordinates on the cotangent bundle, $\left(\mathbf{q},\tilde{\mathbf{p}}\right)$, with $\tilde{\mathbf{p}}$ being $1-$forms which are dual to the velocities, $\dot{\mathbf{q}}$. In mechanics, the coordinates of $T^\star M$ defined by equation~(\ref{subsec:prelim:momenta}), $\tilde{\mathbf{p}}$, are known as \define{canonical momenta}, and the collection of all  cotangent bundle coordinates, $\left(\mathbf{q},\tilde{\mathbf{p}}\right) \in T^\star M$, defines the \define{phase space}. For an $N$ dimensional configuration manifold, the phase space will be a $2N$ dimensional manifold with coordinates, $\left\{q^a,p_a\right\}$ for $a \in 1\dots N$, defined by the $N$ \define{conjugate pairs}. Canonical momenta, $p_a$, are often referred to as \define{conjugate momenta} with respect to the position variable, $q^a$, with which it forms the conjugate pair, $\left(q^a,p_a\right)$. Each conjugate pair present defines a single \define{degree of freedom} for the physical system, so that a $2N$ dimensional phase space has $N$ degrees of freedom. 

Once elements of the tangent bundle, $TM$, can be identified with elements of the cotangent bundle, $T^\star M$, an inner product can be constructed over $M$ by defining the norm, $\norm{\cdot}$, as
\begin{align}
\label{subsec:prelim:norm}
\norm{\dot{\mathbf{q}}} \equiv \varderiv{L\left[\mathbf{q},\dot{\mathbf{q}}\right]}{\dot{\mathbf{q}}} \cdot \dot{\mathbf{q}} = p_a\dot{q}^a \equiv \tilde{\mathbf{p}}\cdot \dot{\mathbf{q}}
\end{align}
which sends elements of the tangent bundle to scalar values, $TM \rightarrow \mathbb{R}$. Because the Lagrangian, $L$, is invariant under changes of the configuration space coordinates, and subsequent changes in the tangent bundle coordinates, the norm, equation~(\ref{subsec:prelim:norm}), will also be invariant under coordinate changes. This coordinate invariant value is well defined, for a given Lagrangian, and can therefore be used to make meaningful comparisons of velocities, $\dot{\mathbf{q}}$, in a coordinate independent way. In mechanics, the norm, $\norm{\dot{\mathbf{q}}}$, is equal to twice the \define{kinetic energy}, $T$
\begin{align}
\label{subsec:prelim:kineticTerm}
T \equiv \frac{1}{2}\norm{\dot{\mathbf{q}}} = \frac{1}{2}\tilde{\mathbf{p}}\left(\mathbf{q},\dot{\mathbf{q}}\right) \cdot \dot{\mathbf{q}} 
\end{align}
which should be a familiar physical quantity, invariant under changes of the configuration space coordinates.  Assume for the remainder of this subsection that the Legendre transform, equation~(\ref{subsec:prelim:momenta}), is a \define{bijection}, mapping unique elements of $TM$ to unique elements of $T^\star M$. (The case in which the Legendre transformation is not a bijection will be examined in subsection~(\ref{subsec:gauge}).) When the Legendre transform is a bijection, an inverse map exists allowing unique elements of the tangent bundle, $\left(\mathbf{q},\dot{\mathbf{q}}\right) \in TM$, to be written as unique expressions of the cotangent bundle coordinates, $\left(\mathbf{q},\tilde{\mathbf{p}}\right) \in T^\star M$. Expressing all velocities, $\dot{\mathbf{q}}$, as functions of the phase space coordinates, $\left(\mathbf{q},\tilde{\mathbf{p}}\right)$, allows the dynamics to be expressed entirely in phase space. 

Define the \define{canonical Hamiltonian} as
\begin{align}
\label{subsec:prelim:hamiltonian}
H \equiv \tilde{\mathbf{p}}\cdot \dot{\mathbf{q}} - L\left[\mathbf{q},\dot{\mathbf{q}}\right]
\end{align}
Treating the coordinate components of $\mathbf{q}$, $\dot{\mathbf{q}}$, and $\tilde{\mathbf{p}}$ independently, the total variation of $H$ yields
\begin{align}
\label{subsec:prelim:totalVarHam}
\delta H = \dot{q}^a\delta p_a - \frac{\delta L}{\delta q^a}\delta q^a + \left(p_a - \frac{\delta L}{\delta \dot{q}^{a}}\right)\delta \dot{q}^{a}
\end{align}
Using the definition of the momenta, equation~(\ref{subsec:prelim:momenta}), the coefficient of $\delta \dot{q}^{a}$ vanishes identically, showing that the canonical Hamiltonian, $H$, is independent of the velocities $\dot{\mathbf{q}}$. Using the canonical Hamiltonian, equation~(\ref{subsec:prelim:hamiltonian}), define the \define{canonical action} as the functional of the phase space variables, $\left(\mathbf{q},\tilde{\mathbf{p}}\right)$, given by
\begin{align}
\label{subsec:prelim:canonicalAction}
S\left[\mathbf{q},\tilde{\mathbf{p}}\right] \equiv 
\int \left(p_a dq^a - H\left[\mathbf{q},\tilde{\mathbf{p}}\right]dt\right) = \int  dt ~ L\left[\mathbf{q},\dot{\mathbf{q}}\left(\mathbf{q},\tilde{\mathbf{p}}\right)\right]
\end{align}
Extremizing the canonical action yields 
\begin{align}
\label{subsec:prelim:canonicalExtremum}
\delta S\left[\mathbf{q},\tilde{\mathbf{p}},\delta \mathbf{q},\delta \tilde{\mathbf{p}} \right] \equiv 
\int dt~\left\{\left[\dot{q}^a - \frac{\delta H}{\delta p_a}\right]\delta p_a - \left[\dot{p}_a + \frac{\delta H}{\delta q^a}\right]\delta q^a + \frac{d}{dt}\left[p_a\delta q^a\right]\right\} = 0
\end{align}
Using the definition of the momenta, equation~(\ref{subsec:prelim:momenta}), and the boundary condition placed on the variation $\delta\mathbf{q}$, equation~(\ref{subsec:prelim:actionBoundary}), the last term in equation~(\ref{subsec:prelim:canonicalExtremum}) vanishes. The resulting extremized path in phase space yields \define{Hamilton's equations}
\begin{align}
\label{subsec:prelim:momentaEOM}
\frac{d\tilde{\mathbf{p}}}{dt} &= - \frac{\delta H\left[\mathbf{q},\tilde{\mathbf{p}}\right]}{\delta \mathbf{q}}\\
\label{subsec:prelim:positionEOM}
\frac{d\mathbf{q}}{dt} &= \frac{\delta H\left[\mathbf{q},\tilde{\mathbf{p}}\right]}{\delta \tilde{\mathbf{p}}}
\end{align}
Note that no restriction on the variation of the momenta, $\delta \tilde{\mathbf{p}}$, at the boundary is necessary to extremize the canonical action. Using the Legendre transform, equation~(\ref{subsec:prelim:momenta}), along with canonical Hamiltonian, equation~(\ref{subsec:prelim:hamiltonian}),  and the canonical action, equation~(\ref{subsec:prelim:canonicalAction}), to express the extremized path in phase space coordinates, $\left(\mathbf{q},\tilde{\mathbf{p}}\right)$, as an extremized path in the tangent bundle coordinates, $\left(\mathbf{q},\dot{\mathbf{q}}\right)$, shows that the Lagrangian and Hamiltonian formulations are equivalent. Numerically, it is often more convenient to evolve Hamilton's equations, which form a first order differential system of $2N$ equations, than the $N$ second order differential system given by the Euler-Lagrange equations. 

Treating the $2N$ coordinates of the phase space independently allows an exterior calculus to be introduced on the cotangent bundle. In the space of  $1-$forms with coefficients taking values in phase space, define the \define{Poincar\'e $1-$form} as
\begin{align}
\label{subsec:prelim:poincareOneForm}
\tilde{\lambda} \equiv p_a dq^a
\end{align} 
Treating coordinate time, $t$, as a configuration space variable, the \define{Hamiltonian $1$-form} is defined as
\begin{align}
\label{subsec:prelim:hamiltonianOneForm}
\tilde{\Lambda} \equiv \tilde{\lambda} - H dt
\end{align} 
allowing the canonical action to be written as
\begin{align}
\label{subsec:prelim:actionOneForm}
S\left[\mathbf{q},\tilde{\mathbf{p}}\right] = \int dt~\tilde{\Lambda}
\end{align} 
In the space of $2-$forms with coefficients taking values in phase space, the \define{Poincar\'e $2-$form} is defined as the exterior derivative, in phase space, of the Poincar\'e $1-$form 
\begin{align}
\label{subsec:prelim:poincareTwoFormNonCompact}
\omega^2 \equiv d\tilde{\lambda} = dp_a \wedge dq^a
\end{align}
(Note that equations~(\ref{subsec:prelim:momentaEOM})~and~(\ref{subsec:prelim:positionEOM}) imply that $H$ is a function only of $t$ so $dH \wedge dt=0$.)   The Poincar\'e $2-$form, $\omega^2$, defines a \define{symplectic} structure in the phase space, $T^\star M$. A compact notation frequently used when dealing with Hamiltonian systems is given by writing the $2N$ phase space coordinates, $\left(\mathbf{q},\tilde{\mathbf{p}}\right)$, as
\begin{align}
\label{subsec:prelim:hamCondensedNotation}
z^{1} & \equiv q^{1}, \dots, z^{N} \equiv q^{N}\\\nonumber
z^{N+1} & \equiv p_{1}, \dots, z^{2N} \equiv p_{N}
\end{align}
The elements of the phase space coordinates, $\mathbf{z}$, will be denoted as $z^{K}$, with index, $K$, which runs over the $2N$ dimensions of $T^\star M$. In the compact notation, the Poincar\'e $2-$form of  equation~(\ref{subsec:prelim:poincareTwoFormNonCompact}) becomes
\begin{align}
\label{subsec:prelim:poincareTwoForm}
\omega^2 = J_{KL}~dz^{K}\wedge dz^{L}
\end{align}
In canonical phase space coordinates, $J_{KL}$, defines the \define{canonical form} given by
\begin{equation}
\label{subsec:prelim:canonicalForm}
\mathbf{J} \equiv J_{KL} = \frac{1}{2}\left(J_{KL} - J_{KL}\right) = \left[\begin{array}{cc}
\mathbf{0} & -\mathbb{I} \\
\\
\mathbb{I} & \mathbf{0} \\
\end{array}\right]
\end{equation}
with $\mathbb{I}$ being the $N\times N$ identity matrix. Transformations of the phase space coordinates which preserve the canonical form are known as \define{canonical transformations}. Any two canonical transformations can be combined to yield a third, and each canonical transformation is invertible, whence canonical transformations form a group.

Dual to the space of $1-$forms, resides the space of vectors with coefficients which take values in the phase space. Define a basis for this vector space, dual to the basis $1-$forms $d\mathbf{z}$, as the vectors $\p_a$ satisfying
\begin{align}
\label{subsec:prelim:vectBasis}
\p_K\left(dz^{L}\right) = dz^{L}\left(\p_K\right) = \delta^{L}_{K}
\end{align}
In general, the tangent bundle over $T^\star M$ will be the vector space defined as
\begin{align}
\label{subsec:prelim:phaseVectSpace}
\mathcal{V} \equiv \left\{\mathbf{X} = X^{K}~\p_{K}~ \vert~ X^{K} \in T^\star M\right\}
\end{align}
with the cotangent bundle over $T^\star M$ given by the vector space dual to $\mathcal{V}$, defined as
\begin{align}
\label{subsec:prelim:phaseDualVectSpace}
\mathcal{V}^{\star} \equiv \left\{\tilde{\mathbf{W}} = W_{K}~dz^{K} ~\vert~ W_{K} \in T^\star M\right\}
\end{align}
such that the basis vectors for $\mathcal{V}$ and dual basis for $\mathcal{V}^{\star}$ satisfy equation~(\ref{subsec:prelim:vectBasis}).

For any function, $G$, of the phase space coordinates which is differentiable at least once, the symplectic form, equation~(\ref{subsec:prelim:poincareTwoForm}), defines a vector field, $\mathbf{V}_{G} \in \mathcal{V}$, dual to the $1-$form, $dG \in \mathcal{V}^{\star}$, which satisfies
\begin{align}
\label{subsec:prelim:vecField}
dG \equiv \frac{d G}{d q^a}dq^a + \frac{dG}{dp_a}dp_a = \frac{d G}{d z^L}dz^{L} \equiv \omega^{2}\left(\mathbf{V}_{G},\cdot\right)
\end{align}
The vector field, $\mathbf{V}_{G}$, defines a flow in phase space, parameterized by $\tau$, satisfying
\begin{align}
\label{subsec:prelim:flow}
\frac{d\mathbf{z}}{d\tau} = \mathbf{V}_{G}\left[\mathbf{z}\left(\tau\right)\right]
\end{align}
which defines the components of $\mathbf{V}_{G}$, given by
\begin{align}
\label{subsec:prelim:vecComponents}
\mathbf{V}_{G} \equiv V^{K}_{G}~\p_K
\end{align}
with $K \in \left\{1,\dots,2N\right\}$. This flow, parameterized by $\tau$, defines \define{integral curves} in the phase space along which $G$ remains constant satisfying
\begin{align}
\label{subsec:prelim:flow}
\frac{dG}{d\tau} = 0 = \frac{\p G}{\p \tau} + \frac{\p G}{\p z^K}\frac{d z^K}{d\tau}
\end{align}
Here, $\tau$ parameterizes the integral curve along the vector flow generated by $G$; only when the conserved quantity $G$ is taken to define the Hamiltonian $H$ will the parameter $\tau$ be interpreted as time. Since the only restriction placed on $G$ is that it be differentiable at least once, equation~(\ref{subsec:prelim:flow}) shows that every differentiable phase space function will have an associated flow in phase space. Consider now two differentiable functions, $G$, and $F$, of the phase space variables. Associated with $G$ and $F$ are the respective vector fields $\mathbf{V}_{G}$ and $\mathbf{V}_{F}$ generating flows parameterized by $\tau_{G}$ and $\tau_{F}$. Since $G$ and $F$ are functions only of the phase space coordinates, $\frac{\p G}{\p \tau} = \frac{\p F}{\p \tau} = 0$ for all $\tau$. The \define{Poisson bracket} of $G$ and $F$ is defined to be  
\begin{align}
\label{subsec:prelim:pb}
\pb{G}{F}  & \equiv dG\left(\mathbf{V}_{F}\right) - dF\left(\mathbf{V}_{G}\right) = \omega^{2}\left(\mathbf{V}_{G},\mathbf{V}_{F}\right)
\end{align} 
and is often denoted
\begin{align}
\label{subsec:prelim:poissonBracketStructure}
\pb{G}{F} = J^{LK}~\p_{L}\left(G\right)~\p_{K}\left(F\right) 
\end{align}
with $J^{LK}$ defining the \define{cosymplectic form}. In canonical coordinates, the cosymplectic form is given by
\begin{equation}
\label{subsec:prelim:cosymplecticFormCanonical}
J^{LK} = \frac{1}{2}\left(J^{LK} - J^{KL}\right) = 
\left[\begin{array}{cc}
\mathbf{0} & \mathbb{I} \\
\\
-\mathbb{I} & \mathbf{0} \\
\end{array}\right]
\end{equation}
and is the inverse of the canonical form defined by equation~(\ref{subsec:prelim:canonicalForm}). The Poisson bracket, $\pb{G}{F}$, calculates the difference of a given phase space function, $F$, along the flow generated by $G$. In general, given two phase space functions, $G$ and $F$, the Poisson bracket will generate a third phase space function, $\pb{G}{F} = C$. The resulting phase space function, $C$, is referred to as the \define{commutation relation}. When the phase space function, $F$, is constant along the flow generated by $G$, $\pb{G}{F} = 0$, the functions $F$ and $G$ \define{commute}. The Hamiltonian, $H$, generates a \define{Hamiltonian vector field}, with an associated flow which is parameterized by coordinate time, $t$. Using the Poisson bracket, the evolution equations for the phase space coordinates, equations~(\ref{subsec:prelim:momentaEOM})~and~(\ref{subsec:prelim:positionEOM}), become
\begin{align}
\label{subsec:prelim:poissonEOM}
\frac{d \mathbf{z}}{d t}  = \frac{\p \mathbf{z}}{\p t} +  \pb{\mathbf{z}}{H} = \pb{\mathbf{z}}{H}
\end{align}
In general, for some function $F$ of the phase space coordinates, $\mathbf{z}$, which may have a dependence on the coordinate time, $t$, the total time derivative of $F$ will take the form
\begin{align}
\label{subsec:prelim:totalDerivF}
\frac{d F}{d t}  = \frac{\p F}{\p t} + \pb{F}{H}
\end{align}
Any function $F$ which is constant as the system evolves must satisfy, $\frac{d F}{dt} = 0$. In the case where the phase space function $F$ and the Hamiltonian, $H$, are time-independent, any $F$ which commutes with the Hamiltonian, $\pb{F}{H} = 0$, will remain constant as the system evolves. The vector field generated by any phase space function, $F$, which commutes with the Hamiltonian, $H$, will also be known as a Hamiltonian vector field, and will commute with the Hamiltonian vector field generated by $H$.

For canonical phase space coordinates, $\left(\mathbf{q},\tilde{\mathbf{p}}\right)$, the commutation relations amongst the phase space coordinates are
\begin{align}
\label{subsec:prelim:canonicalCommutationRelations}
\pb{q^a}{p_b} = \delta^a_b
\end{align}
All other commutation relations amongst the phase space coordinates vanish. Consider the case in which the phase space coordinates include $\tau$, the parameterization of the flow associated to the phase space function $G$.  The Poisson bracket $\pb{\tau}{G}$ yields
\begin{align}
\label{subsec:prelim:pbExp}
\pb{\tau}{G} = \frac{d}{d\tau}\tau = 1
\end{align} 
showing that $G$ is the canonical momenta conjugate to $\tau$. In general, when dealing with either canonical or non-canonical phase space coordinates, the cosymplectic form, $J^{KL}$, is defined by the commutation relations amongst the phase space coordinates, $\mathbf{z}$, 
\begin{align}
\label{subsec:prelim:cosymplecticForm}
J^{LK} \equiv \pb{z^L}{z^K}
\end{align}
In non-canonical coordinates, the elements of the cosymplectic form, $J^{LK}$, can be functions of the phase space coordinates, $J^{LK}\left(\mathbf{z}\right)$. 
Whenever the cosymplectic form, $J^{LK}$, is invertible, the symplectic form, $J_{IK}$, can be defined as the inverse of $J^{LK}$ so that
\begin{align}
\label{subsec:prelim:generalSymplecticForm}
J_{IL}J^{LK} = \delta_{I}~^{K}
\end{align}
When a distinction is necessary, the canonical cosymplectic form will be denoted $J^{LK}_{C}$. The equations of motion for the phase space coordinates take the compact form
\begin{align}
\label{subsec:prelim:hamCondensedNotation}
\dot{z}^{L} = J^{LK}\frac{\p H}{\p z^K}
\end{align}
for both canonical or non-canonical phase space coordinates. In any phase space coordinates, given two phase space functions, $G$ and $F$, the symplectic form, $\omega^2$, must map vectors over phase space to the dual space, equation~(\ref{subsec:prelim:vecField}), and must be closed, $d\omega^{2} = 0$, equation~(\ref{subsec:prelim:poincareTwoFormNonCompact}). Using the definition of the Poisson bracket in terms of the symplectic form, equation~(\ref{subsec:prelim:pb}), and insisting that partial derivatives commute, so that $dd = 0$, the Poisson bracket must satisfy the \define{Jacobi identity}
\begin{align}
\label{subsec:prelim:jacobiIdentities}
\pb{A}{\pb{B}{C}} + \pb{C}{\pb{A}{B}} + \pb{B}{\pb{C}{A}} = 0
\end{align}
for any phase space functions $A$, $B$, and $C$. 

The dynamics generated by the Lagrangian, $L$, and Hamiltonian, $H$, will yield a unique extremal for the action, but the Lagrangian and Hamiltonian are not unique themselves. Consider the addition of a total derivative, $\frac{dF}{dt}$, to the Lagrangian. The modified Lagrangian, $L^\prime \equiv L + \frac{dF}{dt}$, can change the value of the action, $S\left[\mathbf{q}\right]$, but will not change the extremal path as long as the total derivative which is added does not violate the boundary conditions of equation~(\ref{subsec:prelim:actionBoundary}). Similarly, in the Hamiltonian formulation, the addition of a total derivative, $-\frac{dF}{dt}$, to the Hamiltonian will change the canonical action, $S\left[\mathbf{q},\tilde{\mathbf{p}}\right]$, by a boundary term, $\Delta F \equiv F(t_1) - F(t_0)$, but will leave the extremal path in phase space invariant as long as equation~(\ref{subsec:prelim:actionBoundary}), expressed in phase space coordinates, remains satisfied. Using the Legendre transformation, equation~(\ref{subsec:prelim:momenta}), and the boundary term generated by the variation yielding the extremal path, equation~(\ref{subsec:prelim:actionBoundary}), any total derivative added to the Hamiltonian, $-\frac{dF}{dt}$, leaving the equations of motion invariant can be written as a canonical transformation. The function $F$ is called the \define{generating function} of the canonical transformation and, including the canonical pair $\left(t,H\right)$ as phase space coordinates, satisfies
\begin{align}
\label{subsec:prelim:canonicalGeneratingFunction}
\tilde{\Lambda}\left(\bar{\mathbf{z}}\right) = \tilde{\Lambda}\left(\mathbf{z}\right) - dF\left(\mathbf{z}\right)
\end{align}
with the new canonical coordinates, $\bar{\mathbf{z}}$, defined as functions of the initial canonical coordinates, $\mathbf{z}$. Since the Poincar\'e $1-$form $\tilde{\Lambda}\left(\bar{\mathbf{z}}\right)$ differs from the original Poincar\'e $1-$form $\tilde{\Lambda}\left(\bar{\mathbf{z}}\right)$ by an exact derivative, $dF\left(\mathbf{z})\right)$, the Poincar\'e $2-$form remains unchanged
\begin{align}
\label{subsec:prelim:generatingCanonicalForm}
\omega^{2} \equiv d\tilde{\Lambda}\left(\bar{\mathbf{z}}\right) = d\tilde{\Lambda}\left(\bar{\mathbf{z}}\right) + ddF\left(\mathbf{z}\right) \equiv d\tilde{\Lambda}\left(\bar{\mathbf{z}}\right)
\end{align}
Since the canonical form, $\omega^2$, is preserved, by definition, the transformation from $\mathbf{z}$ to $\bar{\mathbf{z}}$ is canonical, whence $F$ generates a canonical transformation. The new canonical coordinates, $\bar{\mathbf{z}}$, are defined as functions of the initial canonical coordinates, $\mathbf{z}$, by
\begin{align}
\label{subsec:prelim:canonicalGeneratingFunctionCoordinate}
\pb{z^{I}}{F} = \frac{1}{2}\left[z^{I} - \bar{z}^{M}J_{ML}~\left(\frac{d\bar{z}^{L}}{dz^{K}}\right) J^{IK}\right]
\end{align}
so that $F$ must take the form
\begin{align}
\label{subsec:prelim:canonicalGeneratingFunctionForm}
F &= \frac{1}{2}\int~\left\{z^{I}J_{IK}~dz^{K} - \bar{z}^{M}J_{ML}~d\bar{z}^{L} \right\}
\end{align}
showing that the coordinate transformation generated by $F$ must be an invertible transformation between the canonical phase space coordinates $\mathbf{z}$ and $\bar{\mathbf{z}}$. Although the transformation generated by $F$ must be invertible, equation~(\ref{subsec:prelim:canonicalGeneratingFunctionCoordinate}) shows that the generating function $F$ is only uniquely defined up to the addition of a constant multiple of any phase space function $C$ satisfying $dC = 0$, since the Poincar\'e $1-$form, defining the phase space coordinates, will only be altered by a term $d\left(F + C\right) = dF$, showing that $F+C$ and $F$ yield the same canonical transformation.  

Using the canonical Hamiltonian, $H$, to generate canonical transformations yields
\begin{align}
\label{subsec:prelim:canonicalGeneratingHamiltonian}
dH = \p_{K}H dz^{K} = \dot{z}^{L}\left(\mathbf{z}\right)J_{LK}~dz^{K}
\end{align}
which is exact. As a result, when the canonical Hamiltonian is independent of time, $\p_{t}H = 0$, adding any constant multiple of $dH$ to the Hamiltonian $1-$form, equation~(\ref{subsec:prelim:hamiltonianOneForm}), will leave the canonical form invariant. Additionally, canonical transformations of this form will also leave the extremal path invariant since the canonical Hamiltonian $H$ itself will remain unchanged. Interestingly, using equation~(\ref{subsec:prelim:canonicalGeneratingHamiltonian}), the evolution in phase space, equation~(\ref{subsec:prelim:hamCondensedNotation}), can be interpreted as a continuous \define{infinitesimal canonical transformation} generated by the canonical Hamiltonian, $H$, multiplied by the constant infinitesimal $dt$. In general, any time-independent phase space function, $G_{C}$, which commutes with the canonical Hamiltonian, $H$, for all time $t$ defines a \define{constant of motion} for the physical system. Since the constants of motion, $G_{C}$, always commute with the canonical Hamiltonian, $H$, the physical content of the theory will remain invariant under continuous infinitesimal canonical transforms generated each $G_{C}$. For example, time-independent Hamiltonians satisfy, $\pb{H}{H} = 0$, and so $H$ will be a constant of motion with the value of the Hamiltonian, $H$, corresponding to the \define{total energy} of the system. In phase space coordinates, $\mathbf{z}$, the infinitesimal transformations generated by the constant of motion $G_{C}$ and infinitesimal constants, $\epsilon$, will be 
\begin{align}
\label{subsec:prelim:ICT}
\bar{\delta}_{C}z^{L} \equiv \epsilon\pb{z^{L}}{G_{C}}
\end{align}
Using the time-independent canonical Hamiltonian, $H$, as an example, $H$ generates the familiar infinitesimal canonical transformation
\begin{align}
\label{subsec:prelim:ICTHam}
\bar{\delta}_{H}z^{L} \equiv \dot{z}^{L} dt =  dt~\pb{z^{L}}{H}
\end{align}
In addition to the constants of motion, the one dimensional groups of canonical transformations generated by the constants of motion, $G_{C}$, will also be invariant under all canonical transformations, and so correspond to physical values which are called \define{global symmetries} of the physical system. Returning to the example of systems with a time-independent canonical Hamiltonian, $H$ yields the total energy, $E$, and generates the group associated with a global symmetry under constant-time translations corresponding to \define{conservation of energy}. This is the Hamiltonian form of \define{N\"other's first theorem}, which states that a general differential system will have one conserved quantity corresponding to each continuous symmetry, with a continuous symmetry of a differential system defined by a continuous group of transformations mapping the space of solutions to the differential system into itself \cite{noether1}. 
%
%%%
%%% Singular Legendre Transformations
%%%
\subsection{Singular Legendre Transformations}
\label{subsec:gauge}
%\noindent
Often physical systems will be described by a Lagrangian, $L$, which generates a Legendre transform, defined by the map from $TM\rightarrow T^\star M$ in equation (\ref{subsec:prelim:momenta}), which is not a bijection. In this case, the map from the tangent to cotangent bundle will be a \define{singular Legendre transform}, generated by a \define{singular Lagrangian}.  When the Legendre transform is singular, the square symmetric $N\times N$ matrix 
\begin{align}
\label{subsec:gauge:noninvert}
\mathbf{T} \equiv T_{ab} \equiv \frac{\delta}{\delta \dot{q}^a}\left(\frac{\delta L}{\delta \dot{q}^b}\right) \equiv \frac{\delta p_a}{\delta \dot{q}^b}
\end{align}
will not have an inverse. As a result, some of the velocities, $\dot{\mathbf{q}}$, will not be expressible as functions of the phase space coordinates, $\left(\mathbf{q},\tilde{\mathbf{p}}\right)$. The \emph{rank} of $\mathbf{T}$ is given by the dimension of the maximal square symmetric submatrix of $\mathbf{T}$ which is invertible, and is equal to the number of linearly independent columns of $\mathbf{T}$. The rank of $\mathbf{T}$, given by the integer $M$ with $M < N$, is assumed to be constant throughout phase space allowing $M$ of the momenta to be inverted in terms of $M$ velocities. The remaining $N - M$ momenta, which are not invertible, will take the form
\begin{align}
\label{subsec:gauge:const}
p_c\left(\mathbf{q},\tilde{\mathbf{p}}\right) = \phi_c\left(\mathbf{q},\tilde{\mathbf{p}}\left(\mathbf{q},\dot{\mathbf{q}}\right)\right)
\end{align}
for phase space functions $\phi_c\left(\mathbf{q},\tilde{\mathbf{p}}\left(\mathbf{q},\dot{\mathbf{q}}\right)\right)$ which are independent of the $N-M$ non-invertible velocities $\dot{q}^{c}$. If the functions $\phi_c$ were to depend on the non-invertible velocities, $\dot{q}^{c}$, then equation~(\ref{subsec:gauge:const}) would yield an invertible relation, in contradiction with the assumption that the rank of $\mathbf{T}$ is $M$. Under an appropriate change of coordinates on the tangent bundle, the matrix $\mathbf{T}$ can be brought into block form with the maximal invertible subblock given by the $M\times M$ square symmetric matrix $\mathbf{O}$. In these coordinates, the Lagrangian will take the form
\begin{align}
\label{subsec:gauge:lagrangian}
L = \dot{q}^{a}O_{ab}\dot{q}^{b} + A_{a}\dot{q}^{a} + \phi_{c}\dot{q}^{c} + B\left(\mathbf{q}\right)
\end{align}
where $a,b \in \left\{1,\dots, M\right\}$, $c \in \left\{1,\dots, (N-M)\right\}$, and $B\left(\mathbf{q}\right)$ is independent of any velocities, $\dot{\mathbf{q}}$. The terms in equation~(\ref{subsec:gauge:lagrangian}) which are linear or independent of the velocities, $A_{a}\dot{q}^{a}$ and $B\left(\mathbf{q}\right)$ respectively, will not affect the rank of $\mathbf{T}$, and therefore will not affect the maximal invertible subblock, $\mathbf{O}$. From equation~(\ref{subsec:gauge:lagrangian}), the velocities $\dot{q}^{c}$ will appear at most linearly in the Lagrangian, suggesting that there is a transformation to coordinates in which the $N-M$ phase space functions, $\phi_c$, vanish. In these coordinates $\mathbf{T}$ will take the form
\begin{equation}
\label{subsec:gauge:noninvertBlock}
T_{ab} = \frac{\delta}{\delta \dot{q}^b}\left(\frac{\delta L}{\delta \dot{q}^a}\right) = \left[\begin{array}{cc}
\mathbf{0} & \mathbf{0} \\
\\
\mathbf{0} & O_{cd} \\
\end{array}\right]
\end{equation}
with $\mathbf{O}$ the $M\times M$ maximal invertible subblock of $\mathbf{T}$. The form of equation~(\ref{subsec:gauge:lagrangian}) suggest that such a coordinate transformation can be accomplished by adding a total derivative, $\frac{dF}{dt}$, to the action which satisfies
\begin{align}
\label{subsec:gauge:totalDerivative}
\frac{d F}{d t} & = \frac{\p F}{\p t} + \frac{\p F}{\p q^{a}}\dot{q}^{a} + \frac{\p F}{\dot{q}^{b}}\ddot{q}^{b}   + \dots = -\phi_c\dot{q}^{c}
\end{align} 
for some function $F$ of the tangent bundle coordinates, $\left(\mathbf{q},\dot{\mathbf{q}}\right)$. In phase space coordinates  
\begin{align}
\label{subsec:gauge:generatingFunction}
\frac{d F}{d t} & = \frac{\p F}{\p t} + \frac{\p F}{\p q^{a}}\dot{q}^{a} + \frac{\p F}{\p p_{b}}\dot{p}_{b} = -\phi_c\dot{q}^{c}
\end{align} 
The addition of a total derivative satisfying equation~(\ref{subsec:gauge:totalDerivative}) will yield a Lagrangian, $L^{\prime} \equiv L + \frac{d F}{dt}$, which generates $N-M$ canonical momenta of the form 
\begin{align}
\label{subsec:gauge:vanishingPFCC}
p_c = 0
\end{align}
Since these expressions for the momenta, $p_c = 0$, have been derived using a specific coordinate system, it is not possible to drop the $N-M$ momenta from the phase space without restricting the permissible canonical transformations, and consequently fixing the value of the boundary terms present in the action. In particular, a given solution in phase space was shown to evolve under a continuous set of canonical transformations which are generated by the canonical Hamiltonian, $H$. Consequently, the form of the non-invertible momenta is not guaranteed to be invariant as the system evolves.
%
%%%
%%% Constraints
%%%
\subsection{Constraints in the Lagrangian Formulation}
\label{subsec:constraints}
\label{subsubsec:lagrangian_constraints}
%\noindent
Equations expressing relations amongst the solution space coordinates which must be preserved  by the dynamics are \define{constraints}. In the Lagrangian formulation, constraints are introduced through \define{equations of constraint}, taking the form $f\left(\mathbf{q},\dot{\mathbf{q}}\right) = 0$, and are imposed by modifying the Lagrangian to include multiples of the equations of constraint. These multiplying factors are known as \define{Lagrange multipliers} and take values such that the equations of constraint hold. A constrained Lagrangian, $L_C$, with Lagrange multipliers, $\lambda^A$, and constraints, $f_{A}\left(\mathbf{q},\dot{\mathbf{q}}\right) = 0$ takes the form
\begin{align}
\label{subsec:constraints:constrainedLagrangian}
L_{C} \equiv L + \lambda^{A}f_{A}
\end{align}
with $L$ denoting the unconstrained Lagrangian. Constraints which uniquely determine the Lagrange multiplers are \define{holonomic}. Constraints which are not holonomic are \define{non-holonomic} and do not uniquely determine the Lagrange multipliers. The equations of constraint establish relations amongst the $N$ configuration space coordinates, $\mathbf{q}$, and the velocities, $\dot{\mathbf{q}}$, reducing the dimension of the space of solutions. For holonomic constraints, all Lagrange multipliers are uniquely determined, whence the equations of motion can be inverted to yield the equations of constraint. Solving both the equations of motion and equations of constraint simultaneously, the dimension of the configuration manifold, $M$, can be reduced by one for each constraint present, reducing the tangent bundle, $TM$, by two dimensions. Non-holonomic constraints are not able to reduce the space of solutions since the undetermined Lagrange multipliers present do not restrict solutions to the equations of motion to a submanifold of the tangent bundle which is itself a tangent bundle to some reduced configuration space. In the Hamiltonian formulation, for each holonomic constraint present, one degree of freedom is removed from the phase space. When moving to the Hamiltonian formulation from the Lagrangian formulation when non-holonomic constraints are present, the Lagrange multipliers are not uniquely determined by the equations of motion and must be accounted for in the phase space coordinates, therefore non-holonomic constraints do not allow the phase space to be reduced.

In the Hamiltonian formulation, the canonical Hamiltonian, $H$, derived from the constrained Lagrangian, equation~(\ref{subsec:constraints:constrainedLagrangian}), will generate dynamics, consistent with solutions to the Euler-Lagrange equations, which preserve the constraints $f_{A}\left(\mathbf{q},\dot{\mathbf{q}}\left(\mathbf{q},\tilde{\mathbf{p}}\right)\right) = 0$. If the canonical Hamiltonian, $H$, is derived from a singular Lagrangian, the $N-M$ expressions of equation~(\ref{subsec:gauge:const}) can be expressed as the $N-M$ constraints
\begin{align}
\label{subsec:constraints:primaryConstraints}
p_c - \phi_c\left(\mathbf{q},\tilde{\mathbf{p}}\left(\mathbf{q},\dot{\mathbf{q}}\right)\right) = 0
\end{align}
with $\phi_c$ being a function of the invertible phase space coordinates. Constraints imposed on the phase space coordinates resulting from a Lagrangian formulation generating a singular Legendre transform are known as \define{primary constraints}. Using the definition of the canonical momenta, equation~(\ref{subsec:prelim:momenta}), the constraints of equation~(\ref{subsec:constraints:primaryConstraints}) must be generated by a Lagrangian which is at most linear in the non-invertible velocities, $\dot{q}^{c}$. Since the Lagrangian cannot involve terms which are quadratic in the non-invertible velocities, $\dot{q}^{c}$, the resulting Euler-Lagrange equations generated by extremizing the action, $S$, cannot completely determine the dynamics for $\dot{q}^c$. Transforming to the coordinates derived in subsection~(\ref{subsec:gauge}) in which all $N-M$ non-invertible momenta vanish, equation~(\ref{subsec:gauge:vanishingPFCC}), the Lagrangian, $L$, will be independent of the non-invertible velocities, $\dot{q}^{c}$, yielding the constraints $p_c = 0$. In these coordinates, the $N-M$ configuration space variables, $q^c$, will have velocities which do not appear in the Lagrangian, and so must generate $N-M$ Euler-Lagrange equations of the form
\begin{align}
\label{subsec:constraints:elEquations}
\chi_c\left(\mathbf{q},\dot{\mathbf{q}}\right) \equiv \frac{\p L}{\p q^{c}} - \frac{d}{dt}\frac{\p L}{\p \dot{q}^{c}} = \frac{\p L}{\p q^{c}} = \frac{\delta L}{\delta q^{c}} = 0
\end{align}
revealing that the Lagrangian will be independent of the variables $q^{c}$ as well as the velocities $\dot{q}^{c}$. Since the Lagrangian is independent of the variables $q^c$ and velocities $\dot{q}^c$, the dynamics leaves these values undetermined. The dynamics generated must satisfy equation~(\ref{subsec:constraints:elEquations}), thereby defining $N-M$ more constraints, in addition to the $N-M$ primary constraints, which are inherent in the system. Continuing to work in the coordinate system in which the Lagrangian, $L$, is independent of the coordinates $\left(q^{c},\dot{q}^{c}\right)$, define a new Lagrangian, $L^\prime$, as the value of the Lagrangian $L$ evaluated with all undetermined terms set equal to zero, $q^c = \dot{q}^c = 0$. The Lagrangian $L$ can then be expressed as 
\begin{align}
\label{subsec:constraints:invertLagrangian}
L\left[\mathbf{q},\dot{\mathbf{q}}\right] = L^\prime\left[\mathbf{q},\dot{\mathbf{q}}\right] + q^{c}\chi_c\left[\mathbf{q},\dot{\mathbf{q}}\right] + \dot{q}^{c}p_c\left[\mathbf{q},\dot{\mathbf{q}}\right] 
\end{align}
modulo terms which do not affect the Legendre transform or the dynamics. The form of $L$ in equation~(\ref{subsec:constraints:invertLagrangian}) shows that the $2(N-M)$ coordinates $\left(q^{c},\dot{q}^{c}\right)$ take the same form as undetermined Lagrange multipliers for the $2(N-M)$ constraints. Although these results were derived in a coordinate system in which the Lagrangian is independent of the configuration space variables $q^c$ and velocities $\dot{q}^{c}$, no canonical transformation can remove the $2(N-M)$ undetermined functions present in the formulation. Consequently, all primary constraints in the Hamiltonian formulation will be the direct result of non-holonomic constraints present in the Lagrangian formulation, precisely because these primary constraints arise from the presence of velocities, $\dot{q}^{c}$, which cannot be determined by the canonical dynamics.

\subsection{Constraints in the Hamiltonian Formalism}
\label{subsubsec:hamiltonian_constraints}
In order for the primary constraints to be satisfied under the evolution generated by the canonical Hamiltonian, all of the constraints must commute with $H$, since only then will they continue to vanish as the system evolves. Furthermore, as shown in subsection~(\ref{subsec:gauge}), the form of the primary constraints can change as the result of a canonical transformation and so should be designated distinctly from statements which retain their form under all acceptable canonical transformations. In order to avoid confusion with statements which remain true throughout phase space, the symbol $\approx$ is used for \define{weak equalities}, which are equations that will be true only when all constraints are satisfied. Weak equalities are not valid throughout phase space since they will be true only while the evolution satisfies the constraints, whence they cannot be used to reduce the dimension of the phase space directly. The primary constraints are only weakly equal to zero and so will be expressed as
\begin{align}
\label{subsec:constraints:weaklyVanishing}
p_c \approx 0
\end{align}
The requirement that the primary constraints commute with the canonical Hamiltonian leads to \define{consistency constraints} which take the form
\begin{align}
\label{subsec:constraints:secondary}
\pb{p_{c}}{H}= \chi_c\left(\mathbf{q},\tilde{\mathbf{p}}\right) \approx 0
\end{align}
and must weakly vanish in order for the primary constraints to be satisfied as the system evolves. The constraints, $\chi_c \approx 0$, generated by the primary constraints are often referred to as \define{secondary constraints}.  Constraints which are weakly equal to zero are said to be \define{weakly vanishing}, and two phase space functions which generate a weakly vanishing commutation relation \define{weakly commute}. A constraint or commutation relation which is identically zero is called \define{strongly vanishing}. Strong equalities are valid throughout phase space, whether or not weak equalities are satisfied, and will be denoted with the standard equal sign.

The process of finding consistency constraints must be continued until the set of all consistency constraints, along with the primary constraints, vanish. That is, if $\chi_c ~\displaystyle{\not}{\approx}~ 0$ after all weak equalities are evaluated then $\chi_c$ must generate a further constraint on the system
\begin{align}
\label{subsec:constraints:consistency}
\chi^{\prime}_c \equiv \pb{\chi_c}{H} = \dot{\chi}_c\left(\mathbf{q},\tilde{\mathbf{p}}\right) \approx 0
\end{align}
When a complete set of constraints is found such that all constraints generated by the $N-M$ primary constraints, $p_c \approx 0$, along the flow in phase space generated by the canonical Hamiltonian weakly vanish, no further constraints are present. The complete set of all constraints, primary and all consistency constraints, imposed on the system will be denoted
\begin{align}
\label{subsec:constraints:FCC}
\mathcal{C}_{A} = \left\{p_c,\chi_a,\chi^\prime_b,\dots\right\}
\end{align}
with the label $A$ running over all constraints. The submanifold of phase space on which all constraints vanish defines the \define{constraint manifold}. If all constraints are necessary to define the constraint manifold uniquely, then the set of constraints is \define{irreducible}, otherwise the set of all constraints will be \define{reducible}. Reducible sets of constraints will not all be independent, allowing some constraints to be written as vanishing functions of the remaining constraints. For all examples considered here, the set of all constraints will be irreducible and will contain each of the primary first class constraints, $p_c \approx 0$, which will generate a single consistency constraint, $\chi_c \approx 0$, that weakly commutes with the canonical Hamiltonian. These physical systems will have a total of $2(N-M)$ weakly vanishing constraints defining a constraint manifold with $2N - 2(N-M) = 2M$ dimensions. 

The distinction between primary and consistency constraints, as pointed out by Dirac \cite{dirac2}, is relatively unimportant compared to the distinction made between constraints which have a weakly vanishing commutation relation with all other constraints and those which have a non-vanishing commutation relation with at least one other constraint. Constraints which commute with all other constraints are \define{first class}, while those which have a non-vanishing commutation relation with at least one other constraint are \define{second class}. For all examples considered here, all primary and consistency constraints generated by the Lagrangian will be first class. The first class constraints, $\mathcal{C}_{A}$, of a theory will be \define{closed} under the Poisson bracket, satisfying 
\begin{align}
\label{subsec:constraints:FCCalgebra}
\pb{\mathcal{C}_{A}}{\mathcal{C}_{B}} = \Gamma^{C}_{AB}\mathcal{C}_{C} \approx 0
\end{align}
with $\Gamma^{C}_{AB}$ defining the structure coefficients. The commutation relations amongst the first class constraints is known as the \define{first class constraint algebra}, often shortened to just \define{constraint algebra} when no second class constraints are present. When the constraint algebra is defined by structure coefficients which are constant matrixes, the $\Gamma^{C}_{AB}$ are known as the \define{structure constants}. First class constraints which generate structure coefficients that are not constant matrixes but rather functions of the phase space variables are sometimes also referred to as \define{business class constraints}. All properties derived here for first class constraints will also to apply to business class constraints, so no distinction will be made. Any phase space function, $G$, satisfying 
\begin{align}
\label{subsec:constraints:FCfunction}
\pb{G}{\mathcal{C}_{A}} \approx 0
\end{align}
for all first class constraints is referred to as a \define{first class function}. In particular, the canonical Hamiltonian, $H$, used to derive the consistency constraints will be a first class function, satisfying equation~(\ref{subsec:constraints:FCfunction}), and is referred to as the \define{first class Hamiltonian}, $H_{FC}$. The first class Hamiltonian, $H_{FC}$, will always be assumed to be time-independent, satisfying $\pb{H^{\prime}_{FC}}{H_{FC}} \approx 0$ for any $H^{\prime}_{FC}$ which differs from $H_{FC}$ by any transformation generated by some combination of first class constraints. As a result, all first class constraints must also be time-independent in order to commute with the first class Hamiltonian as the system evolves. The first class Hamiltonian will then be associated with the total energy of the system, and symmetry under global time translations will correspond to conservation of energy. For diffeomorphic invariant field theories having locally vanishing first class Hamiltonian density, $\mathcal{H}_{0} \approx 0$, such as in G.R., the locally vanishing form of the canonical Hamiltonian means that the theory does not permit a canonical local definition of energy. In such theories, conserved quantities associated with global symmetries will be given by quantities evaluated over domain boundaries.

It is important to note that the first class constraint algebra, equation~(\ref{subsec:constraints:FCCalgebra}), is assured to close only on the constraint manifold, where all constraints vanish, so it does not make sense to talk about the first class constraint algebra elsewhere in phase space. It is also true that first class functions are only defined on the constraint manifold, and in general will have non-vanishing commutation relations with the first class constraints elsewhere in phase space. This includes the first class Hamiltonian, $H_{FC}$, which generates the dynamics. As a result, the phase space dynamics will only be meaningfully defined for systems which remain on the first class constraint manifold.
%
%%%
%%% Gauge Freedom
%%%
\subsection{Gauge Freedom and the Extended Hamiltonian}
\label{subsec:extendedHamiltonian}
Consider the first class Hamiltonian, $H_{FC}$, derived in coordinates in which the primary constraints take the form $p_c \approx 0$. The primary constraints, $p_c \approx 0$, cannot be present in the first class Hamiltonian, $H_{FC}$, since the theory does not provide canonical evolution equations for the configuration space variables $q^{c}$ which are conjugate to the vanishing momenta. Since the first class Hamiltonian, $H_{FC}$, is a first class function, any multiple of first class constraints can be added to the Hamiltonian without modifying the constraint manifold or constraint algebra. The addition of some combination of the first class constraints to the first class Hamiltonian corresponds to a change in the undetermined multipliers of the non-holonomic constraints in the Lagrangian formulation. Therefore, the physically meaningful content of the theory will remain unchanged whether the dynamics are generated by the first class Hamiltonian, $H_{FC}$, or a Hamiltonian defined by the addition of some combination of the first class constraints, $\mathcal{C}_{A}$, to the first class Hamiltonian. These observations led Dirac to introduce the \define{total Hamiltonian}
\begin{align}
\label{subsec:hamiltonian:TH}
H_T \equiv H_{FC} + \lambda^{c}p_c
\end{align}
with the coefficients of the $N-M$ primary constraints, given by the $N-M$ undetermined Lagrange multipliers $\lambda^c$, providing dynamical equations for the configuration space variables $q^c$. Although the total Hamiltonian, $H_T$, will provide evolution equations for all phase space coordinates, it is not the most general extension to the first class Hamiltonian, $H_{FC}$, since the variables $q^{c}$, which multiply the secondary constraints, $\chi_c \approx 0$, are no longer completely arbitrary, having their velocities specified by the Lagrange multipliers $\lambda^{c}$. The most general extension to the first class Hamiltonian, $H_{FC}$, must then include contributions from all first class constraints, $\mathcal{C}_{A}$, with undetermined Lagrange multipliers, $\lambda^{A}$, yielding the \define{extended Hamiltonian}
\begin{align}
\label{subsec:hamiltonian:EXH}
H_E \equiv H_{FC} + \lambda^A\mathcal{C}_A
\end{align} 
On the first class constraint manifold $H_{E} \approx H_{T} \approx H_{FC}$, so the first class Hamiltonian, total Hamiltonian, and extended Hamiltonian will all yield the same physical results. 

Since the value of first class functions will agree for the dynamics generated by either the first class Hamiltonian, $H_{FC}$, total Hamiltonian, $H_T$, or extended Hamiltonian, $H_{E}$, physically meaningful quantities must be first class functions so that the addition of terms involving the first class constraints will not affect their dynamics. These physically meaningful quantities are called \define{observables}, which are defined to be non-vanishing first class functions of the phase space variables. As an example, when the Lagrangian is not singular, the theory has no first class constraints and so all phase space coordinates represent physically meaningful content. Transformations of the phase space coordinates which leave the observables invariant define \define{gauge transformations} with the group of all gauge transformations defining the \define{gauge group}. The ability to perform gauge transformations amongst the canonical phase space variables is known as \define{gauge freedom}. In the extended Hamiltonian, the gauge freedom of the theory is embodied in the undetermined multipliers $\lambda^{A}$ which can be any function of the phase space coordinates, $\mathbf{z}$, and coordinate time, $t$. 

Consider an infinitesimal canonical transformation generated by some sum of first class constraints, $\mathcal{C}_{A}$, multiplied by infinitesimals, $\epsilon^{A}$, defining
\begin{align}
\label{subsec:hamiltonian:FCCICTGen}
G_{0} = \epsilon^{A}\mathcal{C}_{A} \approx 0
\end{align}
The generating function $G_{0}$ will satisfy 
\begin{align}
\label{subsec:hamiltonian:gaugeGenCommute}
\pb{G_{0}}{H_{FC}} \approx \pb{G_{0}}{H_{T}}\approx \pb{G_{0}}{H_{E}} \approx 0
\end{align}
for all values of $\epsilon^{A}$, including arbitrary functions of the phase space coordinates and coordinate time, $t$, and the infinitesimal variation vector in phase space generated by $G_{0}$ will have components 
\begin{align}
\label{subsec:hamiltonian:fccICTVar}
\hat{\delta} z^{L} \equiv \pb{z^{L}}{G_{0}} 
\end{align}
Because $G_{0}$ is a weakly vanishing function of the first class constraints it will weakly commute with all first class functions, whence the variation $\hat{\delta}\mathbf{z}$ must leave the constraint manifold and all observables invariant. Since this must be true for any value of $\epsilon^{A}$, the collection of all first class constraints, $\mathcal{C}_{A}\approx 0$, define the \define{generators of gauge transformations}. 

When working with field theories, the definition of gauge freedom will be slightly different, distinguishing global symmetry transformations from local gauge transformations while ensuring that gauge freedom be defined independently of any particular physical solution. This is accomplished by defining the space of permissible Lagrange multipliers such that all $\lambda^{A}$ have compact support contained within a single coordinate patch of the computational domain. This definition ensures that permissible Lagrange multipliers will yield local transformations, independent of any particular physical solution or any features of the global topology. Gauge freedom for field theories is identified by transformations generated by Lagrange multipliers having an arbitrary dependence on coordinates, $\left(t,\mathbf{x}\right)$, vanishing at the endpoints. Lagrange multipliers which do not have an arbitrary dependence on the computational coordinates identify a redundancy in the description of a physical solution. Using these definitions, gauge transformations correspond to transformations on phase space along an extremal path within the first class constraint manifold. Coordinate dependent Lagrange multipliers that do not vanish at the boundaries define transformations on phase space between extremal solutions that preserve gauge transformations, thus defining an evolution. This distinction is of little importance for theories where coordinate time, $t$, can be identified with a globally defined parameter, in which case the Hamiltonian does not vanish and generates a connected extremal path throughout. When it is not possible to define a global physical time independent of the dynamical fields canonically, the distinction between the roles of Lagrange multipliers becomes paramount, as will be the case for General Relativity, since the canonical Hamiltonian will (locally) vanish, as we will discuss in a subsequent paper. In any field theory, dynamics are defined by fixing all gauge freedom present while removing any redundancy in the description of physical solutions. As a result, this formulation defines an initial data problem with solutions uniquely specifying values for all phase space coordinates on an initial slice of constant coordinate time.
%
%%%
%%% Dirac Bracket
%%%
\subsection{Gauge Fixing and the Dirac Bracket}
\label{subsec:diracbracket}
When two phase space functions $G$ and $F$ have a non-vanishing commutation relation throughout a neighborhood of phase space, thereby satisfying $\pb{G}{F} \neq 0$ for all $\mathbf{z}$ in some neighborhood of $\mathbf{z}_{0}$ denoted by $\mathcal{U}_{0}$, then the commutation relation can be inverted to define a surface in phase space with coordinates on the surface defined by the value of the functions $G$ and $F$ in the neighborhood $\mathcal{U}_{0}$. For example, any canonical pair $\left(q^a,p_a\right)$ will generate the commutation relation $\pb{q^{a}}{p_{a}} = 1$, which is independent of the value of the phase space coordinates themselves and therefore valid throughout phase space, and, somewhat trivially then, the commutation relation can be inverted to define a surface in phase space with coordinates on the surface given by the values of $q^{a}$ and $p_{a}$. The ability of two phase space functions, $F$ and $G$, which generate an invertible commutation relation to act as the coordinates of a surface in phase space is related directly to the non-vanishing of their \emph{Lagrange bracket} defined as
\begin{align}
\label{subsec:diracbracket:lagrangeBracket}
\left\{G,F\right\} \equiv \frac{\p q^{n}}{\p G}\frac{\p p_{n}}{\p F} - \frac{\p q^{n}}{\p F}\frac{\p p_{n}}{\p G}
\end{align}
In order for $G$ and $F$ to act as surface coordinates, at least for some neighborhood $\mathcal{U}_{0}$, the Lagrange bracket must not vanish, $\left\{G,F\right\} \neq 0$ for all $\mathbf{z} \in \mathcal{U}_{0}$. Consider then two phase space functions, $G$ and $F$, which may or may not generate a nowhere vanishing commutation relation, along with two constraints, $C = 0$ and $A = 0$, defined throughout the neighborhood $\mathcal{U}_{0}$, which have a nowhere vanishing Lagrange bracket, $\left\{C,A\right\} \neq 0$. It is then possible to construct a bracket which, in the neighborhood $\mathcal{U}_{0}$, yields the value of $\pb{F}{G}$ restricted to the surface $C=A=0$ by removing the components of the phase space flow along both $F$ and $G$ which project onto the flows generated by $C$ and $A$ in phase space, thereby projecting onto the constrained solution where $C=A=0$. This bracket, denoted $\pb{\cdot}{\cdot}_D$, of any phase space function, $F$, with either constraint, $A$ or $C$, must satisfy
\begin{align}
\label{subsec:diracbracket:diracBracketConst}
\left[F,C\right]_D=\left[F,A\right]_D=\left[G,C\right]_D=\left[G,A\right]_D=0
\end{align}
for any $F,G$ in the neighborhood $\mathcal{U}_{0}$, and must also satisfy the Jacobi identity
\begin{align}
\label{subsec:diracbracket:jacobiIdentity}
\pb{E}{\pb{F}{G}}_{D} + \pb{G}{\pb{E}{F}}_{D} + \pb{F}{\pb{G}{E}}_{D} = 0
\end{align}
for any phase space functions $E$, $F$ and $G$. The generalization of the Poisson bracket which manifestly satisfies the constraints imposed on the Hamiltonian system, satisfying equations~(\ref{subsec:diracbracket:diracBracketConst})~and~(\ref{subsec:diracbracket:jacobiIdentity}), is known as the \emph{Dirac bracket}. For a collection of $2L$ constraints, $\vec{\mathcal{S}} = \left\{\mathcal{S}_{1},\dots,\mathcal{S}_{2L}\right\}$, which are surface forming in some neighborhood $\mathcal{U}_{0}$, the Dirac bracket of any two phase space functions $G$ and $F$ will be
\begin{align}
\label{subsec:diracbracket:diracBracketLB}
\pb{F}{G}_D\equiv \pb{F}{G} - \pb{F}{\mathcal{S}_{D}}\delta^{DA}\left\{\mathcal{S}_{A},\mathcal{S}_{B}\right\}\delta^{BE}\pb{\mathcal{S}_{E}}{G}
\end{align}
It should be clear that the number of constraints, $2L$, must be even in order for the collection of constraints to be surface forming, otherwise the resulting bracket will not be symplectic, and thus will not satisfy equation~(\ref{subsec:diracbracket:jacobiIdentity}). Furthermore, because the surface is defined throughout some neighborhood of phase space by the vanishing of the constraints, $\mathcal{S}_{A} = 0$, the constraints must be strongly vanishing since weakly vanishing constraints are defined only on the constraint manifold. The requirement that the Lagrange bracket of the constraints nowhere vanish is a requirement that the \emph{constraint commutation matrix} defined by
\begin{align}
\label{subsec:diracbracket:constMatrix}
D_{AB} = \pb{\mathcal{S}_{A}}{\mathcal{S}_{B}}
\end{align}
be invertible. The relation between the constraint commutation matrix and the Lagrange bracket of the constraints satisfies
\begin{align}
\label{subsec:diracbracket:constMatrixLagrangeBracket}
\sum_{B=1}^{2L}~\left\{\mathcal{S}_{A},\mathcal{S}_{B}\right\}D_{BC} = \delta_{AC}
\end{align}
When the constraint commutation matrix, equation~(\ref{subsec:diracbracket:constMatrix}), is invertible, the Dirac bracket, equation~(\ref{subsec:diracbracket:diracBracketLB}), for any arbitrary phase space functions $F$ and $G$, will be given by
\begin{align}
\label{subsec:diracbracket:diracBracket}
\pb{F}{G}_{D}~\equiv~\pb{F}{G}-\pb{F}{\mathcal{S}_A} D^{AB}\pb{\mathcal{S}_{B}}{G}
\end{align}
where $D^{AB}$ denotes the inverse to the constraint commutation matrix of equation~(\ref{subsec:diracbracket:constMatrix}). In particular, for any phase space function $F$, the Dirac bracket yields $\pb{F}{\mathcal{S}_{A}}_{D} = 0$ for any of the $2L$ constraints $\mathcal{S}_{A} = 0$, showing that $\pb{\cdot}{\cdot}_{D}$ satisfies equation~(\ref{subsec:diracbracket:diracBracketConst}). 

When dealing with gauge theories, the first class constraints will generate a vanishing constraint commutation matrix on the constraint manifold, because the first class constraint algebra is closed, and therefore cannot be used to construct a Dirac bracket. This will be true only on the first class constraint manifold, but the theory offers no natural way to define the commutation relations amongst the first class constraints off of the constraint manifold. Consider a minimal set of second class constraints, $\mathcal{S}_{A}$, imposed upon the system in order for the constraint commutation matrix generated by the set of all first class and second class constraints to be invertible. Because the first class constraints weakly commute amongst themselves, it will be necessary to impose a minimum of one independent second class constraint for every independent first class constraint present. Assuming a minimal set of second class constraints, $\mathcal{S}_{A}$, has been found the constraint commutation matrix of all second class and first class constraints can be inverted. For a gauge theory with $L$ first class constraints, denote the set of all constraints, second class and first class, as 
\begin{align}
\label{subsec:diracbracket:allConstraints}
\vec{\mathcal{D}} \equiv \left\{\mathcal{C}_{1},\dots,\mathcal{C}_{L},\mathcal{S}_{1},\dots,\mathcal{S}_{L}\right\}
\end{align}
with components, $\mathcal{D}_{A}$, having an index, $A$, which runs over all $2L$ constraints. Once a minimum set of second class constraints has been found, the constraint commutation matrix, $D_{AB} = \pb{\mathcal{D}_{A}}{\mathcal{D}_{B}}$, will be invertible and the resulting Dirac bracket will generate evolution equations for the original canonical phase space coordinates, $\mathbf{z}$, given by
\begin{align}
\label{subsec:diracbracket:diracBracketCanHam}
\pb{\mathbf{z}}{H_{FC}}_{D}~&\equiv~\pb{\mathbf{z}}{H_{FC}}-\pb{\mathbf{z}}{\mathcal{D}_A}D^{AB}\pb{\mathcal{D}_{B}}{H_{FC}}
\end{align}
The evolution equations for the canonical phase space coordinates, $\mathbf{z}$, will be identical to those generated by the Hamiltonian
\begin{align}
\label{subsec:diracbracket:conHam}
H_D \equiv H_{FC} + \Lambda^{A}\mathcal{D}_{A}
\end{align}
with Lagrange multipliers, $\Lambda^{A}$, given by
\begin{align}
\label{subsec:diracbracket:LagrangeMultipliers}
\Lambda^A \equiv D^{AB}\pb{\mathcal{D}_{B}}{H_{FC}}
\end{align}
This result shows that imposing a minimal set of second class constraints on the system, thereby allowing the constraint commutation matrix generated by the set of all first class and second class constraints to be inverted, uniquely fixes all of the undetermined multipliers present in the extended Hamiltonian, $H_E$. Once all Lagrange multipliers have been uniquely fixed, no gauge freedom will remain, as can be seen by considering any variation, $\hat{\delta}\mathbf{z}$, generated by any first class constraint using the Dirac bracket. Such variations will satisfy
\begin{align}
\label{subsec:diracbracket:diracBracketGaugeVariation}
\hat{\delta}\mathbf{z} = \pb{\mathbf{z}}{\epsilon^{A}\mathcal{C}_{A}}_{D} \equiv 0
\end{align}
because the Dirac bracket satisfies equation~(\ref{subsec:diracbracket:diracBracketConst}). The process of removing all gauge freedom is called \define{gauge fixing}, and equation~(\ref{subsec:diracbracket:diracBracketGaugeVariation}) shows that, with an appropriate choice of second class constraints, the Dirac bracket can be used to yield a gauge fixed system. 

The commutation relations amongst the original set of canonical variables, when restricted to the constraint manifold defined by $\mathcal{D}_{A} = 0$, will necessarily change since the phase space has been reduced. The new commutation relations restricted to the constraint manifold will be generated by the Dirac bracket, yielding
\begin{align}
\label{subsec:diracbracket:diracBracketCommutationRelations}
\pb{z^{L}}{z^{K}}_{D} = \pb{z^{L}}{z^{K}} + \pb{z^{L}}{\mathcal{D}_{A}}D^{AB}\pb{\mathcal{D}_{B}}{z^{K}}
\end{align}
These commutation relations yield the cosymplectic form, $J^{LK}\left(\mathbf{z}\right)$, as defined by equation~(\ref{subsec:prelim:cosymplecticForm}), of the phase space which is restricted to the constraint manifold. Since the Dirac bracket satisfies the Jacobi identities, the constraint manifold will be a symplectic manifold with the inverse of the cosymplectic form, $J^{LK}\left(\mathbf{z}\right)$, defining the symplectic form, $\omega^{2}$, restricted to the constraint manifold. Once the gauge has been fixed, the remaining freedom in the system will correspond precisely to the physical degrees of freedom. For example, when the first class constraint algebra is defined by $N-M$ independent primary constraints, $p_c \approx 0$, which each generate a single independent secondary constraint, $\chi_c \approx 0$, yielding a total of $2(N-M)$ first class constraints, it will be necessary to impose $2(N-M)$ independent second class constraints, yielding $4(N-M)$ total constraints, in order for the set of all constraints to yield an invertible constraint matrix.  Once a surface in phase space has been constructed from all $4(N-M)$, the reduced phase space on which all $4(N-M)$ constraints are satisfied will have dimensions  $2N - 4(N-M) = 4M - 2N \equiv 2D$. A theorem by Darboux, \cite{jorge1}, proves that all symplectic manifolds are locally equivalent, therefore the constraint manifold can be given a local coordinate system at any point which can be written as $D$ canonical pairs, and so the system is said to have $D$ degrees of freedom.

When working with a field theory rather than a discrete system, the requirement on the commutation relations amongst the constraints in order for the constraint commutation matrix to be invertible becomes
\begin{align}
\label{subsec:diracbracket:fieldInverse}
\begin{split}
\int d^3x^{\prime}~\left\{D^{AB}\left(x,x^\prime\right)\left[\mathcal{D}_{B}\left(x^{\prime\prime}\right),\mathcal{D}_{C}\left(x^{\prime}\right) \right]\right\} & \equiv  \int d^3x^{\prime}~\left\{D^{AB}\left(x,x^\prime\right)D_{BC}\left(x^{\prime},x^{\prime\prime}\right)\right\} \\
& = \delta^A_C~\delta\left(x,x^{\prime\prime}\right)
\end{split}
\end{align}
where $\delta\left(x,x^{\prime}\right)$ is the Dirac delta function. In general, the constraint commutation matrix for field theories, $D_{AB}\left(x,x^\prime\right)$, will involve differential operators so that the inverse will be an integral operator. In a field theory then, for an invertible constraint commutation matrix, $D_{AB}\left(x,x^\prime\right)$, the Dirac bracket between two arbitrary phase space functions, $F$ and $G$, will be
\begin{align}
\label{subsec:diracbracket:diracBracketContinuum}
\begin{split}
\left[F\left(x\right),G\left(x^\prime\right)\right]_{D}~&\equiv~\left[F\left(x\right),G\left(x^\prime\right)\right] \\
& -\int d^3x^{\prime\prime\prime}~\int d^3x^{\prime\prime}~
	\left(
		\left[
			F\left(x\right),
			\mathcal{D}_{A}\left(x^{\prime\prime}\right)
		\right] D^{AB}\left(x^{\prime\prime},x^{\prime\prime\prime}\right)
		\left[
			\mathcal{D}_{B}\left(x^{\prime\prime\prime}\right),
			G\left(x^\prime\right)
		\right] 
		\right)
\end{split}
\end{align}
Just as in the finite dimensional case, the undetermined multipliers of the Hamiltonian $H_D$, defined in equation~(\ref{subsec:diracbracket:conHam}), satisfy
\begin{align}
\label{subsec:diracbracket:LagrangeMultipliersField}
\Lambda^{A}\left(x\right) = \int d^3x^\prime~\left(
		D^{AB}\left(x,x^{\prime}\right)
		\left[
			\mathcal{D}_{B}\left(x^{\prime}\right),
			H_{0}
		\right]~\right)
\end{align}
showing that each $\Lambda^{A}\left(x\right)$ will have a coordinate dependence. 
%
%%%
%%% Synopsis
%%%
\subsection{Synopsis of Gauge Systems in General}
\label{subsec:gaugeSystems:synopsis}
The kernel of the singular Legendre transform identifies, defines the gauge freedom. Consequently, any physically relevant part of a solution must reside in the preserved non-singular part of the Legendre transformation, providing a connection for the physical solutions in both the Hamiltonian and Lagrangian formulations. In the Lagrangian formulation, gauge freedom is identified with the transformations generated by the arbitrary coordinate dependence in the canonical position variables whose variation leads to the constraints of equation~(\ref{elEquations}). In the Hamiltonian formulation, which is the focus of this work, the presence of gauge freedom within the theory is manifest by transformations generated by the first class constraints, derived from the singular Legendre transform. Through the introduction of a judicious choice of second class constraints, the gauge can be uniquely fixed, removing all gauge freedom from the theory. Once the gauge freedom has been completely removed, the resulting Dirac bracket can be used to construct expressions for the physical observables and define dynamics for the gauge fixed Hamiltonian system. The resulting gauge fixed Hamiltonian system is defined entirely by the physical observables. This in fact was the reason motivating the examination of this framework for canonical quantization procedures and the starting point for the pioneering efforts of Dirac, Bergmann and many others to quantize gravity \cite{dirac2},\cite{bergmann1}, \cite{bergmann2} and \cite{dewitt1}. For a brief history on the development of constrained Hamiltonian dynamics, see \cite{salisbury1}. 

%%%%%%%%%%%%%%%%%%%%%%%%%%%%%%
%% Numerical Implementation: Initial Data and Stability  %%
%%%%%%%%%%%%%%%%%%%%%%%%%%%%%%
% includes section definition
%%%
%%% Introduction
%%%
\section{Stability}
\label{sec:stability}
Our discussion of stability in sections~(\ref{subsec:stability:integrability})~thru~(\ref{subsec:numerical:hyperbolicity}) is approached from the standpoint of differential systems. The remainder of this work does not depend directly on the material or techniques from these sections, which use methods that lie somewhat off the path maintained by the other sections of this work, so the reader may omit these sections upon first reading.

In addition to the framework introduced in section~(\ref{sec:GaugeHamiltonian}), it will be helpful to develop a framework for treating systems of differential equations more generally before embarking on an examination of stability in the context of Hamiltonian formulations of gauge theories. 
We present in section~(\ref{subsec:stability:integrability}) a review of useful material from the study of general differential systems, providing a brief introduction to Pfaffian systems, solution manifolds, and integrability. Before approaching general Hamiltonian formulations of gauge theories from the perspective of a differential system, we introduce in section~(\ref{subsec:stability:invariant}) a decomposition of the phase space tangent bundle, discussed in subsection~(\ref{subsec:lagHamDyn}), into gauge invariant vector spaces. We then express a general Hamiltonian formulation with gauge freedom in the form of a differential system in section~(\ref{subsec:stability:gaugeIntegrable}), and then show that any system in which the gauge freedom has not been completely fixed will fail to be integrable. As a corollary, we show that Hamiltonian formulations of gauge theories which are completely gauge fixed will be integrable. In section~(\ref{subsec:numerical:hyperbolicity}), we examine hyperbolicity and well-posedness for Hamiltonian formulations of gauge theories. The discussion of numerical stability concludes in section~(\ref{subsec:stability:removeErr}) with the introduction of a geometrically motivated method for removing numerical error from Hamiltonian formulations of gauge theories. To simplify the discussion and avoid the pedantry necessary for dealing with infinite dimensional systems, only finite dimensional theories are considered in the general discussion of stability in this section.
%
%%%
%%%
%%%
\subsection{Pfaffian Systems and Integrability}
\label{subsec:stability:integrability}
Consider now a general theory defined on some manifold, $\mathcal{M}$, with $N$ variables, $\left\{z^{I}\right\}$, and $M$ constraints, $\mathcal{C}^{A}\left(\mathbf{z}\right) = 0$. Although the $M$ constraints can be written as $\mathcal{C}^{A}\left(\mathbf{z}\right) = 0$, at some point $\mathbf{z} \in \mathcal{M}$, the system actually evolves along some path in the tangent bundle, $T\mathcal{M}$. As a result, it is natural to write the $M$ constraints as $M$ linearly independent $1-$forms, $\theta^{A}$, belonging to the cotangent bundle, $\theta^{A} \in T^{\star}\mathcal{M}$, and to consider the possible solutions for the constrained system to be the space of evolution vectors, tangent to $\mathcal{M}$, to be the vectors $\mathbf{V} \in T\mathcal{M}$ satisfying 
\begin{align}
\label{subsec:numerical:pfaffianOneForms}
\theta^{A}\left(\mathbf{V}\right) = 0
\end{align}
 for all $M$ $1-$forms $\theta^{A}$. The $M$ linearly independent $1-$forms $\theta^{A} \in T^{\star}\mathcal{M}$ are called \define{Pfaffians}, and the vector space defined by
\begin{align}
\label{subsec:numerical:distribution}
\Delta \equiv \left\{\mathbf{V} \in T\mathcal{M}~ \vert~  \theta^{A}\left(\mathbf{V}\right) = 0 ~\forall~ \theta^{A}\right\}
\end{align}
is called a \define{distribution} for smooth vector fields $\mathbf{V} \in T\mathcal{M}$.  The linear independence of the $M$ Pfaffians, $\theta^{A}$ means that in an open neighborhood of any $\mathbf{z} \in \mathcal{M}$, the $M$ Pfaffians must satisfy
\begin{align}
\label{subsec:numerical:pfaffiansIndependent}
\bigwedge_{A = 1}^{M} \theta^{A} \equiv \theta^{1}\wedge\dots\wedge\theta^{M} \neq 0
\end{align}
A theory with constraints defined by a collection of Pfaffians is called a \define{Pfaffian system}.  For a theory with $N$ independent coordinates, $\left\{z^{I}\right\}$, and $M$ constraints, $\mathcal{C}^{A} = 0$, the distribution $\Delta$ will have $N - M$ dimensions. An \define{integral manifold}, $\Sigma$, for a distribution $\Delta$ is defined as a submanifold of $\mathcal{M}$,  
\begin{align}
\label{subsec:numerical:integralManifold}
i: \Sigma \hookrightarrow \mathcal{M}
\end{align}
which is everywhere tangent to the distribution, allowing the integral manifold to be defined as
\begin{align}
\label{subsec:numerical:integralManifoldTangent}
\Sigma \equiv \left\{\mathbf{z}^{\prime}\left(\mathbf{z}\right) \in \mathcal{M}~\vert~\mathbf{V}\left(i\left(\mathbf{z}^\prime \right) \right) = V^{I}\p_{I}~\mathbf{z}^{\prime}\left(\mathbf{z}\right) = 0~ \forall ~\mathbf{V} \in \Delta\right\}
\end{align}
Since each Pfaffian is independent, an integral manifold $\Sigma$ can have most $N -M$ dimensions. Since $\Sigma$ is everywhere tangent to the distribution, $\Delta$, the pullback of each Pfaffian, $\theta^{A} \in T^{\star}\mathcal{M}$, must satisfy
\begin{align}
\label{subsec:numerical:vanishingPullback}
i^{\star}\left(\theta^{A}\right) = 0~\forall~\theta^{A}
\end{align}
If the Pfaffian $1-$forms, $\theta^{A}$, do not satisfy equation~(\ref{subsec:numerical:vanishingPullback}), then dual to every $i^{\star}\left(\theta^{A}\right) \neq 0  \in T^{\star}\Sigma$, would be a vector, $\mathbf{V} \in T\Sigma$, which would not belong to the distribution, $\Delta$, whence $\Sigma$ can only be an integral manifold if $i^{\star}\theta^{A} = 0$.

Consider the space of all $p$-forms over $\mathcal{M}$, written $\Omega^{p}\left(\mathcal{M}\right)$, the space of all exterior forms over $\mathcal{M}$, $\Omega^{\star}\left(\mathcal{M}\right) = \bigoplus_{k=0}^{N} \Omega^{k}\left(\mathcal{M}\right)$, and the map, 
\begin{align}
\label{subsec:numerical:exteriorDerivativeMap}
d: \Omega^{p-1} \hookrightarrow \Omega^{p}
\end{align}
satisfying $dd = d^2 = 0$, which defines the exterior derivative. Since the $M$ Pfaffians $\theta^{A} \in \Omega^{1}\left(\mathcal{M}\right)$ must satisfy equation~(\ref{subsec:numerical:vanishingPullback}), the wedge product of $p$ Pfaffian $1-$forms must form a basis in $\Omega^{p}\left(\mathcal{M}\right)$ for the space of all $p$-forms over $\mathcal{M}$ residing in the kernel of the pullback $i^{\star}: \Omega^{p}\left(\mathcal{M}\right) \rightarrow \Omega^{p}\left(\Sigma\right)$. In order for the space of all exterior forms in the kernel of the pullback $i^{\star}:\Omega^{\star}\left(\mathcal{M}\right) \rightarrow \Omega^{\star}\left(\Sigma\right)$ to be preserved under the map $d$, equation~(\ref{subsec:numerical:exteriorDerivativeMap}), the Pfaffian $1-$forms must satisfy 
\begin{align}
\label{subsec:numerical:closedPfaffians}
d\theta^{A} = -\omega^{A}~_{B}\wedge \theta^{B}
\end{align}
with connection $\omega^{A}~_{B}$ derived by the structure equations, since for any map $j:\mathcal{N} \rightarrow \mathcal{M}$, the exterior derivative, $d$, commutes with the pullback, $j^{\star}:\Omega^{\star}\left(\mathcal{M}\right) \rightarrow \Omega^{\star}\left(\mathcal{N}\right)$.  The property that the exterior derivative, $d$, commutes with the pullback of an differentiable map, $j$, yielding the relation
\begin{align}
\label{subsec:numerical:commutePullback}
j^{\star}\circ d = d \circ j^{\star}
\end{align}
is extremely useful and is a direct consequence of the exterior calculus;  it is true for any exterior derivative, $d$, and differentiable map $j$ \cite{frankel1},\cite{nakahara},\cite{jorge1},\cite{diffgeo1}. If $d\theta^{A}$ satisfies equation~(\ref{subsec:numerical:closedPfaffians}), the set of $M$ Pfaffians $1-$forms define a \define{differential ideal}
\begin{align}
\label{subsec:numerical:differentialIdeal}
\mathcal{I} \equiv \left\{\theta^{A}\right\}_{diff}
\end{align}
also written as $d\mathcal{I} \subset \mathcal{I}$. Since the distribution, $\Delta$, equation~(\ref{subsec:numerical:distribution}), is defined by the Pfaffian $1-$forms, equation~(\ref{subsec:numerical:closedPfaffians}) can also be expressed as 
\begin{align}
\label{subsec:numerical:closedPfaffiansDistrib}
d\theta^{A}\left(\mathbf{X},\mathbf{Y}\right) = \mathbf{X}\left\{\theta^{A}\left(\mathbf{Y}\right)\right\} - \mathbf{Y}\left\{\theta^{A}\left(\mathbf{X}\right)\right\} -  \theta^{A}\left(\left[\mathbf{X},\mathbf{Y}\right]\right) = -\theta^{A}\left(\left[\mathbf{X},\mathbf{Y}\right]\right) = 0~\forall~\mathbf{X},\mathbf{Y}\in \Delta
\end{align}
which is a statement about the closure of the distribution under the Lie bracket. Since $\mathbf{X},\mathbf{Y} \in \Delta$, the Pfaffian $1-$forms, $\theta^{A}$, can only form the basis for a differential ideal if the distribution, $\Delta$, is a \define{closed} vector space, meaning that the Lie bracket of any two vectors $\mathbf{X},\mathbf{Y} \in \Delta$ must satisfy
\begin{align}
\label{subsec:numerical:integrableDistribution}
\left[\mathbf{X},\mathbf{Y}\right] \equiv \mathbf{Z} \in \Delta
\end{align}
Because the distribution, $\Delta$, was defined for smooth vector fields only, the Lie bracket is well defined for all vectors $\mathbf{X},\mathbf{Y} \in \Delta$. When equation~(\ref{subsec:numerical:integrableDistribution}) is satisfied, the distribution is said to be in \define{involution}, a property often expressed as $\left[\Delta,\Delta\right] \subset \Delta$.

Comparing equation~(\ref{subsec:numerical:closedPfaffians}) to Cartan's \define{structure equations}, the $M$ independent Pfaffians, $\theta^{A}$, cannot 
form a differential ideal, $\mathcal{I}$,  if \define{torsion} is present, since non-zero torsion would yield
\begin{align}
\label{subsec:numerical:notClosedPfaffians}
d\theta^{A} = -\omega^{A}~_{B}\wedge \theta^{B} + \tau^{A}
\end{align}
with a non-zero torsion $2-$form $\tau^{A} \in \Omega^{2}\left(\mathcal{M}\right)$. The structure equations also imply that $\omega^{A}~_{B}$ should be treated as a \define{connection}, allowing a calculus in vector spaces over $\mathcal{M}$ to be defined in a coordinate independent manner. Expressing the connection, $\omega^{A}~_{B}$, as a $M\times M$ matrix of $1-$forms in $\Omega^{1}\left(\mathcal{M}\right)$, the $M$ independent Pfaffian $1-$forms, $\theta^{A} \in \Omega^{1}\left(\mathcal{M}\right)$, can be used to write $d\theta^{A}$ as a $2-$form in $\Omega^{2}\left(\mathcal{M}\right)$ as
\begin{align}
\label{subsec:numerical:apparentTorsion}
d\theta^{A} = -\omega^{A}~_{B}\wedge\theta^{B} = -\omega^{A}~_{IB} \theta^{B}~_{J} ~dz^{I}\wedge dz^{J} ~\in \Omega^{2}\left(\mathcal{M}\right)
\end{align}
Using equation~(\ref{subsec:numerical:closedPfaffians}), 
\begin{align}
\label{subsec:numerical:connectionElements}
i^{\star}d\theta^{A} = di^{\star}\theta^{A} = 0
\end{align}
showing that the $M$ $2-$forms, $d\theta^{A}$, vanish when pulled back to $\Sigma$.

A Pfaffian system is \define{integrable} whenever there exists an $N-M$ dimensional integral manifold, $\Sigma$, called a \define{maximal integral manifold}, defined in $\mathcal{M}$ by $M$ coordinates $\mathcal{C}^{A} = 0$. Pfaffian systems which are not integrable are called \define{nonintegrable} systems. 
From  equation~(\ref{subsec:numerical:connectionElements}), all $M$ Pfaffian $1-$forms, $\theta^{A}$, must be vanishing closed forms when pulled back to the integral manifold, $\Sigma$, which, from  equation~(\ref{subsec:numerical:vanishingPullback}), allows each Pfaffian $1-$form, $\theta^{A}$, to be expressed as a locally exact $1-$form
\begin{align}
\label{subsec:numerical:exactPfaffians}
i^{\star}\left(\theta^{A}\right) = i^{\star}\left(d\mathcal{C}^{A}\right) = d~i^{\star}\left(\mathcal{C}^{A}\right) = 0
\end{align}
whenever the distribution, $\Delta$, is in involution. Integrating equation~(\ref{subsec:numerical:exactPfaffians}), the $M$ constraints $\mathcal{C}^{A}$ must be constant in $\mathcal{M}$, since each belongs to the kernel of the pre-image, $i^{-1}:\mathcal{M} \rightarrow \Sigma$, a consequence of equations~(\ref{subsec:numerical:integralManifoldTangent})~and~(\ref{subsec:numerical:exactPfaffians}). Whence a Pfaffian system will be integrable whenever the distribution, $\Delta$, is in involution, or equivalently, whenever the $M$ linearly independent Pfaffians form a basis for the differential ideal, $\mathcal{I}$; a result originally proven by Frobenius \cite{frankel1},\cite{cfb}. 

As a consequence of equations~(\ref{subsec:numerical:exactPfaffians}), if the $M$ Pfaffians, $\theta^{A}$, do not form a differential ideal, $\mathcal{I}$, the $M$ constraints, $\mathcal{C}^{A} = 0 \in \Omega^{0}\left(\mathcal{M}\right)$, cannot define an $N-M$ dimensional maximal integral manifold, $\Sigma$, in $\mathcal{M}$. This is a direct  result of equation~(\ref{subsec:numerical:closedPfaffians}), since if the $M$ Pfaffians, $i^{\star}\left(\theta^{A}\right) \in T^{\star}\Sigma$, do not form a differential ideal, $\theta^{A}$ will not be locally closed and therefore cannot yield a set of $M$ vanishing exact $1-$forms in $T^{\star}\Sigma$. This means that for non-integrable systems, there can be no guarantee that the constraints $\mathcal{C}^{A}\left(\mathbf{z}\right) = 0$, expressed as functions on $\mathcal{M}$, will be preserved, even locally, as the system evolves.
%
%%%
%%%
%%%
\subsection{Gauge Invariant Vector Spaces}
\label{subsec:stability:invariant}
In order to facilitate the examination of the integrability of Hamiltonian formulations of gauge theories provided in subsection~(\ref{subsec:stability:gaugeIntegrable}), it will be useful to derive certain gauge invariant vector spaces as subspaces of the tangent bundle, $\mathcal{V}$, and cotangent bundle, $\mathcal{V}^{\star}$, of the phase space in which the Hamiltonian formulation is defined. Throughout this section, a \define{gauge invariant vector space} will be used to describe a vector space which is preserved under gauge transformations. This does not mean that elements of the vector space will be preserved under gauge transformations, only that any gauge transformation will define a bijective map from the vector space to itself. Determining the gauge invariant vector subspaces of the tangent bundle, $\mathcal{V} \equiv T\mathcal{M}$, defined for a canonical $2N$ dimensional phase space, $\mathcal{M}$, will motivate the need to introduce second class constraints, thereby fixing the gauge, allow gauge transformations in phase space to be projected onto transformations in each gauge invariant subspace, and allow general phase space transformations to be expressed as the sum of a gauge transformation and a transformation which cannot be expressed as a gauge transformation, corresponding to constraint violations. In subsection~({\ref{subsec:stability:removeErr}), a numerical method for projecting out constraint violating transformations will be introduced.

As shown in subsection~(\ref{subsec:extendedHamiltonian}), Hamiltonian formulations of gauge theories generate a set of first class constraints, $\mathcal{C}^{A} \approx 0$, which vanish on the constraint surface and commute weakly with one another as well as with the canonical Hamiltonian, $H$. As a result, there will be a Hamiltonian vector, subsection~(\ref{subsec:lagHamDyn}), associated with each first class constraint, $\mathbf{C}_{A}$, given by
\begin{align}
\label{subsec:numerical:gaugeVecField}
\mathbf{C}_{B} \equiv C^{I}~_{B}\p_{I} \equiv \delta_{BA}J^{KI}~\p_{K}\mathcal{C}^{A}~\p_{I}
\end{align}
so that
\begin{align}
\label{subsec:numerical:gaugeVecFieldEquiv}
d\mathcal{C}^{A} \equiv \frac{d\mathcal{C}^{A}}{d z^L}dz^{L} \equiv \delta^{AB}\omega^{2}\left(\mathbf{C}_{B},\cdot\right) = \delta^{AB}C^{I}~_{B}~J_{IK}~dz^{K}
\end{align}
The Hamiltonian vectors generated by first class constraints, $\mathbf{C}_{A}$, will be referred to as \define{first class Hamiltonian vectors}. For notational convenience, the vectors $\mathbf{C}_{B}$ have been expressed with the constraint index, $B$, lowered. Since the set of first class constraints is assumed to be irreducible, the first class Hamiltonian vectors must be linearly independent. In addition to being linearly independent, each first class Hamiltonian vector, $\mathbf{C}_{B}$, must be tangent to the constraint manifold since
\begin{align}
\label{subsec:numerical:gaugeVecTangent}
\mathbf{C}_{B}\left(\mathcal{C}_{A}\right) \equiv  C^{K}_{B}\p_{K}\mathcal{C}_{A} = \pb{\mathcal{C}_{B}}{\mathcal{C}_{A}} \approx 0
\end{align}
Using equations~(\ref{subsec:constraints:FCCalgebra})~and~(\ref{subsec:numerical:gaugeVecField}), the \define{Lie bracket} of any two first class Hamiltonian vectors, $\mathbf{C}_{B}$ and $\mathbf{C}_{A}$, will be
\begin{align}
\label{subsec:numerical:lieBracketGauge}
\left[\mathbf{C}_{A},\mathbf{C}_{B}\right]  & \equiv \mathbf{C}_{A}\left(\mathbf{C}_{B}\right) - \mathbf{C}_{B}\left(\mathbf{C}_{A}\right) \\\nonumber
& = \left(C^{K}_{A}\p_{K}C^{L}_{B} - C^{K}_{B}\p_{K}C^{L}_{A}\right)\p_{L} \\\nonumber
& = -J^{JL}\p_{J}\left(\Gamma^{C}_{AB}\mathcal{C}_{C}\right)~\p_{L} \approx 0
\end{align}
showing that the first class Hamiltonian vectors commute on the constraint manifold. There should be no confusion between the Lie bracket, which acts on vectors, and the Poisson bracket, which acts on functions. 

Since the first class Hamiltonian vectors are all independent and the Lie bracket of any two first class Hamiltonian vectors vanishes on the constraint manifold, the first class Hamiltonian vectors form a basis on the constraint manifold for a vector space which is a subspace of all vectors tangent to the constraint manifold and which is invariant under all gauge transformations. For a gauge theory with $2(N-M)$ first class constraints, $\left\{\mathcal{C}^{A}\right\}$, embedded into a $2N$ dimensional phase space, $\mathcal{M}$, define the vector space over the basis of $2(N-M)$ independent first class Hamiltonian vectors, $\left\{\mathbf{C}_{A}\right\}$, to be 
\begin{align}
\label{subsec:numerical:gaugeVectorSpace}
\mathcal{V}_{G} \equiv \left\{\mathbf{X} = X^{A}\mathbf{C}_{A} ~ \vert~  X^{A} \in \mathcal{M} \right\}
\end{align}
The vector space $\mathcal{V}_{G}$, restricted to the constraint manifold is a closed subspace tangent to the constraint manifold, and so must be invariant under gauge transformations because the constraint manifold itself is gauge invariant. In addition to $\mathcal{V}_{G}$ it is useful to define the dual vector space
\begin{align}
\label{subsec:numerical:gaugeDualVectorSpace}
\mathcal{V}^{\star}_{G} \equiv \left\{\tilde{\mathbf{Y}} = Y_{A}\tilde{\mathbf{W}}^{A} ~ \vert~  Y_{A} \in \mathcal{M} \right\}
\end{align}
over the basis, $\tilde{\mathbf{W}}^{B}$, satisfying
\begin{align}
\label{subsec:numerical:longitudinalOneForms}
\tilde{\mathbf{W}}^{B}\left(\mathbf{C}_{A}\right) =  \delta^{B}~_{A} 
\end{align}
The $1-$forms, $\tilde{\mathbf{W}}^{B}$, defined in equation~(\ref{subsec:numerical:longitudinalOneForms}) will be referred to as \define{first class Hamiltonian $1-$forms}. Since $\mathcal{V}^{\star}_{G}$ is dual to $\mathcal{V}_{G}$, when restricted to the constraint manifold $\mathcal{V}^{\star}_{G}$ must also be invariant under all gauge transformations. Using equation~(\ref{subsec:numerical:gaugeVecField}), the components of $\tilde{\mathbf{W}}^{B} \in \mathcal{V}^{\star}$ must satisfy
\begin{align}
\label{subsec:numerical:longitudinalOneFormsComponents}
\tilde{\mathbf{W}}^{B}\left(\mathbf{C}_{A}\right) \equiv W^{B}~_{K}~C^{L}~_{A}~ dz^{K}\left(\p_{L}\right) = W^{B}~_{K}~C^{K}~_{A} = \delta^{B}~_{A} 
\end{align}
Although the symplectic form, $\omega^2$, maps the first class Hamiltonian vectors, $\mathbf{C}_{A} \in \mathcal{V}$, to exact $1-$forms in $\mathcal{V}^{\star}_{G}$, equation~(\ref{subsec:numerical:gaugeVecField}), there is in general no canonical way to express the first class Hamiltonian $1-$forms $\tilde{\mathbf{W}}^{B} \in \mathcal{V}^{\star}$, as exact $1-$forms in $\mathcal{V}^{\star}$. Expressing the first class Hamiltonian $1-$forms uniquely as exact $1-$forms would require unique phase space functions, $w^{B}$, satisfying $\pb{w^{B}}{\mathcal{C}^{A}} = \delta^{AB}$ so that $\bar{\mathbf{W}}^{B} = dw^{B}$. Since the first class constraints are constant on the constraint manifold, the phase space functions, $w^{B}$, must also be constant, and therefore would function as second class constraints. Although there is no canonical way to express the first class Hamiltonian $1-$forms as exact forms in phase space, $2(N-M)$ independent $1-$forms satisfying equation~(\ref{subsec:numerical:longitudinalOneForms}) will always exist.

In addition to the vector space $\mathcal{V}_{G}$ and dual $\mathcal{V}^{\star}_{G}$, it will be useful to define the vector space, $\mathcal{V}_{\perp} \subset \mathcal{V}$, which is the space of all vectors $\mathbf{X} \in \mathcal{V}$ orthogonal to the constraint manifold. Using the expression for the components of the first class Hamiltonian $1-$forms, equation~(\ref{subsec:numerical:longitudinalOneFormsComponents}), along with the canonical symplectic form on $\mathcal{M}$, the space of vectors orthogonal to the first class constraint manifold, $\mathcal{V}_{\perp}$, will be defined over the basis vectors
\begin{align}
\label{subsec:numerical:orthogonalVect}
\mathbf{Y}^{B} \equiv W^{B}~_{K}~J^{KL}\p_{L}
\end{align}
so that
\begin{align}
\label{subsec:numerical:orthogonalVectorSpace}
\mathcal{V}_{\perp} \equiv \left\{\mathbf{W} = W_{A}\mathbf{Y}^{A} ~ \vert~  W_{A} \in \mathcal{M} \right\}
\end{align}
Using the definition of the first class Hamiltonian $1-$forms, equation~(\ref{subsec:numerical:longitudinalOneForms}), and first class Hamiltonian vectors, equation~(\ref{subsec:numerical:gaugeVecField}), each $\mathbf{Y}^{B}$ must satisfy
\begin{align}
\label{subsec:numerical:orthogonalProof}
\mathbf{Y}^{A}\left(\mathcal{C}^{B}\right) \equiv W^{A}~_{K}~J^{KL}~\p_{L}\mathcal{C}^{B} = \delta^{AB}
\end{align}
for one, and only one, first class constraint, so will be referred to as \define{first class orthogonal vectors}. The orthogonal vector space $\mathcal{V}_{\perp}$ will have a dual space 
\begin{align}
\label{subsec:numerical:orthogonalDualVectorSpace}
\mathcal{V}^{\star}_{\perp} \equiv \left\{\mathbf{X} = X^{A}\tilde{\mathbf{Z}}_{B} ~ \vert~  X^{A} \in \mathcal{M} \right\}
\end{align}
defined over the basis $1-$forms, $\tilde{\mathbf{Z}}_{B}$ satisfying
\begin{align}
\label{subsec:numerical:orthogonalOneForm}
\tilde{\mathbf{Z}}_{B}\left(\mathbf{Y}^{A}\right) = \delta^{A}~_{B}
\end{align}
Using equation~(\ref{subsec:numerical:orthogonalProof}), the basis $1-$forms $\tilde{\mathbf{Z}}_{B}$ can be expressed as  exact $1-$forms
\begin{align}
\label{subsec:numerical:orthogonalOneFormComponent}
\tilde{\mathbf{Z}}_{B} \equiv \delta_{AB}~d\mathcal{C}^{A}
\end{align}
The $1-$forms $\tilde{\mathbf{Z}}_{B}$ are dual to the first class orthogonal vectors and so will be called \define{first class orthogonal $1-$forms}. From equation~(\ref{subsec:numerical:orthogonalOneFormComponent}), all first class orthogonal $1-$forms must vanish as the system evolves in order for the evolution to remain on the first class constraint manifold, defined by the vanishing of the first class constraints. Since there is no canonical expression for the components of the first class Hamiltonian $1-$forms, there can be no canonical expression for the components of the basis vectors for $\mathcal{V}_{\perp}$. Since the first class constraint manifold, defined by the vanishing of the first class constraints, $\mathcal{C}^{A} \approx 0$, is gauge invariant, the vector space $\mathcal{V}_{\perp}$ and dual vector space $\mathcal{V}^{\star}_{\perp}$ must also be gauge invariant.

As a result of equation~(\ref{subsec:numerical:orthogonalProof}), the vector spaces $\mathcal{V}_{\perp} \subset \mathcal{V}$ and $\mathcal{V}_{G} \subset \mathcal{V}$ must be disjoint gauge invariant proper subspaces of $\mathcal{V}$.  Since the physical content of any gauge theory must reside on the constraint manifold, be invariant under gauge transformations, and must evolve without violating any of the first class constraints, the space of vectors tangent to the physical observables must also form a vector subspace, $\mathcal{V}_{P} \subset \mathcal{V}$, which is orthogonal to both $\mathcal{V}_{G}$ and $\mathcal{V}_{\perp}$. Define the space of vectors tangent to the physical observables, $\mathcal{V}_{P}$ so that $\mathcal{V}$ decomposes as 
\begin{align}
\label{subsec:numerical:physVectOrth}
\mathcal{V} = \mathcal{V}_{P} \oplus \mathcal{V}_{G}\oplus \mathcal{V}_{\perp} 
\end{align}
Since $\mathcal{V}$ is the sum of disjoint vector subspaces which are invariant under gauge transformations, each vector subspace will be tangent to a subspace of $\mathcal{M}$ which will be mapped to itself under gauge transformations. As a result, the canonical phase space $\mathcal{M}$ is also decomposable into a direct sum of disjoint subspaces
\begin{align}
\label{subsec:numerical:physVectOrth}
\mathcal{M} = \mathcal{M}_{P} \oplus \mathcal{M}_{G}\oplus \mathcal{M}_{\perp} 
\end{align}
which are invariant under gauge transformations. Since the tangent space to each subspace of $\mathcal{M}$ is gauge invariant, the dimensions of each the subspaces $\mathcal{M}_{P},\mathcal{M}_{G}$, and $\mathcal{M}_{\perp}$ must be equal to the dimension of the tangent subspaces $\mathcal{V}_{P},\mathcal{V}_{G}$, and $\mathcal{V}_{\perp}$ respectively. Counting the independent basis vectors for each vector subspace yields
\begin{align}
\label{subsec:numerical:physVectOrthDim}
dim\left(\mathcal{M}_{G}\right) & = dim\left(\mathcal{M}_{\perp}\right) = 2(N-M)\\\nonumber
dim\left(\mathcal{M}_{P}\right) & = dim\left(\mathcal{M}\right) - 4(N-M) = 2N - 4(N-M) = 2D
\end{align}
which is in agreement with subsection~(\ref{subsec:diracbracket}) where it was shown that any gauge theory with $2(N-M)$ first class constraints will have $2N - 4(N-M) = 2D$ gauge invariant physical variables corresponding to $D$ degrees of freedom.
%
%%%
%%%
%%%
\subsection{Integrability of Gauge Theories}
\label{subsec:stability:gaugeIntegrable}
Consider the Hamiltonian formulation of a gauge theory with $D$ degrees of physical freedom expressed in a $2N$ dimensional phase space, $\mathcal{M}$, with coordinates, $\left\{z^{I}\right\}$. As shown in subsection~(\ref{subsec:constraints}), the canonical Hamiltonian formulation of a gauge theory with $D$ degrees of freedom will have $2(N-M)$ first class constraints, $\mathcal{C}^{A} \approx 0$, and $2(N-M)$ undetermined multipliers corresponding to the available gauge freedom. In order to show integrability for this system, as defined in subsection~(\ref{subsec:stability:integrability}), it is necessary to show that there exists a map from the $2N$.dimensional phase space $\mathcal{M}$, with coordinates $\left\{z^{I}\right\}$, to a $2N$ dimensional phase space, $\bar{\mathcal{M}}$, with canonical coordinates defined by $2(N-M)$ canonical pairs $\left(\bar{g}_{A},\bar{\mathcal{C}}^{A}\right)$ and $D$ canonical pairs $\left(\bar{q}^{a},\bar{p}_{a}\right)$, in which the $2(N-M)$ canonical momenta $\bar{C}^{A}$, corresponding to the first class constraints, are constant, so that $d\bar{\mathcal{C}}^{A} = 0$ as the system evolves, ensuring that the system remains on the first class constraint manifold. Following the decomposition of the canonical phase space defined in subsection~(\ref{subsec:stability:invariant}), the $2N$ dimensional canonical phase space, $\bar{\mathcal{M}}$, with canonical coordinates $\left\{\bar{q}^{a},\bar{p}_{b},\bar{g}_{A},\bar{\mathcal{C}}^{B}\right\}$, decomposes into $\bar{\mathcal{M}} = \bar{\mathcal{M}}_{P} \oplus\bar{\mathcal{M}}_{G} \oplus\bar{\mathcal{M}}_{\perp}$ with coordinates   
\begin{align}
\label{subsec:numerical:canonicalDecomposition}
\left(\bar{q}^{a},\bar{p}_{b}\right) \in \bar{\mathcal{M}}_{P}\\
\left(\bar{\mathcal{C}}^{A}\right) \in \bar{\mathcal{M}}_{G}\\
\left(\bar{g}_{B}\right) \in \bar{\mathcal{M}}_{\perp}
\end{align}
Using this decomposition, the tangent bundle, $\bar{\mathcal{V}} \equiv T\bar{\mathcal{M}}$, also decomposes, $\bar{\mathcal{V}} =  \bar{\mathcal{V}}_{P} \oplus\bar{\mathcal{V}}_{G} \oplus\bar{\mathcal{V}}_{\perp}$. Using the canonical basis for $\bar{\mathcal{V}}$ defined by $\left\{\p_{\bar{q}^{a}},\p_{\bar{p}_{b}},\p_{\mathcal{\bar{C}}^{A}},\p_{\bar{g}_{B}}\right\}$ the vector subspaces can be expressed as
\begin{align}
\label{subsec:numerical:canonicalVectorDecomposition}
\bar{\mathcal{V}}_{P} & \equiv \left\{\bar{\mathbf{P}} = P^{a}\p_{\bar{q}^{a}} + P^{b}\p_{\bar{p}_{b}} \in \bar{\mathcal{V}} ~\vert~ \bar{\mathbf{P}}\left(\bar{\mathbf{z}}\right) = 0~ \forall~ \bar{\mathbf{z}} \in \bar{\mathcal{M}}_{P}\right\} \\
\bar{\mathcal{V}}_{G} & \equiv \left\{\bar{\mathbf{C}} = C^{A}\p_{\bar{g}_{A}} \in \bar{\mathcal{V}}~\vert~ \bar{\mathbf{C}}\left(\bar{\mathbf{z}}\right) = 0~ \forall~ \bar{\mathbf{z}} \in \bar{\mathcal{M}}_{G}\right\} \\
\bar{\mathcal{V}}_{\perp} & \equiv \left\{\bar{\mathbf{Y}} = Y^{B}\p_{\bar{\mathcal{C}}^{B}} \in \bar{\mathcal{V}}~\vert~ \bar{\mathbf{Y}}\left(\bar{\mathbf{z}}\right) = 0~ \forall~ \bar{\mathbf{z}} \in \bar{\mathcal{M}}_{\perp}\right\}
\end{align}
Note that these definitions do not require that the vector subspaces defined above be closed, meaning that the Lie bracket of any two vector fields in a given subspace can yield a vector field which does not belong to the subspace. 

In order to examine the integrability of the Hamiltonian formulation of a gauge theory, it is necessary to express all constraints as Pfaffian $1-$forms. Begin by using each independent first class constraint, $\mathcal{C}^{A} \approx 0$, to define a $1-$form
\begin{align}
\label{subsec:numerical:FCCPfaffian}
\theta^{A} \equiv d\bar{\mathcal{C}}^{A} - \p_{K}\mathcal{C}^{A~}dz^{K}
\end{align}
which is an element of the cotangent bundle over the product space $\mathcal{M} \times \bar{\mathcal{M}}$. From subsection~(\ref{subsec:stability:invariant}), the $2(N-M)$ $1-$forms defined by equation~(\ref{subsec:numerical:FCCPfaffian}), each satisfy
\begin{align}
\label{subsec:numerical:FCCPfaffianVanish}
\theta^{A}\left(\mathbf{C}_{B}\right) \approx 0
\end{align}
for all first class Hamiltonian vectors, $\mathbf{C}_{A}$, as defined in equation~(\ref{subsec:numerical:gaugeVecField}). As a result of the independence of the first class constraints, the $2(N-M)$ $1-$forms defined in equation~(\ref{subsec:numerical:FCCPfaffian}) will be linearly independent, thereby satisfying equation~(\ref{subsec:numerical:pfaffiansIndependent}). Since the first class Hamiltonian vectors, $\mathbf{C}_{A}$, are all independent and all tangent to the constraint manifold, $\mathbf{C}_{B}\left(\mathcal{C}^{A}\right) \approx 0$, equation~(\ref{subsec:numerical:gaugeVecTangent}), the $2(N-M)$ $1-$forms, $\theta^{A}$, will define a distribution, $\Delta$, which includes all vectors tangent to the constraint manifold. Since the distribution defined by the Pfaffian $1-$forms $\theta^{A}$ includes all vectors tangent to the constraint manifold, thereby including all infinitesimal gauge transformations, and since the Pfaffian $1-$forms $\theta^{A}$ are linearly independent on the constraint manifold, the Pfaffian system for the canonical Hamiltonian formulation will be defined by the $2(N-M)$ independent $1-$forms $\theta^{A}$ given by equation~(\ref{subsec:numerical:FCCPfaffian}). 

The canonical Pfaffian system generated by the $2(N-M)$ independent first class constraints yields a distribution in $\mathcal{V}$ defined as
\begin{align}
\label{subsec:numerical:FCCDistribution}
\Delta_{c} \equiv \left\{\mathbf{V} \in \mathcal{V}~\vert~\theta^{A}\left(\mathbf{V}\right) \approx 0~\forall~\theta^{A}\right\}
\end{align}
From equation~(\ref{subsec:numerical:FCCPfaffian}), the distribution $\Delta_{c}$ will include all vectors in $\mathcal{V}$ which are tangent to the constraint manifold, and so must include the gauge invariant vector spaces, $\mathcal{V}_{P}$ and $\mathcal{V}_{G}$, defined in subsection~(\ref{subsec:stability:invariant}), In addition to the space of vectors tangent to the constraint manifold, the distribution $\Delta_{c}$ must include all first class orthogonal vectors, $\mathbf{Y}^{D}$, defined by equation~(\ref{subsec:numerical:orthogonalVect}), with coefficients which vanish on the constraint manifold. As a result, the distribution, $\Delta_{c}$, will be given by
\begin{align}
\label{subsec:numerical:FCCDistributionCompose}
\Delta_{c} = \mathcal{V}_{P}\oplus\mathcal{V}_{G} \oplus \mathcal{V}^{0}_{\perp}
\end{align}
with
\begin{align}
\label{subsec:numerical:FCOVFzeroSection}
\mathcal{V}^{0}_{\perp} \equiv \left\{\mathbf{W} = W_{A}\mathbf{Y}^{A} \in \mathcal{V}_{\perp}~\vert~ W_{A} = 0\right\}
\end{align}
which is the space of all vectors tangent to the first class orthogonal vectors, $\mathbf{Y}^{A}$, having coefficients which vanish as a result of a particular choice of gauge, $W_{A}\left(\bar{\mathbf{g}}\right) = 0$. The vanishing coefficients, $W_{A}\left(\bar{\mathbf{g}}\right) = 0$, correspond to possible second class constraints. This vector space can be defined throughout phase space since any vector with vanishing coefficients will belong to the distribution, and so for a given constraint manifold defines a vector field which extends off of the constraint manifold. Although $\mathcal{V}_{\perp}$ is a gauge invariant vector space, it cannot be expected that $\mathcal{V}^{0}_{\perp}$ will be gauge invariant since the vanishing coefficients, $W_{A} = 0$, will generically be dependent upon the gauge freedom present in the theory. The canonical Pfaffian system defined by the $2(N-M)$ $1-$forms of equation~(\ref{subsec:numerical:FCCPfaffian}) defines a similar distribution in $\bar{\mathcal{V}}$.

In subsection~(\ref{subsec:stability:integrability}) it was shown that a Pfaffian system will only be integrable if the distribution, $\Delta$, generated by the Pfaffian $1-$forms is in involution, $\left[\Delta,\Delta\right] \subset \Delta$. Using equation~(\ref{subsec:numerical:FCCDistributionCompose}), the Pfaffian system defined for a canonical Hamiltonian formulation yields the following theorem
\begin{thm}
\label{subsec:numerical:theorem:gaugeIntegrability}
Hamiltonian systems with gauge freedom cannot be integrable.
\end{thm}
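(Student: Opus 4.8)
The plan is to reduce integrability to a Frobenius condition on the distribution $\Delta_{c}$ and then exhibit an explicit pair of vectors in $\Delta_{c}$ whose Lie bracket escapes it. By the results of subsection~(\ref{subsec:stability:integrability}), the canonical Pfaffian system $\{\theta^{A}\}$ of equation~(\ref{subsec:numerical:FCCPfaffian}) is integrable if and only if $\Delta_{c}$ is in involution, $[\Delta_{c},\Delta_{c}]\subset\Delta_{c}$; and by equation~(\ref{subsec:numerical:FCCDistributionCompose}) one has the decomposition $\Delta_{c}=\mathcal{V}_{P}\oplus\mathcal{V}_{G}\oplus\mathcal{V}^{0}_{\perp}$. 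Since a gauge theory has, by hypothesis, at least one nontrivial first class constraint, $\mathcal{V}_{G}$ is nonempty, so the natural candidates to test involution are two first class Hamiltonian vectors $\mathbf{C}_{A},\mathbf{C}_{B}\in\mathcal{V}_{G}$.

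Next I would take the Lie bracket computed in equation~(\ref{subsec:numerical:lieBracketGauge}),
\[
\left[\mathbf{C}_{A},\mathbf{C}_{B}\right]=-J^{JL}\,\partial_{J}\!\left(\Gamma^{C}_{AB}\,\mathcal{C}_{C}\right)\partial_{L},
\]
and expand by Leibniz into a multiple of $\mathbf{C}_{C}$, which lies in $\mathcal{V}_{G}\subset\Delta_{c}$, plus a residual $R^{L}_{AB}\,\partial_{L}$ with $R^{L}_{AB}\equiv -J^{JL}\left(\partial_{J}\Gamma^{C}_{AB}\right)\mathcal{C}_{C}$. The residual vanishes weakly, on the first class constraint manifold, but is a nonzero vector field off it whenever the structure functions $\Gamma^{C}_{AB}$ depend on the phase space coordinates, i.e. for essentially every theory with genuine gauge freedom. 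Because $\partial_{J}\Gamma^{C}_{AB}$ is an arbitrary covector, the direction $J^{JL}\partial_{L}$ generically has a component transverse to the constraint manifold, and for $R^{L}_{AB}\partial_{L}$ to sit inside $\Delta_{c}=\mathcal{V}_{P}\oplus\mathcal{V}_{G}\oplus\mathcal{V}^{0}_{\perp}$ that transverse component would have to lie in $\mathcal{V}^{0}_{\perp}$. But $\mathcal{V}^{0}_{\perp}$ is \emph{not} gauge invariant (subsection~(\ref{subsec:stability:invariant})): its defining condition $W_{A}(\bar{\mathbf{g}})=0$ is a choice of gauge, and no gauge has been chosen, which is precisely what ``gauge freedom'' means. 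Hence $[\mathbf{C}_{A},\mathbf{C}_{B}]\notin\Delta_{c}$ in general, $\Delta_{c}$ fails to be involutive, and Frobenius gives nonintegrability. Equivalently, comparing $d\theta^{A}$ with Cartan's structure equations~(\ref{subsec:numerical:notClosedPfaffians}), the $\partial_{J}\Gamma^{C}_{AB}$ piece plays the role of a nonvanishing torsion $\tau^{A}$, so the $\theta^{A}$ cannot close into a differential ideal.

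The step I expect to be the main obstacle is tightening the ``escapes $\Delta_{c}$'' claim while honoring the weak/strong distinction: the offending vector field vanishes on the constraint surface itself, so the argument must be made on the first transverse jet, showing that no admissible adjustment of the subbundle $\mathcal{V}^{0}_{\perp}$ — short of actually imposing second class constraints — can absorb $R^{L}_{AB}\partial_{L}$. This is exactly where the non-gauge-invariance of $\mathcal{V}^{0}_{\perp}$, together with the arbitrariness of the Lagrange multipliers embodied in $H_{E}$, carries the weight. Once that is settled, the contrapositive — that fixing the gauge by a minimal set of second class constraints pins down $\mathcal{V}^{0}_{\perp}$ and restores involution — should follow immediately as the promised corollary.
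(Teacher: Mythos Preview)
Your candidate pair $\mathbf{C}_{A},\mathbf{C}_{B}\in\mathcal{V}_{G}$ does not do the job. From your own expansion the residual is $R^{L}_{AB}=-J^{JL}(\partial_{J}\Gamma^{C}_{AB})\,\mathcal{C}_{C}$, so applying any Pfaffian gives $\theta^{D}(R)=-R^{K}\partial_{K}\mathcal{C}^{D}\propto\mathcal{C}_{C}\approx 0$; by the very definition~(\ref{subsec:numerical:FCCDistribution}) this places $[\mathbf{C}_{A},\mathbf{C}_{B}]$ back inside $\Delta_{c}$. Appealing to a ``first transverse jet'' does not help: involutivity for Frobenius is tested pointwise against the $\theta^{A}$, and those vanish on your bracket. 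Worse, the argument collapses entirely for abelian or genuinely Lie-type gauge groups, where the $\Gamma^{C}_{AB}$ are constant (or zero) and $\partial_{J}\Gamma^{C}_{AB}=0$ identically; Electrodynamics---the paper's central example---has $\Gamma^{C}_{AB}=0$, yet the theorem must still apply to it. So the claim ``essentially every theory with genuine gauge freedom'' has non-constant structure functions is false, and the proof cannot rest on it.

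The paper's proof pairs a gauge vector with a vector from the \emph{other} summand in~(\ref{subsec:numerical:FCCDistributionCompose}): take $\mathbf{X}=\mathbf{C}_{A}\in\mathcal{V}_{G}$ and $\mathbf{Y}=W_{A}\mathbf{Y}^{A}\in\mathcal{V}^{0}_{\perp}$ with the coefficient chosen so that $\pb{\mathcal{C}^{A}}{W_{A}}\neq 0$. Then $[\mathbf{X},\mathbf{Y}]\approx\pb{\mathcal{C}^{A}}{W_{A}}\,\mathbf{Y}^{A}$, and by~(\ref{subsec:numerical:orthogonalProof}) this hits $\theta^{A}$ with a nonvanishing value, so it genuinely escapes $\Delta_{c}$. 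The mechanism is that the gauge flow transports you off the $W_{A}=0$ surface; this uses only the existence of a first class constraint and a noncommuting coefficient, not any property of the structure functions, and therefore covers the abelian case uniformly. The key lemma you are missing is precisely the pairing $\mathcal{V}_{G}\times\mathcal{V}^{0}_{\perp}\to\mathcal{V}_{\perp}\setminus\mathcal{V}^{0}_{\perp}$, not $\mathcal{V}_{G}\times\mathcal{V}_{G}$.
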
 
In order to prove theorem~\ref{subsec:numerical:theorem:gaugeIntegrability}, it is sufficient to show that any Pfaffian system which contains a Pfaffian $1-$form, $\theta^{A}$, generated by any first class constraint, $\mathcal{C}^{A} \approx 0$, will result in a distribution, $\Delta_{g}$, which cannot be in involution.
\begin{proof}
For any independent Pfaffian, $\theta^{A}$, corresponding to a first class constraint, $\mathcal{C}^{A}$, through equation~(\ref{subsec:numerical:FCCPfaffian}), define vector fields $\mathbf{X} \equiv \mathbf{C}_{A}$ and $\mathbf{Y} \equiv W_{A}\mathbf{Y}^{A} \in \mathcal{V}^{0}_{\perp}$ such that
\begin{align}
\label{subsec:numerical:SCCBracketVect}
\pb{\mathcal{C}^{A}}{W_{A}} \neq 0
\end{align}
in some open region of phase space. The vectors $\mathbf{X}$ and $\mathbf{Y}$ both satisfy, $\theta^{A}\left(\mathbf{X}\right) = \theta^{A}\left(\mathbf{Y}\right) = 0$, on the constraint manifold, and so $\mathbf{X},\mathbf{Y} \in \Delta_{c}$. Using the definitions for the first class Hamiltonian vectors, equations~(\ref{subsec:numerical:gaugeVecField}), and first class orthogonal vectors, equation~(\ref{subsec:numerical:orthogonalVect}), the Lie bracket of the vector fields $\mathbf{X}$ and $\mathbf{Y}$ generates a vector field defined by
\begin{align}
\label{subsec:numerical:proofLieBracket}
\mathbf{Z}\equiv \left[\mathbf{X},\mathbf{Y}\right] \approx C^{L}~_{A}~W^{A}~_{K}~\p_{L}\left(s_{A}\right)~J^{KI}\p_{I} = \pb{\mathcal{C}^{A}}{W_{A}}\mathbf{Y}^{A}
\end{align}
Using the property that first class orthogonal vectors are orthogonal to the constraint manifold, equation~(\ref{subsec:numerical:orthogonalProof}), along with equation~(\ref{subsec:numerical:SCCBracketVect}), the Pfaffian $1-$form, $\theta^{A}$, equation~(\ref{subsec:numerical:FCCPfaffian}), and vector field $\mathbf{Z}$, equation~(\ref{subsec:numerical:proofLieBracket}), will satisfy
\begin{align}
\label{subsec:numerical:proofNonintegrable}
\theta^{A}\left(\mathbf{Z}\right) \neq 0
\end{align}
everywhere in some open region of phase space. As a result, $\mathbf{Z} ~\displaystyle{\not}{\in} \Delta_{c}$,  showing that the distribution, $\Delta_{c}$, cannot be in involution, $\left[\Delta_{c},\Delta_{c}\right] \displaystyle{\not}{\subset} \Delta_{c}$, anywhere in this open region of phase space. Whence the Pfaffian system is nonintegrable. Since this has been shown for any choice of first class constraint, $\mathcal{C}^{A}$, it will be true for all first class constraints, proving that any Hamiltonian formulation with gauge freedom cannot be integrable.
\end{proof}

As shown is subsection~(\ref{subsec:diracbracket}), Hamiltonian formulations of gauge theories with $2(N-M)$ first class constraints, corresponding to the generators of gauge transformations, can be gauge fixed by the introduction of $2(N-M)$ constraints which yield an invertible matrix of commutation relations with the $2(N-M)$ first class constraints. As a result, once the gauge has been completely fixed the $2(N-M)$ original first class constraints are converted to second class constraints, yielding a total of $4(N-M)$ second class constraints given by the $4(N-M)$ independent phase space functions $\mathcal{C}^{A} = \mathcal{S}_{B} = 0$ which satisfy the $2(N-M)$ equations $\pb{\mathcal{C}^{A}}{\mathcal{S}_{A}} \neq 0$ everywhere in some neighborhood of the original first class constraint manifold. The remaining independent components of the original $2N$ dimensional canonical phase space are given by the $4M - 2N \equiv 2D$ physical observables which, at each point on the constraint manifold, locally form a symplectic manifold with $D$ degrees of physical freedom. As a consequence, theorem~\ref{subsec:numerical:theorem:gaugeIntegrability} has the following corollary
\begin{cor}
\label{subsec:numerical:theorem:gaugeFixedIntegrability}
Hamiltonian formulations of gauge theories with all gauge freedom uniquely fixed through the Dirac bracket will be integrable.
\end{cor}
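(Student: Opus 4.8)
The plan is to reuse the Pfaffian construction of subsection~(\ref{subsec:stability:gaugeIntegrable}) that established Theorem~\ref{subsec:numerical:theorem:gaugeIntegrability}, now applied to the enlarged constraint set produced by gauge fixing, and to show that the obstruction to involution found there has been removed. After imposing a minimal set of second class constraints as in subsection~(\ref{subsec:diracbracket}), the system carries the $4(N-M)$ strongly vanishing constraints $\vec{\mathcal{D}} = \left\{\mathcal{C}^{1},\dots,\mathcal{C}^{2(N-M)},\mathcal{S}_{1},\dots,\mathcal{S}_{2(N-M)}\right\} = 0$ with an invertible constraint commutation matrix $D_{AB}$. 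Following equation~(\ref{subsec:numerical:FCCPfaffian}), I would associate to each $\mathcal{D}_{A} = 0$ a Pfaffian $1-$form $\vartheta_{A} \equiv d\bar{\mathcal{D}}_{A} - \p_{K}\mathcal{D}_{A}~dz^{K}$ and let $\Delta_{D}$ denote the distribution these $4(N-M)$ independent forms annihilate. Its dimension is $2N - 4(N-M) = 2D$, matching the number of physical observables counted in equation~(\ref{subsec:numerical:physVectOrthDim}).

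First I would identify $\Delta_{D}$ explicitly, using the decomposition of subsection~(\ref{subsec:stability:invariant}). The decisive point is that $\Delta_{D}$, unlike the canonical distribution $\Delta_{c}$ of equation~(\ref{subsec:numerical:FCCDistributionCompose}), no longer contains the first class Hamiltonian vectors $\mathbf{C}_{A}$ of equation~(\ref{subsec:numerical:gaugeVecField}): the Pfaffian built from a gauge-fixing constraint $\mathcal{S}_{B}$ evaluates on $\mathbf{C}_{A}$ to a term proportional to $\mathbf{C}_{A}(\mathcal{S}_{B}) = \pb{\mathcal{C}_{A}}{\mathcal{S}_{B}}$, which is nonzero precisely by the gauge-fixing condition. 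By the same token the first class orthogonal vectors $\mathbf{Y}^{A}$ of equation~(\ref{subsec:numerical:orthogonalVect}) are expelled. What survives, restricted to the constraint manifold, is the physical subspace $\mathcal{V}_{P}$ of the decomposition~(\ref{subsec:numerical:physVectOrth}) together with the evolution vector field generated by $H_{FC}$ through the Dirac bracket, which is tangent to the constraint manifold because $\pb{\mathcal{D}_{A}}{H_{FC}}_{D} = 0$ by equation~(\ref{subsec:diracbracket:diracBracketConst}); the count $2D$ is consistent.

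Next I would verify involution, $\left[\Delta_{D},\Delta_{D}\right] \subset \Delta_{D}$ in the sense of equation~(\ref{subsec:numerical:integrableDistribution}). Restricted to the constraint manifold this is the statement that $\mathcal{V}_{P}$ is closed under the Lie bracket. This follows because the Dirac bracket satisfies the Jacobi identity, subsection~(\ref{subsec:diracbracket}), so the $2D$-dimensional constraint manifold is a genuine symplectic manifold; by Darboux's theorem \cite{jorge1} it admits local canonical coordinates $(\bar{q}^{a},\bar{p}_{a})$ in which $\mathcal{V}_{P}$ is spanned by the commuting coordinate fields $\p_{\bar{q}^{a}}, \p_{\bar{p}_{a}}$, so every Lie bracket closes. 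Away from the constraint manifold one extends $\Delta_{D}$ by zero-section directions in the spirit of equation~(\ref{subsec:numerical:FCOVFzeroSection}); the essential input needed there is that the uniquely fixed dynamics conserve every constraint strongly, $\dot{\mathcal{D}}_{A} = \pb{\mathcal{D}_{A}}{H_{FC}}_{D} = 0$, so that each $d\mathcal{D}_{A}$ pulls back to a vanishing exact $1-$form on any leaf and the $\vartheta_{A}$ satisfy a relation of the form of equation~(\ref{subsec:numerical:closedPfaffians}) with no torsion term, i.e.\ form a differential ideal.

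Finally, Frobenius's theorem as used in subsection~(\ref{subsec:stability:integrability}), equation~(\ref{subsec:numerical:exactPfaffians}), yields a maximal integral manifold for a distribution in involution, and the $2D$-dimensional submanifold cut out by $\mathcal{D}_{A} = 0$ is such a manifold; hence the gauge-fixed Hamiltonian system is integrable. I expect the main obstacle to be the extension of $\Delta_{D}$ and the check of involution \emph{off} the original first class constraint manifold, since the auxiliary vector spaces of subsection~(\ref{subsec:stability:invariant}) are themselves gauge dependent there. The clean way around it is to argue directly from equation~(\ref{subsec:diracbracket:diracBracketConst}): since $\pb{F}{\mathcal{D}_{A}}_{D} = 0$ for \emph{every} phase space function $F$, the $\mathcal{D}_{A}$ are strict --- not merely weak --- constants of the Dirac-bracket motion, which by the reasoning of subsection~(\ref{subsec:stability:integrability}) is exactly the condition that makes the Pfaffian system integrable.
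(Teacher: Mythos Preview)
Your proposal is correct and reaches the same endpoint, but it takes a noticeably longer path than the paper. The paper dispenses with the product-space Pfaffians of equation~(\ref{subsec:numerical:FCCPfaffian}) entirely: once the gauge is fixed, all $4(N-M)$ constraints $\mathcal{C}^{A} = \mathcal{S}_{B} = 0$ are \emph{strongly} vanishing, so the Pfaffians can be taken directly on $\mathcal{M}$ as the exact $1$-forms $\theta^{A} = d\mathcal{C}^{A}$ and $\theta_{B} = d\mathcal{S}_{B}$. Exactness gives $d\theta = 0$ immediately, so equation~(\ref{subsec:numerical:closedPfaffians}) holds trivially with vanishing connection and the Pfaffians form a differential ideal without any further work. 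The distribution is then identified with $\mathcal{V}_{P}$ by the decomposition of subsection~(\ref{subsec:stability:invariant}), and Frobenius finishes the argument in one line.

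Your route---verifying involution of $\Delta_{D}$ by invoking the Jacobi identity for the Dirac bracket and Darboux's theorem to produce commuting coordinate fields on the reduced symplectic manifold---is sound and has the virtue of making the symplectic geometry of the reduced phase space explicit. But it is more machinery than the statement requires, and the off-shell extension you flag as a potential obstacle simply does not arise in the paper's argument: the exactness of $d\mathcal{D}_{A}$ holds throughout $\mathcal{M}$, not merely on the constraint surface, so no separate off-shell analysis is needed. The key simplification you missed is that strong (as opposed to weak) vanishing of the constraints lets you bypass the product-space construction altogether.
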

\begin{proof}
Using the $4(N-M)$ second class constraints, $\mathcal{C}^{A} = \mathcal{S}_{B} = 0$, define $4(N-M)$ Pfaffian $1-$forms 
\begin{align}
\label{subsec:numerical:integrablePfaffians}
\theta^{A} \equiv d\mathcal{C}^{A}\\\nonumber
\theta_{B} \equiv d\mathcal{S}_{B}
\end{align}
From equation~(\ref{subsec:numerical:FCCDistributionCompose}), the distribution generated by this Pfaffian system must be $\mathcal{V}_{P}$, the vector space tangent to the space of physical observables. Since the $4(N-M)$ Pfaffians are exact, equation~(\ref{subsec:numerical:integrablePfaffians}), they must form the basis for a differential ideal, $\mathcal{I}$. Therefore the manifold defined by the $2D$ physical observables will form a $2N - 4(N-M) = 2D$ dimensional integral manifold, $\Sigma_{P}$, which is of maximal rank. Whence, Hamiltonian formulations of gauge theories in which all gauge freedom has been uniquely fixed through the Dirac bracket will be integrable.
\end{proof}
It is important to note that theorem~(\ref{subsec:numerical:theorem:gaugeFixedIntegrability}), along with the corollary~(\ref{subsec:numerical:theorem:gaugeFixedIntegrability}), were proven without reference to a particular  set of constraints, only the existence of first class constraints when gauge freedom is present and the ability to define phase space functions which uniquely fix the gauge freedom, and thereby fail to commute with the first class constraints. 

Once the gauge has been fixed, a unique invertible map from $\mathcal{M}$ to $\bar{\mathcal{M}}$ can be locally defined everywhere in some neighborhood of the constraint manifold. As a result, elements of $\mathcal{V}$ can be decomposed into contributions from the subspaces $\mathcal{V}_{P}, \mathcal{V}_{G}$ and $\mathcal{V}_{\perp}$, which are defined in terms of a given constraint manifold. A particularly useful consequence of this is the ability to construct a path in phase space between any two solutions residing in the same open region on which the Dirac bracket is invertible, thus allowing any phase space variation violating the constraints to be identified uniquely and removed. This feature will be explored further in subsection~(\ref{subsec:stability:removeErr}). Without fixing the gauge, there would be no way to uniquely specify a map defining the phase space components $\bar{q}^{a},\bar{p}_{b},\bar{g}_{B} \in \bar{\mathcal{M}}$ as functions of the phase space coordinates $z^{I} \in \mathcal{M}$ and therefore no way to restrict the evolution to the first class constraint manifold. If it were possible to completely restrict the evolution to the constraint manifold, any Hamiltonian formulation, with or without gauge freedom, would be integrable, but manifestly restricting to the constraint manifold would require the first class constraints to strongly vanish, in contradiction with the definition of the first class constraints. 
%
%%%
%%%
%%%
\subsection{Hyperbolicity}
\label{subsec:numerical:hyperbolicity}
This subsection assumes that the reader is familiar with the notions of hyperbolicity and well-posedness for differential systems, topics which are thoroughly covered elsewhere \cite{kgo}. Additionally, in section~(\ref{subsec:examples}) the reader will be assumed to be familiar with basic pseudo-differential methods which are necessary to define hyperbolicity and well-posedness for second order partial differential systems \cite{nor},\cite{brown3}. Throughout these notes, only hyperbolic gauge theories will be considered.

Suppose a complete set of time independent second class constraints has been imposed, uniquely fixing all gauge freedom. The resulting gauge fixed extended Hamiltonian will generate the following equations of motion
\begin{align}
\label{subsec:numerical:eomExtended}
\dot{\mathbf{z}} = \pb{\mathbf{z}}{H} + \pb{\mathbf{z}}{\lambda^{A}\mathcal{C}_{A}} + \pb{\mathbf{z}}{\gamma_{B}\mathcal{S}^{B}}
\end{align}
where $\mathcal{C}_{A}, \mathcal{S}^{B}$ denote the set of first class and second class constraints respectively and $\lambda^{A},\gamma_{B}$ denote their Lagrange multipliers. As shown in subsection~(\ref{subsec:diracbracket}), all $4(N-M)$ Lagrange multipliers are defined as phase space functions through the Dirac bracket, manifestly satisfying
\begin{align}
\label{subsec:numerical:eomExtendedDBEquiv}
\frac{d \mathcal{C}_{A}}{dt} & = \pb{\mathcal{C}_{A}}{H_{E}} = \pb{\mathcal{C}_{A}}{H}_{D} = 0\\\nonumber
\frac{d \mathcal{S}^{B}}{dt} & = \pb{\mathcal{S}^{B}}{H_{E}} = \pb{\mathcal{S}^{B}}{H}_{D} = 0
\end{align}
Since the gauge fixed extended Hamiltonian, $H_{E}$, preserves all $4(N-M)$ constraints, the Hamiltonian vector field generated by $H_{E}$ must simultaneously satisfy all $4(N-M)$ independent Pfaffian $1-$forms generated by the constraints, subsection~(\ref{subsec:stability:gaugeIntegrable}), and therefore must be an element of the distribution, $\Delta$. Using the results of subsection~(\ref{subsec:stability:gaugeIntegrable}), because the gauge has been fixed through the introduction of a complete set of second class constraints, any vector belonging to the distribution, $\mathbf{V} \in \Delta$, must have vanishing coefficients for all components tangent to any first class Hamiltonian vector, $\mathbf{C}_{A}$,  or any first class orthogonal vector, $\mathbf{Y}^{B}$. As a result, given a solution $\mathbf{z}\left(t_{0}\right)$ at time $t_{0}$ simultaneously satisfying all first class constraints and imposed second class constraints, along the phase space flow generated by the gauge fixed extended Hamiltonian, equation~(\ref{subsec:numerical:eomExtended}), the $2(N-M)$ variational vectors
\begin{align}
\label{subsec:numerical:fccVariationalVectors}
\delta_{A}\mathbf{z} & \equiv \pb{\mathbf{z}}{\mathcal{C}_{A}} \equiv \delta_{AB}\mathbf{C}^{B}\left(\mathbf{z}\right) \\\nonumber
& \mathbf{C}^{A} \in \mathcal{V}_{G}
\end{align}
generated by the original first class constraints as well as the $2(N-M)$ variational vectors
\begin{align}
\label{subsec:numerical:sccVariationalVectors}
\delta^{B}\mathbf{z} & \equiv \pb{\mathbf{z}}{\mathcal{S}^{B}} \equiv \mathbf{S}^{B}\left(\mathbf{z}\right)\\\nonumber
& \mathbf{S}^{B} \in \mathcal{V}_{\perp}
\end{align}
generated by the imposed second class constraints must be each be preserved. The invertibility of the Dirac bracket ensures that these variational vectors are preserved since each of the $4(N-M)$ constraints must simultaneously commute with the gauge fixed extended Hamiltonian and vanish strongly, $\mathcal{C}_{A} = \mathcal{S}^{B} = 0$, throughout some open neighborhood of the initial constraint manifold. Since the $4(N-M)$ variational vectors of equations~(\ref{subsec:numerical:fccVariationalVectors})~and~(\ref{subsec:numerical:sccVariationalVectors}) are generated by $4(N-M)$ independent tangent vectors, forming a basis for the vector subspace $\mathcal{V}_{G}\oplus \mathcal{V}_{\perp} \subset \mathcal{V}$, on any given second class constraint manifold all variations tangent to any of the $4(N-M)$ variational vectors must vanish otherwise, as the system evolves, some of the $4(N-M)$ constraints defining the second class constraint manifold would be violated. The evolution equations generated by the gauge fixed extended Hamiltonian, which keep all $4(N-M)$ constraints constant, therefore ensure that no variations tangent to those of equations~(\ref{subsec:numerical:fccVariationalVectors})~or~(\ref{subsec:numerical:sccVariationalVectors}) enter the system.

Since the $4(N-M)$ independent variational vectors are preserved under the flow generated by the gauge fixed extended Hamiltonian, each variational vector must correspond to an independent eigenvector for the system of evolution equations generated by the gauge fixed extended Hamiltonian. Because all $4(N-M)$ constraints strongly commute with the gauge fixed extended Hamiltonian, the phase space flow generated by $H_{E}$ will remain on the initial second class constraint manifold. Since each of the $4(N-M)$  preserved variational vectors must have vanishing coefficients on the initial second class constrain manifold, each of the corresponding eigenvectors must each yield a zero eigenvalue. This should be an anticipated result since each of the $4(N-M)$ constraints defines a constant of motion, corresponding to a value in phase space which propagates with zero speed. By assumption the Hamiltonian formulation is hyperbolic, therefore the remaining $2N - 4(N-M) = 2D$ independent eigenvectors will be given by the $2D$ complex variational vectors
\begin{align}
\label{subsec:numerical:evoVariationalVectors}
\delta \mathbf{z}^{\pm}_{a} \equiv \p_{t}{\mathbf{z}}^{\pm}_{a} \equiv \pb{\mathbf{z}^{\pm}_{a}}{H_{E}}
\end{align}
with $\mathbf{z}^{\pm}_{a} \equiv q^{a} \pm ip_{a}$ for $a = 1,\dots,D$ corresponding to the $D$ canonical conjugate pairs $\left(q^{a},p_{a}\right)$. The explicit form of the $D$ canonical conjugate pairs can be found using the Dirac bracket, as described in subsection~(\ref{subsec:diracbracket}). Although the exact expression for the $D$ canonical conjugate pairs, found using the Dirac bracket, will be dependent upon the choice of second class constraints used to fix the gauge, the $D$ physical degrees of freedom are themselves gauge independent. Therefore, if equation~(\ref{subsec:numerical:evoVariationalVectors}) yields $D$ real pairs of eigenvalues for any choice of second class constraints, which when imposed uniquely fix the gauge, it must yield $D$ real pairs of eigenvalues for any choice, independent of how the gauge is fixed. This means that the hyperbolicity of any Hamiltonian formulation will be independent of the gauge freedom present. 
Using this result, equations~(\ref{subsec:numerical:fccVariationalVectors})~and~(\ref{subsec:numerical:sccVariationalVectors}) provide $4(N-M)$ eigenvectors each having eigenvalue equal to zero, and equation~(\ref{subsec:numerical:evoVariationalVectors}) provides the remaining $2D$ eigenvectors each having the real pair of non-zero eigenvalues $\pm \omega_{a} \in \mathbb{R},~ \left|\omega_{a}\right| > 0$. Whence gauge fixed Hamiltonian formulations will posses a complete set of independent eigenvectors resulting in a strongly hyperbolic, and therefore well-posed, system. 

Since an independent second class constraint must be imposed for each independent first class constraint in order to completely fix the gauge, any Hamiltonian formulation in which the gauge is not fixed will not have a full set of conserved second class constraints necessary to generate a full set of  conserved variational vectors, equation~(\ref{subsec:numerical:sccVariationalVectors}). Because of this, any Hamiltonian formulation containing gauge freedom will generate a system of evolution equations which cannot posses a complete set of eigenvectors. Whence, Hamiltonian formulations containing gauge freedom can form only weakly hyperbolic systems at best. 

Evolution equations derived using the Dirac bracket will not, by design, propagate any non-physical quantities, hence the constraints will each correspond to a zero eigenvalue for the system of evolution equations. While zero-modes are normally cause for concern, the Dirac bracket ensures that constraints come in matched pairs, so there will be no degeneracy in the eigenvectors for the resulting system, the physical quantities will have physical propagation speeds, and the system will be stable. In this system, error does not propagate, so the difference between two solutions, which differ at an initial time by some perturbation of the fields, will remain exponentially bounded in time as the system evolves. Numerical error will, naturally, still enter into the system but any such error can be controlled directly using the methods we present in section~(\ref{subsec:stability:removeErr}). A concrete example is provided in section~(\ref{subsubsec:em:shGFsys}).
% 
%%%
%%% Numerical Error
%%%
\subsection{Removing Numerical Error}
\label{subsec:stability:removeErr}
The role of the extended Hamiltonian is to fix the gauge completely, yet it is inevitable that error will be introduced in any numerical simulation. When this occurs, finite numerical error will map the system on to an alternate extremal path within some neighborhood of the original solution. The difference in solutions will correspond to a different neighboring gauge choice, which violates the analytic values of second class constraints, or a different neighboring set of solutions for the first class constraints. In either case, the result will be that any error introduced into the system will alter the Lagrange multipliers of the constraints which are found in the gauge fixed extended Hamiltonian. Generically, the Lagrange multipliers in the extended Hamiltonian of the first class constraints will depend on some combination of the time derivatives of the second class constraints projected onto the first class constraint manifold, while the Lagrange multipliers of the second class constraints will involve time derivatives of the first class constraints. 

From equation~(\ref{subsec:diracbracket:LagrangeMultipliers}), the Lagrange multipliers in the extended Hamiltonian are proportional to constraints and so vanish on the constraint surface, leaving only the first class Hamiltonian, $H_{FC}$. Using $H_{FC}$ to derive error correction coefficients along the evolution path yields 
\begin{align}
\label{subsec:numerical:multipliersNumericalError}
\epsilon^{\left(1\right)}_{\Gamma}\left(x\right) = \int d^3x^\prime~\left\{
		D_{\Gamma\Phi}\left(x,x^{\prime}\right)
		\left[
			\mathcal{C}^{\Phi}\left(x^{\prime}\right),
			H_{FC}
		\right]~\right\}
\end{align} 
which is just the definition for the Lagrange multipliers in the gauge fixed extended Hamiltonian. The role of these Lagrange multipliers is to freeze the non-physical degrees of freedom, removing error from propagating. 

As mentioned in subsection~(\ref{subsec:stability:gaugeIntegrable}), once the gauge is fixed, there will exist a canonical method for projecting out all constraint violations. The canonical method is provided by the additional terms generated by equation~(\ref{subsec:numerical:multipliersNumericalError}), which have the exact form of the original Lagrange multipliers. Comparing equation~(\ref{subsec:numerical:multipliersNumericalError}) to the form of the gauge fixed Lagrange multipliers, equation~(\ref{subsec:diracbracket:LagrangeMultipliers}), reveals that the modified Lagrange multiplier terms, calculated using the first class Hamiltonian, serve to alter the phase space path in which non-physical terms do not propagate. Augmenting the gauge fixed extended Hamiltonian by the terms of equation~(\ref{subsec:numerical:multipliersNumericalError}) will generate evolution equations for the field variables in which the constraint violations are corrected up to first order in the error. When the constraint algebra does not strongly close, higher order corrections to the propagation of error involve variations in the Dirac bracket itself, with second order terms
\begin{align}
\label{subsec:numerical:multipliersNumericalErrorSecondOrder}
\epsilon^{\left(2\right)}_{\Gamma}\left(x\right) = \int d^3x^\prime~\left\{
		D_{\Gamma\Phi}\left(x,x^{\prime}\right)
		\pb{\pb{\mathcal{C}^{\Phi}}{H_{FC}}}{H_{FC}} + \frac{1}{2}D_{\Gamma\Sigma}D_{\Phi\Theta}\pb{\pb{\mathcal{C}^{\Sigma}}{\mathcal{C}^{\Theta}}}{H_{FC}}\pb{\mathcal{C}^{\Phi}}{H_{FC}}
	~\right\}
\end{align}
Using the extended gauge fixed Hamiltonian, all order corrections about the current solution can be generated, yielding the total correction term 
\begin{align}
\label{subsec:numerical:multipliersNumericalErrorAllOrder}
\epsilon_{\Gamma} \equiv \sum_{n=1}^{\infty} \epsilon_{\Gamma}^{\left(n\right)}
\end{align}
Each $\epsilon_{\Gamma}^{\left(n\right)}$ is of order $N$ in the time derivative generated by $H_{FC}$. Using these results, the modified gauge fixed extended Hamiltonian takes the form
\begin{align}
\label{subsec:numerical:numExtHam}
H_{N} & = H_{FC} - \epsilon_{\Gamma}\mathcal{C}^{\Gamma}
\end{align}
which is always weakly equal to $H_{E}$ on the constraint surface and strongly equal to $H_{E}$ when the constraint commutation relations remain independent of the evolved fields, and converge as the numerical error goes to zero. They differ only off-shell, with $H_{N}$ serving to control error propagation to higher orders.

For all gauge theories considered here, the multipliers of second class constraints will depend on linear combinations of the evolution of the first class constraints as generated by the canonical Hamiltonian. This means that all multipliers of second class constraints must vanish weakly on the first class constraint manifold. This is not necessarily true for the remaining Lagrange multipliers which are proportional to the evolution of second class constraints and may be non-zero and contribute to the evolution. Higher order terms capture changes to the constraint algebra, and first class Hamiltonian, as the system evolves. The canonical method presented here, applicable to all gauge theories, is a generalization motivated by the methods originally implemented for General Relativity by Brown and Lowe \cite{brown2}. 

The gauge fixing process fixes the propagation of non-physical fields along the flow generated by the extended Hamiltonian, but does not address numerical error already within a system. In general, given a set of values for the canonical fields at some fixed time, $t=t_0$, the generator of transformations which remove numerical error on the fixed time slice will have no terms proportional to the evolution generated by $H_{FC}$, taking the form
\begin{align}
\label{subsec:numerical:numErrTransGenerating}
E^{\left(0\right)}_{N} & = - \epsilon^{\left(0\right)}_{\Gamma}\mathcal{C}^{\Gamma}
\end{align}
at some $t_0$, with the error terms $-\epsilon^{\left(0\right)}_{\Gamma}$ proportional to the constraints themselves and calculated using the current off-shell values for all canonical variables. For reference, equation~(\ref{subsec:numerical:numErrTransGenerating}) will be referred to as the \define{error correction generating function}. 

From equation~(\ref{subsec:hamiltonian:fccICTVar}) and the definition of the Dirac bracket, equation~(\ref{subsec:diracbracket:diracBracket}), at a fixed time the generator of error correction transformations, $E^{\left(0\right)}_{N}$, must have terms of the form
\begin{align}
\label{subsec:numerical:multipliersNumericalErrorFixed}
\epsilon^{\left(1\right)}_{\Gamma}\left(t_{0},x\right) = \int d^3x^\prime~
		D_{\Gamma\Phi}\left(x,x^{\prime}\right)
		\bar{\mathcal{C}}^{\Phi}\left(t_{0},x^{\prime}\right)
\end{align}
to first order, with $\bar{\mathcal{C}}^{\Phi}$ being the numerical value of the constraint $\mathcal{C}^{\Phi}$ at time $t_0$. Since barred coordinates express the current numerical state, they are fixed terms and commute with all canonical variables. Inserting these values into equation~(\ref{subsec:numerical:numErrTransGenerating}) and examining the transformation generated for a given constraint, $\mathcal{C}^{\Gamma}$, yields
\begin{align}
\label{subsec:numerical:numericalErrorRemovalOnConstraint}
\begin{split}
\hat{\delta} \mathcal{C}^{\Gamma} = \pb{\mathcal{C}^{\Gamma}}{E_{N}} &= -\pb{\mathcal{C}^{\Gamma}}{\bar{\mathcal{C}}^{\Phi}D_{\Phi\Theta}\mathcal{C}^{\Theta}}\\
& =  -\bar{\mathcal{C}}^{\Phi}D_{\Phi\Theta}\pb{\mathcal{C}^{\Gamma}}{\mathcal{C}^{\Theta}} = -\bar{\mathcal{C}}^{\Phi}D_{\Phi\Theta}D^{\Gamma\Theta} = -\bar{\mathcal{C}}^{\Phi}\delta^{\Gamma}_{\Phi}\\
& = -\bar{\mathcal{C}}^{\Gamma}
\end{split}
\end{align}
which shows that, the error correction generating function yields a transformation on the canonical space which exactly removes the numerical error, to first order. When the Dirac bracket has a dependency on the canonical variables or fields directly, higher order terms may arise that are on the order of the square of the current numerical error, calculated by replacing $H_{FC}$ with $E_{N}$ in equation~(\ref{subsec:numerical:multipliersNumericalErrorSecondOrder}) and then updating the constraint coefficients in equation~(\ref{subsec:numerical:numErrTransGenerating}). These subsequent terms will be of order $N$, for the $N^{th}$ update, in the constraints, so may be neglected for states close to the constraint surface. Far from the constraint surface, higher order terms may become important, but numerically it will be sufficient to compute first order updates iteratively until some constraint tolerance is reached, which is beneficial when the higher order terms become increasingly complicated.
%
%%%
%%% Synopsis
%%%
\subsection{Synopsis of Stability}
\label{subsec:numerical:synopsis}
Hamiltonian formulations of gauge theories in which the gauge is not fixed cannot yield integrable systems precisely because the theory has been embedded in a canonical phase space. If it were possible to restrict to the first class constraint manifold, the system would become integrable, but this would require that all first class constraints strongly vanish. By introducing a set of second class constraints, uniquely fixing the gauge and thereby allowing the Dirac bracket to be constructed, all first class constraints are converted to second class constraints, which strongly vanish. Once the theory is restricted to a constraint manifold defined by a set of strongly vanishing constraints, the theory will become integrable. Restricting to an integrable system will always yield a formulation which is at least strongly hyperbolic, given a canonical formulation which is weakly hyperbolic, yielding a well-posed problem. Introducing second class constraints provides a canonical method to project out numerical error. Stability addresses the ability for differences in initial conditions to be bounded exponentially in time as the system evolves, so the presence of zero modes should not be disconcerting since the Dirac bracket ensures that the resulting system of evolution equations possesses a complete set of eigenvectors. This prohibits the propagation of non-physical modes, and through the methods we introduced here for the removal of numerical error, provides equations for evolution off-shell and a systematic prescription for returning an evolved system with numerical error back to the original constraint surface.

%%%%%%%%%%%%%%%%%%%%%%%%%%%%%%
%% Electrodynamics                                                                %%
%%%%%%%%%%%%%%%%%%%%%%%%%%%%%%
% includes section definition
%%%
%%%
%%%
\section{Hamiltonian Formulation of Electrodynamics}
\label{subsec:examples}
Electrodynamics provides a particularly simple yet useful setting in which to apply the methods of section~(\ref{sec:GaugeHamiltonian})  as the gauge group for the theory is abelian, consisting of a set of commuting first class constraints, yet exhibits many features found in more complicated theories, making it an ideal example. We begin this section with a review of the canonical formulation of Electrodynamics, subsection~(\ref{subsubsec:em:canonEvolution}), deriving the first class constraints and first class constraint algebra along with the canonical evolution equations. In subsection~(\ref{subsubsec:em:sccAndDB}), we define the Coulomb gauge and impose it on the canonical formulation through second class constraints, which is shown (as expected) to completely fix the gauge freedom present in the theory.  A Dirac bracket is derived for this choice of second class constraints, and we construct the gauge fixed extended Hamiltonian
and the gauge fixed equations of motion. We discuss relation between the canonical and the gauge fixed formulations. We then discuss (in subsection~(\ref{subsubsec:em:shGFsys})) the hyperbolicity of both the canonical and gauge fixed formulations. The canonical Hamiltonian generates a weakly hyperbolic system, while the gauge fixed extended Hamiltonian generates a strongly hyperbolic well-posed system. These results are in agreement with the analysis presented in section~(\ref{subsec:numerical:hyperbolicity}) for general gauge theories.
%
%%%
%%%
%%%
\subsection{Action and Canonical Evolution}
\label{subsubsec:em:canonEvolution}
The Maxwell action in vacuum is
\begin{align}
\label{subsec:DBexamples:em:action}
S = \int vol^4~\left[A_{\alpha}J^{\alpha}-\frac{1}{4}F_{\alpha\beta}F^{\alpha\beta}\right] \equiv \int dt ~ \int dx^3~\mathcal{L}
\end{align}
where the $1-$form $A_{\alpha}dx^\alpha$ is the \emph{electromagnetic potential} and the field strength $2$ form, $\mathbf{F}^{(2)} \equiv F_{\alpha\beta}dx^\alpha \wedge dx^\beta$, is defined by
\begin{align}
\label{subsec:DBexamples:em:fieldStrength}
F_{\alpha\beta} &\equiv dA^{(1)} \equiv A_{\beta;\alpha} - A_{\alpha;\beta} = A_{\beta;\alpha}~dx^\alpha \wedge dx^\beta
\end{align}
The coefficient of the volume element has been absorbed into the definition of $\mathcal{L}$ in the action, equation~(\ref{subsec:DBexamples:em:action}). The electric and magnetic fields take their values from the field strength $2$ form, $\mathbf{F}^{(2)}$, and are defined by
\begin{align}
\label{subsec:DBexamples:em:fields1}
\mathcal{E}_{i}dx^i = -F_{0i}dx^i = F_{i0}~dx^i~&~\leftrightarrow~~ E^{j} = F^{0i}\\\nonumber
\mathcal{B}_{ij}dx^i\wedge dx^j = F_{ij}~dx^i\wedge dx^j ~&~\leftrightarrow~~ B^i = \star F^{0i}
\end{align}
Here $\star$ is the \emph{Hodge operator} mapping $p$ forms to their \emph{Hodge dual}. The hodge operator depends on the metric of the base manifold. Differential forms which are restricted to the spatial manifold, such as the spatial $1-$form $\mathcal{E}^{(1)}$ and spatial $2-$form $\mathcal{B}^{(2)}$ defined above, will be denoted in script throughout this section. It is also useful at this point to introduce the \emph{codifferential} operator defined by
 \begin{align}
\label{subsec:DBexamples:em:coBoundary}
d^\ast A^{(p)} \equiv \star d \star A^{(p)}
\end{align} 
which sends $p$ forms to $(p-1)$ forms. 

It is common to separate the temporal and spatial components of the electromagnetic potential to simplify the notation when explicitly dealing with a spacetime split, as is necessary in the Hamiltonian formulation. The standard definitions found in any textbook which deals with the electromagnetic potential are  
\begin{align}
\label{subsec:DBexamples:em:redefFields1}
A_0 ~dt \equiv \phi ~dt\\\nonumber
A_{i}~dx^i \equiv \mathcal{A}^{(1)}
\end{align}
The temporal term is commonly referred to as the \emph{scalar potential}, $\phi$, while the spatial vector, $\vec{A}$, which is the contravariant version of the covariant $\mathcal{A}^{(1)}$, is commonly known as the \emph{vector potential}. Again, as noted above, any form denoted in script, e.g.  $\mathcal{E}^{(1)},\mathcal{A}^{(1)},\mathcal{B}^{(2)}$, will correspond to a form defined on the spatial manifold. The exterior derivative restricted to the spatial slice will be denoted in bold, $\mathbf{d}$, along with the spatial codifferential operator, $\mathbf{d}^\ast \equiv \ast_S \mathbf{d} \ast_S$, where $\ast_{S}$ denotes the Hodge operator on the spatial slice. Using these definitions, Maxwell's equations can be written in terms of the physically familiar electric and magnetic fields
\begin{align}
\label{subsec:DBexamples:em:fields2}
\mathcal{E}^{(1)} =  \partial_{0}\mathcal{A}^{(1)}-\mathbf{d}\phi~&~\leftrightarrow ~~ \vec{E} = \partial_0 \vec{A} - \vec{\nabla} \phi \\\nonumber
\mathcal{B}^{(2)} = \mathbf{d}\mathcal{A}^{(1)} ~&~\leftrightarrow ~~ \vec{B} = \vec{\nabla} \times \vec{A}
\end{align}
The field strength $2$ form, $\mathbf{F}^{(2)}$, and its dual, $\star \mathbf{F}^{(2)}$, which reside on the 4 manifold become
\begin{align}
\label{subsec:DBexamples:em:fieldStrength2}
\mathbf{F}^{(2)} = \mathcal{E}^{(1)}\wedge dt + \mathcal{B}^{(2)} \\\nonumber
\star \mathbf{F}^{(2)} = \mathcal{H}^{(1)} \wedge dt - \mathcal{D}^{(2)}
\end{align}
When the 4 manifold is Minkowski, the $p$ forms $\mathcal{H}^{(1)}$ and $\mathcal{D}^{(2)}$ are related to the $p$ forms $\mathcal{E}^{(1)}$ and $\mathcal{B}^{(2)}$ through
\begin{align}
\label{subsec:DBexamples:em:minkDuals}
\ast_S \mathcal{E}^{(1)} & \equiv  \mathcal{D}^{(2)}\\\nonumber
\ast_S \mathcal{B}^{(2)} & \equiv  \mathcal{H}^{(1)}
\end{align} 
Whenever the manifold is not Minkowski these relation will be considerably more complicated. In order to avoid this complication, we use the Minkowski metric throughout this section. The spatial manifold will assumed to be asymptotically flat, with sufficient locality and fall off of source fields, so that any boundary terms arising from integrating by parts will identically vanish.

In the Hamiltonian formulation, the 4 elements of the electromagnetic potential, $(\phi, A_{i})$, define the configuration space. The momenta conjugate to the vector potential are
\begin{align}
\label{subsec:DBexamples:em:momenta}
\pi^{i} \equiv \frac{\delta ~\mathcal{L}}{\delta \left(\partial_{0}A_{i}\right)} = F^{i0} \equiv E^{i}
\end{align}
Because $\mathbf{F}^{(2)}$ is anti-symmetric, equation~(\ref{subsec:DBexamples:em:fieldStrength}), the momentum conjugate to the scalar potental, $\phi = A_{0}$, must vanish, yielding the primary constraint 
\begin{align}
\label{subsec:DBexamples:em:primaryFCC}
\pi^0 \approx 0
\end{align}
The canonically conjugate pairs are $\left(A_{i},\pi^{i}\right)$ and $\left(\phi,\pi^0\right)$, and the canonical Hamiltonian in vacuum is
\begin{align}
\label{subsec:DBexamples:em:canonHamiltonian}
H_{0} & \equiv \int d^3x~\mathcal{H}_0 \\\nonumber
~& =  \int d^3x~\left\{\frac{1}{2}\pi^i\pi_i + \frac{1}{4}F^{ij}F_{ij} - \phi_{;j}\pi^{j}\right\}\\\nonumber
~& = \int d^3x~\left\{\frac{1}{2}\vec{\pi}\cdot \vec{\pi} + \frac{1}{2}\left(\vec{\nabla}\times\vec{A}\right) \cdot \left(\vec{\nabla}\times\vec{A} \right)- \vec{\nabla}\phi\cdot \vec{\pi}\right\} \\\nonumber
~& =  \int d^3x~\left\{\frac{1}{2}\vec{\pi}\cdot \vec{\pi}- \vec{\nabla}\phi \cdot \vec{\pi}\right\} + \frac{1}{2} \int \left\{\mathbf{d}\mathcal{A}^{(1)}\wedge\ast_S\mathbf{d}\mathcal{A}^{(1)}\right\}
\end{align}
In vacuum, the time derivative of the primary first class constraint, $\pi^0 \approx 0$, generates the consistency constraint
\begin{align}
\label{subsec:DBexamples:em:secondFCC}
\dot{\pi}^0 \equiv \left[\pi^0,H_{0}\right] = -\pi^{j}_{;j} \approx 0
\end{align}
From section~(\ref{sec:GaugeHamiltonian}), this secondary constraint must also weakly vanish in order for the canonical Hamiltonian to generate consistent dynamics on the first class constraint manifold. The constraint, $\pi^{i}_{;i} \approx 0$ strongly commutes with the canonical Hamiltonian, due to the fact that $\mathbf{d}^{\ast}\mathbf{d}^{\ast} = \ast_{S}\mathbf{d}\mathbf{d}\ast_{S} = 0$, hence no further consistency constraints are generated by the canonical Hamiltonian. When charge is present, the secondary constraint of equation~(\ref{subsec:DBexamples:em:secondFCC}) corresponds to the familiar \define{Gauss's law}
\begin{align}
\label{subsec:DBexamples:em:gaussLaw}
\vec{\nabla}\cdot \vec{E} \equiv \pi^{i}_{;i} = \rho
\end{align}
Since secondary constraint $\vec{\nabla}\cdot\vec{\pi} \approx 0$ commutes with both the canonical Hamiltonian and primary constrant $\pi^{0} \approx 0$, the first class constraint algebra for electrodynamics is given by the two strongly commuting first class constraints
\begin{align}
\label{subsec:DBexamples:em:constraintAlgebra}
\pi^{0} \approx 0\\\nonumber
\pi^{i}_{;i} \approx 0
\end{align}
The canonical Hamiltonian generates the evolution equations
\begin{align}
\label{subsec:DBexamples:em:evoVectPot}
\dot{A}_{i} = \left[A_{i},H_{0}\right] & =  \pi_{i} - \phi_{,i}
\end{align}
for the vector potential and
\begin{align}
\label{subsec:DBexamples:em:evoElecField}
\dot{\vec{\pi}} = \left[\vec{\pi},H_{0}\right] & =  -\vec{\nabla}\times\vec{\nabla}\times\vec{A} 
\end{align}
for the canonically conjugate momenta. Expressing the canonical momenta $\vec{\pi}$ as a spatial $1-$form
\begin{align}
\label{subsec:DBexamples:em:momentaOneForm}
\tilde{\pi}^{(1)} \equiv \pi_{i}~dx^{i}
\end{align}
the canonical equations of motion become
\begin{align}
\label{subsec:DBexamples:em:canonicalEvolutionOneForms}
\dot{\mathcal{A}^{(1)}} = \pb{\mathcal{A}^{(1)}}{H_{0}} & =  \tilde{\pi}^{(1)} - \mathbf{d}\phi\\\nonumber
\dot{\tilde{\pi}}^{(1)} = \pb{\tilde{\pi}^{(1)}}{H_{0}} & = -\mathbf{d}^{\ast}\mathbf{d}\mathcal{A}^{(1)}
\end{align}
Note that the spatial Laplacian operator, $\Delta$, is defined as
\begin{align}
\label{subsec:DBexamples:em:spatialLaplacian}
\Delta \equiv \left(\mathbf{d}\mathbf{d}^{\ast} + \mathbf{d}^{\ast}\mathbf{d}\right) 
\end{align}
and since $\mathcal{A}^{(1)}$ is a spatial $1-$form, not a scalar, $\mathbf{d}^{\ast}\mathbf{d}\mathcal{A}^{(1)} \neq \Delta\mathcal{A}^{(1)}$. There is no canonical evolution equation for the scalar potential, $\phi$. From equation~(\ref{subsec:DBexamples:em:secondFCC}), the canonical evolution equation for the primary first class constraint $\pi^{0}\approx 0$ will vanish when the evolution remains on the constraint manifold.
%
%%%
%%%
%%%
\subsection{Second Class Constraints and the Dirac Bracket}
\label{subsubsec:em:sccAndDB}
In order to construct a Dirac bracket and fix the gauge, it is necessary to enlarge the constraint algebra, equation~(\ref{subsec:DBexamples:em:constraintAlgebra}), by introducing second class constraints. From section~(\ref{sec:GaugeHamiltonian}), the introduction of second class constraints will enable a coordinate system to be constructed in phase space throughout some neighborhood of the initial solution, allow a non-degenerate symplectic form, the Dirac bracket, residing on the space of physical observables to be constructed, and ensure that the Hamiltonian formulation be well-posed. Again, as noted in subsection~(\ref{subsubsec:em:canonEvolution}), in order to simplify the following analysis only flat Minkowski spacetime will be considered in this section. 

In the Lagrangian form of Electrodynamics a popular constraint to impose is the \emph{Lorentz gauge} choice 
\begin{align}
\label{subsec:DBexamples:em:lorentzgauge}
d^\ast A^{(1)} = A^{\alpha}_{;\alpha} = 0
\end{align}
Given that the field strength {2} form, $\mathbf{F}^{(2)}$, is defined by $dA^{(1)}$, so it seems natural to impose a constraint involving the codifferential operator. Since the Hamiltonian formulation must inherently split space and time, generating time evolution for field variables defined on a spatial slice at constant time, it will be slightly simpler to impose the \emph{Coulomb gauge} choice in vacuum
\begin{align}
\label{subsec:DBexamples:em:coulombGauge}
\phi & = 0\\\nonumber
\mathbf{d}^\ast \mathcal{A}^{(1)} = \vec{\nabla}\cdot \vec{A} & = 0
\end{align}
The non-vanishing commutation relations of the Coulomb gauge choice with the first class constraints are
\begin{align}
\label{subsec:DBexamples:em:coulombGaugeCommutations}
\pb{\phi\left(\mathbf{x}\right)}{\pi^{0}\left(\mathbf{x}^{\prime}\right)} & = \delta\left(\mathbf{x},\mathbf{x}^{\prime}\right)\\\nonumber
\pb{\vec{\nabla}\cdot\vec{A}\left(\mathbf{x}\right)}{\vec{\nabla}\cdot\vec{\pi}\left(\mathbf{x}^{\prime}\right)} & = -\nabla^{2}\delta\left(\mathbf{x},\mathbf{x}^{\prime}\right) = \Delta\delta\left(\mathbf{x},\mathbf{x}^{\prime}\right)
\end{align}
with $\delta\left(\mathbf{x},\mathbf{x}^{\prime}\right)$ denoting the Dirac delta function in three dimensions. The commutation relations of equation~(\ref{subsec:DBexamples:em:coulombGaugeCommutations}) are invertible throughout phase space, since the differential operator $\nabla^{2}$ can be inverted throughout Minkowski space. This shows that imposing the Coulomb gauge, equation~(\ref{subsec:DBexamples:em:coulombGauge}), as a set of second class constraints will completely fix the gauge freedom uniquely.

With the introduction of second class constraints the constraint algebra generated by the first class constraints has been expanded from two first class constraints, equation~(\ref{subsec:DBexamples:em:constraintAlgebra}), to four second class constraints
\begin{align}
\label{subsec:DBexamples:em:SCCcoulomb}
\pi^0 & = \phi = 0 \\\nonumber
\vec{\nabla}\cdot\vec{\pi} & = \vec{\nabla}\cdot\vec{A} = 0
\end{align}
satisfying the commutation relations derived in equation~(\ref{subsec:DBexamples:em:coulombGaugeCommutations}). 
The expanded constraint algebra defines the \define{second class constraint manifold}, which is a sub-manifold of the original first class constraint manifold. 
Denoting the collection of first and second class constraints as
\begin{align}
\label{subsec:DBexamples:em:constVec}
\mathcal{C}_{\Gamma} \equiv \left\{\pi^0,\vec{\nabla}\cdot\vec{\pi},\phi,\vec{\nabla}\cdot\vec{A}\right\}
\end{align}
the constraint commutation matrix can be expressed as
\begin{equation}
\label{subsec:DBexamples:em:scccmatrix1}
\pb{\mathcal{C}_\Gamma\left(\mathbf{x}\right)}{\mathcal{C}_\Psi\left(\mathbf{x}^\prime\right)} \equiv \left(\begin{array}{cccc}
0 & 0 & -1 & 0 \\
\\
0 & 0 & 0 & \nabla^{2} \\
\\
1 & 0 & 0 & 0 \\
\\
0 & -\nabla^{2} & 0 & 0\\
\end{array}\right)~\delta\left(\mathbf{x},\mathbf{x}^{\prime}\right)
\end{equation}
Since the constraints $\phi = \pi^0 = 0$ form a canonical conjugate pair in phase space, $\left(\phi,\pi^{0}\right)$, the phase space can be reduced by setting $\phi = \pi^{0} = 0$ in the canonical action and dropping the phase space coordinates $\phi$ and $\pi^{0}$ from consideration. This reduces the Poisson bracket to the phase space defined by the canonical pairs $\left(A_{i},\pi^{i}\right)$. After reducing the phase space, the set of all constraints, equation~(\ref{subsec:DBexamples:em:constVec}), reduces to
\begin{align}
\label{subsec:DBexamples:em:constVecRed}
\mathcal{C}_{A} = \left\{\vec{\nabla}\cdot\vec{\pi},\vec{\nabla}\cdot\vec{A}\right\}
\end{align}
With this simplification, the constraint commutation matrix of equation~(\ref{subsec:DBexamples:em:scccmatrix1}) reduces to the $2\times 2$ matrix 
\begin{equation}
\label{subsec:DBexamples:em:scccmatrix2}
\left[\mathcal{C}_{A}\left(\mathbf{x}\right),\mathcal{C}_{B}\left(\mathbf{x}^\prime\right)\right] \equiv \left(\begin{array}{cc}
0 & \nabla^{2} \\
\\
-\nabla^{2} & 0\\
\end{array}\right)~\delta\left(\mathbf{x},\mathbf{x}^{\prime}\right)
\end{equation}
The inverse operator of this constraint commutation matrix, satisfying equation~(\ref{subsec:diracbracket:fieldInverse}), is 
\begin{equation}
\label{subsec:DBexamples:em:sccInverse}
D^{AB}\left(\mathbf{x}^{\prime\prime},\mathbf{x}^\prime\right) \equiv \delta\left(\mathbf{x}^{\prime\prime},\mathbf{x}^\prime\right) \left(\begin{array}{cc}
0 & \frac{-1}{\nabla^{2}} \\
\\
\frac{1}{\nabla^{2}} & 0\\
\end{array}\right)
\end{equation}
The Dirac bracket then takes the form of equation~(\ref{subsec:diracbracket:diracBracketContinuum}) evaluated on the reduced set of canonical variables
\begin{align}
\label{subsec:DBexamples:em:diracBracket}
\begin{split}
\pb{F\left(\mathbf{x}\right)}{G\left(\mathbf{x}^\prime\right)}_{D}~\equiv~& 
	\pb{F\left(\mathbf{x}\right)}{G\left(\mathbf{x}^\prime \right)}\\
	&-\int d^3x^{\prime\prime\prime}~\int d^3x^{\prime\prime}~
	\left\{
		\pb{F\left(\mathbf{x}\right)}{\mathcal{C}_A\left(\mathbf{x}^{\prime\prime}\right)}
		D^{AB}\left(\mathbf{x}^{\prime\prime},\mathbf{x}^{\prime\prime\prime}\right)
		\pb{\mathcal{C}_{B}\left(\mathbf{x}^{\prime\prime\prime}\right)}{G\left(\mathbf{x}^\prime\right)}
	\right\}
\end{split}
\end{align}
Inserting the constraints, $\mathcal{C}_{A}$, equation~(\ref{subsec:DBexamples:em:constVecRed}), and the operator inverting the constraint commutation matrix, $D^{AB}$, equation~(\ref{subsec:DBexamples:em:sccInverse}), into the Dirac bracket, equation~(\ref{subsec:DBexamples:em:diracBracket}), the commutation relations amongst the canonical phase space variables $\left(A_{i},\pi^{i}\right)$ become
\begin{align}
\label{subsec:DBexamples:em:DBcommutation}
\pb{A_{i}\left(\mathbf{x}\right)}{\pi^{j}\left(\mathbf{x}^{\prime}\right)}_{D} & = \left[\delta^{j}_{i}-\frac{1}{2}\left(\nabla_{i}\nabla^{j} + \nabla^j\nabla_i\right)\left(\frac{1}{\nabla^2}\right)\right]\delta\left(\mathbf{x},\mathbf{x}^\prime\right)\\\nonumber
\left[A_i,A_j\right]_{D} & = \left[\pi^{i},\pi^{j}\right]_{D} = 0
\end{align}
manifestly satisfying 
\begin{align}
\label{subsec:DBexamples:em:DBcommutationNullSpace}
\pb{A_{i}}{\mathcal{C}_{A}}_{D} & = 0\\
\pb{\pi^{j}}{\mathcal{C}_{B}}_{D} &= 0
\end{align}
for all constraints present. 

Using the derived Dirac bracket, equation~(\ref{subsec:DBexamples:em:diracBracket}) along with the canonical Hamiltonian, equation~(\ref{subsec:DBexamples:em:canonHamiltonian}), the evolution equations for any phase space function $F$ restricted to the second class constraint manifold will be
\begin{align}
\label{subsec:DBexamples:em:DBevolution}
\begin{split}
\pb{F\left(\mathbf{x}\right)}{H_{0}}_{D}~\equiv~& 
	\pb{F\left(\mathbf{x}\right)}{H_{0}}\\
	&-\int d^3x^{\prime\prime}~\int d^3x^{\prime}~
	\left\{
		\pb{F\left(\mathbf{x}\right)}{\mathcal{C}_A\left(\mathbf{x}^{\prime}\right)}
		D^{AB}\left(\mathbf{x}^{\prime},\mathbf{x}^{\prime\prime}\right)
		\pb{\mathcal{C}_{B}\left(\mathbf{x}^{\prime\prime}\right)}{H_{0}}
	\right\}
\end{split}
\end{align}
From equation~(\ref{subsec:diracbracket:LagrangeMultipliersField}), the Dirac bracket fixes the form of the Lagrange multipliers of each of the constraints as
\begin{align}
\label{subsec:DBexamples:em:extHamUndMul}
\lambda^{A}\left(\mathbf{x}\right) = \int d^3x^\prime~\left\{
		D^{AB}\left(\mathbf{x},\mathbf{x}^{\prime}\right)
		\pb{\mathcal{C}_{B}\left(\mathbf{x}^{\prime}\right)}{H_{0}}~\right\}
\end{align} 
Because the original first class constraints commute with the canonical Hamiltonian, $H_{0}$, half of the Lagrange multipliers will weakly vanish on the original first class constraint manifold, allowing the gauge fixed extended Hamiltonian to weakly take the same form as the canonical extended Hamiltonian
\begin{align}
\label{subsec:DBexamples:em:extHamGF}
H_{E} & = H_{0} + \int d^{3}x~\int d^{3}x^{\prime}~\left\{
	\delta\left(\mathbf{x},\mathbf{x}^{\prime}\right)
	\lambda^{A}\left(\mathbf{x}\right)
	\mathcal{C}_{A}\left(\mathbf{x}^{\prime}\right)
	\right\}
\end{align}
On the reduced phase space with constraints
\begin{align}
\label{subsec:DBexamples:em:constraintVectorReduced}
\mathcal{C}_{0} & = \vec{\nabla}\cdot\vec{\pi} = 0\\\nonumber
\mathcal{C}_{1} & = \vec{\nabla}\cdot\vec{A} = 0
\end{align}
the Lagrange multipliers defined in equation~(\ref{subsec:DBexamples:em:extHamUndMul}) take the form
\begin{align}
\label{subsec:DBexamples:em:lagMultipliersGaugeFixed}
\lambda^{0}\left(\mathbf{x}\right) & =  \int d^{3}x^{\prime}~\left\{\frac{1}{\nabla^{2}\left(\mathbf{x},\mathbf{x}^{\prime}\right)}\pb{\mathcal{C}_{1}\left(\mathbf{x}^{\prime}\right)}{H_{0}}\right\} \\\nonumber
\lambda^{1}\left(\mathbf{x}\right) & = \int d^{3}x^{\prime}~\left\{\frac{1}{\nabla^{2}\left(\mathbf{x},\mathbf{x}^{\prime}\right)}\pb{\mathcal{C}_{0}\left(\mathbf{x}^{\prime}\right)}{H_{0}}\right\} 
\end{align} 
These expressions can be simplified using the canonical evolution equations, equation~(\ref{subsec:DBexamples:em:canonicalEvolutionOneForms}), along with the properties
\begin{align}
\label{subsec:DBexamples:em:laplacianCommute}
\mathbf{d}^{\ast}\Delta = \Delta\mathbf{d}^{\ast}
\end{align}
and
\begin{align}
\label{subsec:DBexamples:em:laplacianFlatCovariantSqrd}
\Delta = -\nabla^{2}
\end{align}
which is true for $\Delta$ acting on divergenceless $p$-form fields whenever the Riemann tensor on the spatial manifold vanishes, such as it does in Minkowski space. Using the definition for the canonical Hamiltonian, $H_0$, equation~(\ref{subsec:DBexamples:em:canonHamiltonian}), as a volume integral over the spatial domain to integrate $\pb{\mathcal{C}_{0}}{H_{0}}$ by parts and applying the integral operator $\frac{1}{\nabla^{2}}$, the equations of motion generated by $H_{0}$, equation~(\ref{subsec:DBexamples:em:canonicalEvolutionOneForms}), yield
\begin{align}
\label{subsec:DBexamples:em:lagMultipliersGaugeFixedSecondClass}
\lambda^{0} & = \frac{\mathcal{C}_{0}}{\nabla^{2}}\\\nonumber
\lambda^{1} & = \frac{1}{\nabla^{2}}\mathbf{d}^{\ast}\Delta\mathcal{A}^{(1)} = -\mathbf{d}^{\ast}\mathcal{A}^{(1)} = -\mathcal{C}_{1}
\end{align} 
Inserting equations~(\ref{subsec:DBexamples:em:lagMultipliersGaugeFixedSecondClass}) into the extended Hamiltonian, equation~(\ref{subsec:DBexamples:em:extHamGF}), reveals that in the Coulomb gauge all Lagrange multiplier terms in the extended Hamiltonian will be quadratic in the constraints.

Even though imposing a complete set of second class constraints will uniquely fix the Lagrange multipliers found in a general gauge theory, allowing each to be expressed in terms of the canonical variables, as in equations~(\ref{subsec:DBexamples:em:lagMultipliersGaugeFixed})~and~(\ref{subsec:DBexamples:em:lagMultipliersGaugeFixedSecondClass}), when varying the canonical action the Lagrange multipliers themselves must not be varied directly since the actual values assigned to each will be fixed by the variation itself.
 In the Coulomb gauge, the situation simplifies considerably from the general case. Since each term in $\lambda^{A}\mathcal{C}_{A}$ will be quadratic in a single scalar constraint, multiplying each term by $\frac{1}{2}$ will yield the same result as varying only $\mathcal{C}_{A}$ alone, leaving $\lambda^{A}$ to be determined by the dynamics. With this simplification, the gauge fixed extended Hamiltonian becomes
\begin{align}
\label{subsec:DBexamples:em:gfeHamRed}
H_{E} = \frac{1}{2}\int \left\{\tilde{\pi}^{(1)}\wedge\ast_S\tilde{\pi}^{(1)} + \frac{1}{\nabla^{2}}\mathbf{d}^{\ast}\tilde{\pi}^{(1)}\wedge \ast_{S} \mathbf{d}^{\ast}\tilde{\pi}^{(1)}  + \mathbf{d}\mathcal{A}^{(1)}\wedge\ast_S\mathbf{d}\mathcal{A}^{(1)} - \mathbf{d}^{\ast}\mathcal{A}^{(1)}\wedge \ast_{S} \mathbf{d}^{\ast}\mathcal{A}^{(1)}  \right\} 
\end{align}
generating equations of motion
\begin{align}
\label{subsec:DBexamples:em:gfeHamEOM}
\dot{A}_{i} \equiv \pb{A_{i}}{H_{E}} &= \left[\delta_{ij} - \frac{1}{\nabla^{2}}\nabla_{(i}\nabla_{j)}\right]\pi^{j} = \pi_{i} - \frac{1}{\nabla^{2}}\nabla_{i}\mathcal{C}_{0}\\\nonumber
\dot{\pi}^{i} \equiv\pb{\pi^{i}}{H_{E}} &= -\left[\delta^{ij} - \nabla^{(i}\nabla^{j)}\frac{1}{\nabla^{2}}\right]\Delta A_{j} = \nabla^{2} A^{i} - \nabla^{i}\mathcal{C}_{1}
\end{align}
which are in agreement with the equations of motion generated by canonical Hamiltonian on the second class constraint manifold through the Dirac bracket
\begin{align}
\label{subsec:DBexamples:em:gfeHamEOMEquiv}
\pb{A_{i}}{H_{E}} &= \pb{A_{i}}{H_{0}}_{D} \\\nonumber
\pb{\pi^{i}}{H_{E}} &=\pb{\pi^{i}}{H_{0}}_{D} 
\end{align}
validating the form of the gauge fixed extended Hamiltonian, equation~(\ref{subsec:DBexamples:em:gfeHamRed}). On the constraint manifold, $\mathcal{C}_{A} = 0$, the evolution of the spatial vector potential, $\dot{A}_{i} \approx \pi_{i}$, can be inserted into the equation of motion for $\pi^{i}$, yielding $\ddot{A}_{i}$. From the equations of motion generated by the Dirac bracket, equation~(\ref{subsec:DBexamples:em:gfeHamEOM}), the result
is
\begin{align}
\label{subsec:DBexamples:em:waveEquationCG}
\left(\frac{\partial}{\partial t}\right)^2A_{i}dx^i - \nabla^2A_idx^i \equiv \square A_i \approx 0
\end{align} 
revealing that, in vacuum, all physical fields propagate as waves traveling at the single constant speed $\pm 1$. Since the Dirac bracket removes one degree of freedom from the three degrees of freedom corresponding to the six canonical phase space coordinates $\left(A_{i},\pi^{i}\right)$, once the Dirac bracket has been constructed, only two degrees of freedom remain. As will be seen in section~(\ref{subsubsec:em:shGFsys}), perturbations to the two independent physical degrees of freedom will propagate as waves traveling at some constant speed $\pm c \in \mathbb{R}$, bounding the domain of dependence of perturbations as an initial solution is evolved. As expected, these two remaining degrees of freedom are physical observables which correspond to the two independent helicity states of electromagnetic radiation, \cite{weinberg1}. 
%
%%%
%%%
%%%
\subsection{Hyperbolicity}
\label{subsubsec:em:shGFsys}
This subsection assumes a knowledge of pseudo-differential methods used to convert second order variable coefficient partial differential systems into first order constant coefficient pseudo-differential systems. Concise introductions to the pseudo-differential methods used here as well as proofs of strongly hyperbolic formulations yielding well-posed problems can be found elsewhere \cite{nor},\cite{brown3}.\cite{reula1}.
 
Consider the canonical Hamiltonian, $H_{0}$, on the reduced phase space in which the canonical pair $\left(\phi,\pi^{0}\right)$ have been dropped, and the resulting canonical action on the reduced phase space generated by setting $\phi=\pi^{0} = 0$ throughout the canonical action defined on the initial phase space which includes the canonical pair $\left(\phi,\pi^{0}\right)$. The reduced phase space, defined by the three canonical coordinate pairs $\left(A_{i},\pi^{i}\right)$, will be used throughout the remainder of this subsection. 

\subsubsection{Canonical Formulation}
\label{subsubsec:em:shGFsys:canonical}
Varying the extremal path generated by the canonical Hamiltonian, equation~(\ref{subsec:DBexamples:em:canonicalEvolutionOneForms}), yields
\begin{align}
\label{subsec:DBexamples:em:secondVariation}
\dot{\delta A}_{i} &= \delta \pi_{i}\\\nonumber
\dot{\delta \pi_{i}} &= \nabla^{2}\delta A_{i}
\end{align}
In order to convert to a pseudo-differential system, define 
\begin{align}
\label{subsec:DBexamples:em:pseudoDiffNorm}
\left|k\right| \equiv \sqrt{k_{i}k_{j}\delta^{ij}} = \sqrt{k_{i}k^{i}}
\end{align}
and insert the variation
\begin{align}
\label{subsec:DBexamples:em:pseudoDiffVar}
\delta A_{i} & = -i\frac{\hat{A}_{i}}{\left|k\right|}e^{i\left(\omega t + k_{i}x^{i}\right)}\\\nonumber
\delta \pi^{i} & = \hat{\pi}^{i}e^{i\left(\omega t + k_{i}x^{i}\right)}
\end{align}
into equation~(\ref{subsec:DBexamples:em:secondVariation}). Setting $\hat{A}_{i},\hat{\pi}^{j}$ equal to constants and dropping all terms lower than first order in $|k|$ converts the second order canonical formulation, given by equation~(\ref{subsec:DBexamples:em:canonicalEvolutionOneForms}), into a first order pseudo-differential system
\begin{align}
\label{subsec:DBexamples:em:pseudoDiffSecondVariation}
\omega \hat{A}_{i} &= \left|k\right| \hat{\pi}_{i}\\\nonumber
\omega \hat{\pi}_{i} &= \left|k\right| \hat{A}_{i}
\end{align}
In this pseudo-differential formulation, the first class constraint $\vec{\nabla}\cdot\vec\pi \approx 0$, which imposes a relation between phase space coordinates, becomes
\begin{align}
\label{subsec:DBexamples:em:pseudoDiffFCC}
ik_{i}\hat{\pi}^{i} \equiv i\left|k\right|n_{i}\hat{\pi}^{i} = 0
\end{align}
imposing a relation between the constants $\hat{\pi}^{i}$ and the permissible spatial directions of propagation, given by the spatial unit vector
\begin{align}
\label{subsec:DBexamples:em:pseudoDiffSpatialPropDir}
\vec{n} \equiv \frac{\vec{k}}{\left|k\right|} = \frac{k^{i}}{\left|k\right|}\p_{i}
\end{align}
satisfying $\left|n\right| = \sqrt{n_{i}n^{i}} = 1$. The form of equation~(\ref{subsec:DBexamples:em:pseudoDiffFCC}) suggests that the variational constants, $\hat{A}_{i},\hat{\pi}^{j}$, be projected onto terms which are tangent to $\vec{n}$, defining the \define{longitudinal} components
\begin{align}
\label{subsec:DBexamples:em:pseudoDiffLongitudinal}
\hat{A}_{L} \equiv \hat{A}_{i}n^{i}\\\nonumber
\hat{\pi}_{L} \equiv \hat{\pi}_{i}n^{i}
\end{align}
and terms which are orthogonal to $\vec{n}$, defining the \define{transverse} components
\begin{align}
\label{subsec:DBexamples:em:pseudoDiffTransverse}
\hat{A}^{T}_{i} \equiv \hat{A}_{i} - n_{i}\hat{A}_{L}\\\nonumber
\hat{\pi}^{T}_{i} \equiv \hat{\pi}_{i} - n_{i}\hat{\pi}_{L}
\end{align}
Defining the normalized frequency as
\begin{align}
\label{subsec:DBexamples:em:pseudoDiffNormFreq}
\kappa \equiv \frac{\omega}{\left|k\right|}
\end{align}
and inserting the first class constraint condition $\hat{\pi}_{L} = 0$, the pseduo-differential system of equation~(\ref{subsec:DBexamples:em:pseudoDiffSecondVariation}) has the longitudinal sub-block 
\begin{align}
\label{subsec:DBexamples:em:pseudoDiffSecondVariationLongitudinal}
\kappa_{L} \hat{A}_{L} &= \hat{\pi}_{L}\\\nonumber
\kappa_{L} \hat{\pi}_{L} &= 0
\end{align}
which has the single eigenvalue $\kappa_{L} = 0$ of multiplicity two, and a single eigenvector, $\hat{\pi}_{L}$. This can be easily seen by expressing the evolution equations as a pseudodifferential system with principal symbol corresponding to the evolution generated by the Hamiltonian acting on the longitudinal subblock
\begin{align}
\label{subsec:DBexamples:em:pseudoDiffSecondVariationLongitudinalMatrixForm}
\kappa \begin{bmatrix}
\hat{A}_{L} \\
\hat{\pi}_{L}
\end{bmatrix} 
& \equiv \hat{\mathbf{H}} \begin{bmatrix}
\hat{A}_{L} \\
\hat{\pi}_{L}
\end{bmatrix} 
=
\begin{bmatrix}
0 & 1 \\
0 & 0 \\
\end{bmatrix}~
\begin{bmatrix}
\hat{A}_{L} \\
\hat{\pi}_{L}
\end{bmatrix} 
= 
\begin{bmatrix}
\hat{\pi}_{L} \\
0
\end{bmatrix}
\end{align}
with $\hat{\mathbf{H}}$ denoting the principal symbol, which takes the above form for the longitudinal subblock of phase space $\left(\hat{A}_{L},\hat{\pi}^{L}\right)$. Using equation~(\ref{subsec:DBexamples:em:pseudoDiffSecondVariationLongitudinalMatrixForm}), the eigenvalues of the pseudodifferential evolution operator on the longitudinal subblock solve the equation $\kappa_{L}^{2} = 0$, hence the evolution of the longitudinal subblock yields an eigenvalue of zero with multiplicity of two. Although the eigenvalue has multiplicity of two, there is only one eigenvector for the longitudinal subblock, namely the vector
\begin{align}
\label{subsec:DBexamples:em:pseudoDiffLongitudinalCanonicalEigenvec}
\begin{bmatrix}
\hat{\pi}_{L} \\
0
\end{bmatrix}
\end{align}
Since there is a single eigenvalue of multiplicity two for the longitudinal subblock yet only a single eigenvector, the pseduodifferential evolution generated by the canonical Hamiltonian cannot possess a complete set of eigenvectors.

The canonical pair along the longitudinal satisfies $[A_L,\pi^L] = 1$ and zero for all other phase space coordinates, $\left(A^{T}_{i},\pi^{i}_{T}\right)$. The pseudodifferential form of the evolution equations generated by the Hamiltonian yields a pair with one explicit equation yielding $\kappa_{L} = 0$, namely, the evolution for $\hat{\pi}^L$ which is zero in the pseudodifferential space because $\pi^{i}_{;i} \approx 0$ commutes with the canonical Hamiltonian, and one equation that implicitly sets $\kappa_{L} = 0$, namely the equation $\kappa_{L} A_L = \hat{\pi}^{L}$ in the pseudodifferential reduction. Canonical pairs propagate at the same speed since the commutation relations of the Poisson bracket commute strongly with the Hamiltonian.The fact that these two variables form a canonical pair that commutes with all other variables means that the principal symbol of the Hamiltonian evolution must have an eigenvalue with multiplicity two. The problem is, without an explicit constraint to fix the evolution of $A_L$, there is only one eigenvector of the principal symbol in the longitudinal subblock. Stated differently, the Hamiltonian commutes with the first class constraint $\pi^{i}_{;i} \approx 0$ and thereby limits the perturbations $\delta \pi^{i}$ which are permissible under the evolution. There is no such canonical constraint on the longitudinal field $A^{L}$ which could evolve differently using an extended Hamiltonian, $H_E$, that has an arbitrary, coordinate independent, Lagrange multiplier for $\pi^{i}_{;i} \approx 0$, as discussed in section~(\ref{subsec:constraints}). This makes it impossible for the system to be strongly hyperbolic, precisely because the gauge freedom is not fixed, as shown in section~(\ref{subsec:numerical:hyperbolicity}). By creating a second class constraint which controls $\hat{A}_{L}$, the principal symbol for the longitudinal subblock decouples the canonical pair, since the Dirac bracket will project out all non-physical terms, yielding an principal symbol with two distinct eigenvectors for the longitudinal subblock, both with eigenvalue equal to zero.

Denoting the two independent transverse components with the index $a$, where $a = \left\{-1,1\right\}$, the transverse sub-block of the pseudo-differential system, equation~(\ref{subsec:DBexamples:em:pseudoDiffSecondVariation}), is
\begin{align}
\label{subsec:DBexamples:em:pseudoDiffSecondVariationTransverse}
\kappa \hat{A}^{T}_{a} &= \hat{\pi}^{T}_{a}\\\nonumber
\kappa \hat{\pi}^{T}_{a} &= \hat{A}^{T}_{a}
\end{align}
and has four linearly independent eigenvectors
\begin{align}
\label{subsec:DBexamples:em:pseudoDiffSecondVariationTransverseEigenvectors}
\hat{z}^{\pm}_{a} \equiv \hat{A}^{T}_{a} \pm \hat{\pi}^{T}_{a} 
\end{align}
with eigenvalues $\kappa^{\pm}_{a} = \pm 1$. 

Since the longitudinal sub-block does not have a complete set of eigenvectors, the canonical Hamiltonian will generate a weakly hyperbolic system.

\subsubsection{Gauge Fixed Formulation}
\label{subsubsec:em:shGFsys:gaugeFixed}
Varying the extremal path generated by the gauge fixed extended Hamiltonian, equation~(\ref{subsec:DBexamples:em:gfeHamEOM}), yielding
\begin{align}
\label{subsec:DBexamples:em:GF:secondVariation}
\dot{\delta{A}}_{i} &=  \delta \pi_{i} - \frac{1}{\nabla^{2}}\nabla_{i}\nabla^{j}\delta\pi_{j}\\\nonumber
\dot{\delta \pi}_{i} &= \nabla^{2} \delta A_{i} - \nabla^{i}\nabla^{j}\delta A_{j}
\end{align}
Following the same procedure implemented for the canonical formulation, to convert from a second order differential system to a first order pseduo-differential system, insert the variation defined in equation~(\ref{subsec:DBexamples:em:pseudoDiffVar}) into equation~(\ref{subsec:DBexamples:em:GF:secondVariation}), again setting $\hat{A}_{i},\hat{\pi}^{j}$ equal to constants and dropping all terms lower than first order in $|k|$. Projecting once again onto the transverse components yields
\begin{align}
\label{subsec:DBexamples:em:pseudoDiffSecondVariationTransverse}
\kappa \hat{A}^{T}_{a} &= \hat{\pi}^{T}_{a}\\\nonumber
\kappa \hat{\pi}^{T}_{a} &= \hat{A}^{T}_{a}
\end{align}
revealing that the Coulomb gauge has not disturbed the transverse sub-block, whence the eigenvectors and eigenvalues of the transverse sub-block will be the same as for the canonical formulation. The pseudo-differential system derived from the evolution equations generated by the gauge fixed extended Hamiltonian has the longitudinal sub-block 
\begin{align}
\label{subsec:DBexamples:em:pseudoDiffSecondVariationLongitudinal}
\kappa \hat{A}_{L} &= 0\\\nonumber
\kappa \hat{\pi}_{L} &= 0
\end{align}
which has the two linearly independent eigenvectors $\hat{A}_{L}$ and $\hat{\pi}_{L}$, each with eigenvalue $\kappa = 0$.  The longitudinal sub-block now contains a complete set of eigenvectors, whence the gauge fixed extended Hamiltonian yields a strongly hyperbolic, and therefore well-posed, system. 

Despite the presence of zero eigenvalues, the system is strongly hyperbolic due to the fact that the gauge fixed extended Hamiltonian yields a principal symbol corresponding to a pseduo-differential evolution operator, $\hat{H}$, on the Sobolev space of variations, defined in equation~(\ref{subsec:DBexamples:em:pseudoDiffSecondVariation}), having \define{index} equal to zero, meaning that the dimension of the null space of the operator is equal to the dimension of the null space of the formal adjoint of the operator in the space. 

In practical terms, in order for an operator to have an index of zero on a particular space, the operator must be in some sense \define{self-adjoint} on that space, hence operators having an index equal to zero have minima which are guaranteed to correspond to true local minima in the space as opposed to saddle points, which would require the dimension of the null space of the evolution operator $\hat{H}$ to differ from that of the adjoint operator $\hat{H}^{\dagger}$. Since the symplectic form on canonical phase space remains constant, the evolution in the canonical phase space generated by the gauge fixed extended Hamiltonian through the canonical Poisson bracket is guaranteed to be an anti-symmetric operator, hence for the analytic index of $\hat{H}$ to be zero, the null space of $\hat{H}$ must be the canonical conjugate of the null space of $\hat{H}^{\dagger}$. As a consequence, if the operator $\hat{H}$ possesses a complete set of eigenvectors, the operator can be diagonalized with values along the diagonal coming in pairs of real roots of quadratic equations, hence a complete set of eigenvectors guarantees that the null space of the operator $\hat{H}$ will be isomorphic, in some neighborhood of a given solution satisfying the Euler-Lagrange equations, to the null space of the adjoint, $\hat{H}^{\dagger}$, thereby ensuring that the index remains zero as the system evolves. 

In essence, what this means for the well-posedness of a system is that zero eigenvalues correspond to a null space which remains fixed as the system evolves so long as the operator $\hat{H}$ possesses a complete set of eigenvectors, in which case there will be a well-defined domain of dependency for variations of initial solutions which remains exponentially bounded as the system evolves, namely variations which project onto the space of physical solutions propagate with a finite speed while variations which project onto the null space remain constant as the system evolves, thereby ensuring that a finite difference between two initial solutions results in a difference which is exponentially bounded as the two system evolves. The important point is that, no matter whether the evolution operator possesses zero eigenvalues, as long as the eigenvectors of the evolution operator are complete with eigenvalues corresponding to pairs of real roots of quadratic equations, the formulation will be strongly hyperbolic and therefore well-posed.
%
%%%
%%%
%%%
\subsection{Initial Data}
\label{subsubsec:em:initialDataError}
From section~(\ref{subsec:diracbracket}), all gauge freedom in a Hamiltonian system is completely fixed by the introduction of second class constraints that form an invertible Dirac bracket amongst the full set of constraints. The physical degrees of freedom belong to the space conjugate to the manifold defined by the full set of $2N$ constraints. In this space, there are $N$ first class constraints and $N$ gauge fixing second class constraints. The $N$ first class constraint solutions form a manifold of all equivalent physical solutions. The $N$ second class constraints define a unique coordinate choice on the first class constraint manifold of equivalent physical solutions. Second class constraints contain no physical degrees of freedom and serve only to distinguish a unique physical solution out of all equivalent physical solutions. There is no physical difference between two choices for second class constraints, so long as they form an invertible set with the first class constraints.

For the constrained Electrodynamics example considered here, the primary first class constraint $\pi^0 \approx 0$ and conjugate momenta, $A_0$ have been dropped entirely from the canonical coordinate space. This leaves three pairs of canonical variables in the space, constrained by one canonical first class constraint, $\pi^{i}_{;i} \approx 0$, and one second class constraint which fixes the gradient of the spatial vector $A^{i}_{;i}$. Once all constraints are imposed, only two physical degrees of freedom remain. Initial data requires uniquely fixing the physical state.

Consider then a physical packet of Electrodynamic radiation at time $t=0$, spatially bound in a region of a flat spatial manifold $\Sigma$. No boundary conditions are necessary for this problem and there are no matter terms to interact with. Physical solutions of this form belong to the space of infinitely differentiable functions bound within some finite region of space, denoted $\mathbb{C}^\infty\left(\Sigma\right)$. Any initial data, and any evolved data on other spatial slices of constant coordinate time, will belong to this function space.

To construct initial data corresponding to a physical solution, start by supplying an initial value for all the fields that are to be evolved; the three arbitrary functions for the vector potential
\begin{align}
\label{subsec:DBexamples:em:barredA}
\bar{A}_{i}\left(t=0,x,y,z\right) \in \mathbb{C}^\infty\left(\Sigma\right)
\end{align}
and the three conjugate momenta
\begin{align}
\label{subsec:DBexamples:em:barredPi}
\bar{\pi}^{i}\left(t=0,x,y,z\right) \in \mathbb{C}^\infty\left(\Sigma\right)
\end{align}
The barred values, $\bar{A}_{i},\bar{\pi}^{j}$, were chosen from the full space of permissible functions without regard to any set of constraint equations, and are to be interpreted as just numerical values, distinct from the canonical variables $\left(A_{i},\pi^{i}\right)$ which will define a solution. Since the barred values were not restricted within the function space $\mathbb{C}^\infty\left(\Sigma\right)$, there is no expectation that the set $\bar{A}_{i},\bar{\pi}^{j}$ will  provide a solution to either $A^{i}_{;i} = 0$ or $\pi^{i}_{;i} \approx 0$, so will be off-shell. 

Recall that the second class constraint is arbitrary, so replacing the constraint $A^{i}_{;i} = 0$ with the constraint 
\begin{align}
\label{subsec:DBexamples:em:barredGaugeConstraint}
A^{i}_{;i} - \bar{A}^{i}_{;i}\left(t=0,x,y,z\right) = 0
\end{align} 
is entirely valid, and corresponds to a different coordinate choice on the manifold of physically equivalent solutions. From section~(\ref{subsec:stability:gaugeIntegrable}), the flow generated by the second class constraints is, primarily, orthogonal to the gauge constraint surface. Consequently, the second class constraint can be used to generate transformations between manifolds of equivalent physical solutions to arrive at $\pi^{i}_{;i} = 0$. For a well-posed problem, along with unambiguous initial data, the evolution generated by the extended Hamiltonian for all fields is causal and is thus completely fixed on any future or past time slice. This means any violations of the constraints will be preserved by the evolution, so the barred values $\bar{A}_{i}, \bar{\pi}^{j}$, which provide a state in phase space off-shell, can be used to construct an error correction generating function, equation~(\ref{subsec:numerical:numErrTransGenerating}), that maintains the form of the constraints preserved by the Hamiltonian.

Concretely then, to construct initial data which provide a physical solution solving all constraints, take an initial set of barred values and calculate the coefficients in equation~(\ref{subsec:numerical:multipliersNumericalErrorFixed}). Assume that the second class constraint is provided by equation~(\ref{subsec:DBexamples:em:barredGaugeConstraint}). Since $\pi^{i}$ commutes strongly with the constraint $\pi^{j}_{;j} \approx 0$, and $A^{i}$ is assumed to already satisfy a chosen second class constraint, a set of initial data for a physical state on the constraint manifold can be found updating only the momenta. The error correction generating function, equation~(\ref{subsec:numerical:numErrTransGenerating}), has constraint coefficients which are calculated using the selected barred values, equation~(\ref{subsec:numerical:multipliersNumericalErrorFixed}), yielding
\begin{align}
\label{subsec:DBexamples:em:lambdaUpdates}
\bar{\Lambda}^{A} = \mathcal{D}_{\pi A}\bar{\pi}^{j}_{;j}
\end{align}
Substituting in the results from equation~(\ref{subsec:DBexamples:em:sccInverse}), the coefficients are
\begin{align}
\label{subsec:DBexamples:em:lambdaExplicitUpdates}
\bar{\Lambda}^{A} = \left(\frac{1}{\nabla^2}\bar{\pi}^{k}_{;k}\right)
\end{align} 
where the inverse Laplacian operator, $\frac{1}{\nabla^2}$, on a flat spatial manifold with no boundary conditions, is an integral equation with kernel given by the familiar Green's function
\begin{align}
\label{subsec:DBexamples:em:greenFunction}
G = -4\pi\frac{1}{\left|x-y\right|}
\end{align}
All other terms in the error correction generating function vanish when equation~(\ref{subsec:DBexamples:em:barredGaugeConstraint}) is used for the second class constraint definition. The error correction transformation for $\pi^{i}$, calculated from the barred off-shell position of $\bar{\pi}^{i}\left(x\right)$ is 
\begin{align}
\label{subsec:DBexamples:em:updatePi}
-\delta^{0}\pi^{i} = -\delta^{ij}\nabla_{j}\int dy G\left(x,y\right) \bar{\pi}^{k}\left(y\right)_{;k} = 4\pi \int dy \frac{\nabla^{i}\nabla_k\bar{\pi}^{k}\left(y\right)}{\left|x-y\right|}
\end{align} 
Since there are only two constraints, the anti-symmetry of the Dirac bracket ensures that the error correction generating function, $E_{N}$, can be used to independently update the two constraints. Hence, if a different choice of second class constraint is desired, updating the fields to satisfy the different constraint choice will leave the other constraint, and dependent fields, unaffected. In particular, for the constraint choice $A^{i}_{;i} = 0$, equation~(\ref{subsec:DBexamples:em:coulombGauge}), the error correction transformation for $A^{i}$ is the exact same form as for $\pi^i$, equation~(\ref{subsec:DBexamples:em:updatePi}), with $\bar{\pi}^{i}$ terms replaced with $\bar{A}^{i}$ values
\begin{align}
\label{subsec:DBexamples:em:updateVector}
-\delta^{0} A_{i} = -\delta^{j}_{i}\nabla_{j}\int dy G\left(x,y\right) \bar{A}^{k}\left(y\right)_{;k} = 4\pi \int dy \frac{\nabla^{i}\nabla_k\bar{A}^{k}\left(y\right)}{\left|x-y\right|}
\end{align} 
up to a sign, dependent upon the sign of the constraints imposed. The initial data on the constrained, physical solution manifold, are
\begin{align}
\label{subsec:DBexamples:em:initialDataFinal}
A_i & = \bar{A}_i -\delta^{0} A_{i}\\\nonumber
\pi^j & = \bar{\pi}^j - \delta^{0}\pi^{j}
\end{align} 
These equations define initial data for a physical solution. The complete set of constraints forms a consistent transformation method between, and within, a space of physically equivalent solutions. The non-locality of the Green's function, equation~(\ref{subsec:DBexamples:em:greenFunction}), ensures that the solution supplied by equation~(\ref{subsec:DBexamples:em:initialDataFinal}) is global and that the same method extends to more complicated arrangements which include matter and non-trivial boundary conditions. The constraint commutation relations in the Coulomb gauge are independent of the field variables, so only first order terms are present for both $E_{N}$ and $H_{N}$, so that $H_{N} = H_{E}$ strongly. This process of error removal to construct an initial physical solution can be applied on any constant time slice as the system evolves and numerical error enters.
%
%%%
%%%
%%%
\subsection{Synopsis of Electrodynamics as a Gauge System}
\label{subsec:em:synopsis}
Expressing Electrodynamics as a gauge theory, the canonical formulation is seen to yield a system of equations which are only weakly hyperbolic. Fixing the gauge freedom through imposing the Coulomb gauge as a set of second class constraints, defined by the equations $\phi = 0$ and $\nabla_{i}A^{i} = 0$, yields a system of evolution equations which are strongly hyperbolic, in keeping with section~(\ref{subsec:numerical:hyperbolicity}). Though the Coulomb gauge was chosen for simplicity, more elaborate choices for second class constraints will yield the same result, so long as the second class constraints results in all gauge freedom being uniquely fixed. This is ensured by the construction of the Dirac bracket over the first class constraint manifold, accomplished by removing all gauge freedom, which results in a formulation evolving only physical quantities. Since perturbations in physical quantities cannot propagate faster than the speed of light, when the Dirac bracket is used to define evolution equations only, the system will evolve perturbations affecting physical quantities, whence perturbations will propagate at a finite speed as the system evolves. This means that perturbations of a solution will have a finite domain of dependence, bounded by the future light cone, and hence the magnitude of such perturbation cannot grow arbitrarily in time. In this way, removing gauge freedom from the theory ensures that the resulting gauge fixed system exhibits stability during evolution. 

%%%%%%%%%%%%%%%%%%%%%%%%%%%%%%
%% Conclusion                                                                          %%
%%%%%%%%%%%%%%%%%%%%%%%%%%%%%%

% includes section definition
%%%
%%%
%%%
\section{Conclusion and Future Work}
\label{sec:conclusion}
As shown in section~(\ref{subsec:stability:gaugeIntegrable}), gauge theories cannot be numerically stable unless all gauge freedom is sufficiently fixed in system of evolution equations. This result is new, but intuitively obvious: in gauge theories, the gauge freedom could reasonably propagate at any non-physical speed, so it cannot be expected that perturbations in an initial solution can be bounded as the system evolves without the inclusion of additional constraints. As we showed in section~(\ref{subsec:stability:removeErr}), the Dirac formalism can also be used to provide a consistent method to remove the error directly and to control the propagation of error off-shell.

It is interesting to note the close relationship between quantization and computational stability. Canonical quantization alleviates any necessity, or even ability, to define both the position and momenta of a system with infinite precession simultaneously. As a result, the quantized theory must yield dynamics effectively bounding the magnitude of effects due to the unavoidable ambiguity in the description of a physical system at some initial time on permissible observable physical states at later times, otherwise the theory would not be able to make meaningful predictions and would therefore be of little use. Dirac introduced the formalism initially to provide methods for quantization of classical systems but, as we have shown in section~(\ref{sec:stability}), these methods also provide a consistent framework to handle numerical evolution of classical gauge theories and the identification of error. Intuitively, it makes complete sense that a formalism for quantization would necessarily control small perturbations, so would be a natural framework for understanding stability of classical systems. After it was shown that the General Relativity is a non-renormalizable field theory, and so Dirac's canonical quantization procedure would not work, the Dirac bracket formalism for constrained systems fell into disfavor and was not so actively pursued. As a result, within G.R. much of the subsequent research into numerical computation relied on more geometric motivations, retreating from a strict canonical formalism, which at times has obfuscated the availability of machinery and techniques used elsewhere for Hamiltonian systems.

The formalism presented here for constrained Hamiltonian dynamics allows for a consistent treatment of any physical theory with gauge freedom, regardless of the gauge group. This is incredibly important when considering General Relativity, since the gauge group of all diffeomorphisms of the solution space does not easily lend itself to the same machinery as gauge theories with compact gauge groups. The results in this work, showing that gauge freedom must be fixed for a resulting system to be strongly hyperbolic, extends to G.R. when treated in a constrained Hamiltonian formalism. As discussed briefly in section~(\ref{subsubsec:hamiltonian_constraints}), G.R. requires more consideration because its canonical Hamiltonian locally vanishes everywhere. Our subsequent work will address these considerations by extending the formalism presented here to generally covariant theories, invariant under re-parameterizations of time, with G.R. as the focus. Specifically, initial data, error correction, and a generalized method for deriving stable evolution equations in G.R. will be recovered.

%%%%%%%%%%%%%%%%%%%%%%%%%%%%%%
%% Acknowledgements                                                                          %%
%%%%%%%%%%%%%%%%%%%%%%%%%%%%%%
%%%
%%%
%%%
\section*{Acknowledgements} 
We thank Claudio Bunster, Mark Henneaux, David Brown, and Phil Morrison for their interesting early discussions.
%%%%%%
%%%%%%
%%%%%%

%%%%%%%%%%%%%%%%%%%%%%%%%%%%%%
%% References                                                                          %%
%%%%%%%%%%%%%%%%%%%%%%%%%%%%%%

\bibliographystyle{apsrev}
\bibliography{gaugeFixing}

%%%%%%%%%
%%% END %%%
%%%%%%%%%

\end{document}